\theoremstyle{plain}
\newtheorem{theorem}{Theorem}
\newtheorem{prp}{Proposition}
\newtheorem{lem}{Lemma}
\newtheorem{defi}{Definition}
\newtheorem{clm}{Claim}
\theoremstyle{remark}
\newtheorem{rem}{Remark}
\DeclareMathOperator*{\argmax}{arg\, max}
\def\cN{\mathcal{N}}
\def\cP{\mathcal{P}}
\def\cS{\mathcal{S}}
\def\bR{\mathbf{R}}
\def\bX{\mathbf{X}}
\def\br{\mathbf{r}}
\def\bbN{\mathbb{N}}
\def\bbR{\mathbb{R}}
\renewcommand{\equiv}{\vcentcolon =}
\def\Rb{\overline R}
\renewcommand{\P}[1]{\mathbb{P}\left(#1\right)}
\newcommand{\Phn}[1]{\mathbb{P}_{H_0}\left(#1\right)}
\newcommand{\Pha}[1]{\mathbb{P}_{H_1}\left(#1\right)}
\newcommand{\PP}{\mathbb{P}}
\newcommand{\E}[1]{\mathbb{E}\left[ #1 \right]}
\newcommand{\Eha}[1]{\mathbb{E}_{H_1}\left[ #1 \right]}
\newcommand{\Var}[1]{\text{Var}\left( #1 \right)}
\newcommand{\Cov}[1]{\text{Cov}\left( #1 \right)}
\newcommand{\Varha}[1]{\text{Var}_{H_1}\left( #1 \right)}
\newcommand{\Covha}[1]{\text{Cov}_{H_1}\left( #1 \right)}
\renewcommand{\exp}[1]{\mathrm{exp}\left(#1\right)}
\newcommand{\ind}[1]{\mathbbm{1}\left\{ #1 \right\}}
\newcommand{\given}{\;\middle|\;}
\DeclarePairedDelimiter\ceil{\lceil}{\rceil}
\DeclarePairedDelimiter\abs{\lvert}{\rvert}
\newcommand{\bigO}{\ensuremath{\mathcal{O}}}
\newcommand{\bigOp}[1][\mathbb{P}]{\ensuremath{\bigO_{\scriptscriptstyle{}#1}}}
\newcommand{\smallO}{\ensuremath{o}}
\newcommand{\smallOp}[1][\mathbb{P}]{\ensuremath{\smallO_{\scriptscriptstyle{}#1}}}
\newcommand{\smallOmega}{\ensuremath{\omega}}
\renewcommand{\d}{{\rm d}}
\DeclarePairedDelimiter\floor{\lfloor}{\rfloor}
\newcommand\numberthis{\addtocounter{equation}{1}\tag{\theequation}}
\newcommand{\myparagraph}[1]{\textit{#1}.}
\begin{document}

\begin{frontmatter}
%%%%%%%%%%%%%%%%%%%%%%%%%%%%%%%%%%%%%%%%%%%%%%
%%                                          %%
%% Enter the title of your article here     %%
%%                                          %%
%%%%%%%%%%%%%%%%%%%%%%%%%%%%%%%%%%%%%%%%%%%%%%
\title{Sparse Anomaly Detection Across Referentials: A Rank-Based Higher Criticism Approach}
\runtitle{Sparse Anomaly Detection Across Referentials}

\begin{aug}
%%%%%%%%%%%%%%%%%%%%%%%%%%%%%%%%%%%%%%%%%%%%%%%
%% Only one address is permitted per author. %%
%% Only division, organization and e-mail is %%
%% included in the address.                  %%
%% Additional information can be included in %%
%% the Acknowledgments section if necessary. %%
%% ORCID can be inserted by command:         %%
%% \orcid{0000-0000-0000-0000}               %%
%%%%%%%%%%%%%%%%%%%%%%%%%%%%%%%%%%%%%%%%%%%%%%%
\author[A]{\fnms{Ivo V.}~\snm{Stoepker}\ead[label=e1]{i.v.stoepker@tue.nl}\orcid{0000-0001-9579-7259}},
\author[A]{\fnms{Rui M.}~\snm{Castro}\ead[label=e2]{rmcastro@tue.nl}\orcid{0000-0003-4398-0718}}
\and
\author[B]{\fnms{Ery}~\snm{Arias-Castro}\ead[label=e3]{eariascastro@ucsd.edu}}
%%%%%%%%%%%%%%%%%%%%%%%%%%%%%%%%%%%%%%%%%%%%%%
%% Addresses                                %%
%%%%%%%%%%%%%%%%%%%%%%%%%%%%%%%%%%%%%%%%%%%%%%
\address[A]{Department of Mathematics and Computer Science, Technische Universiteit Eindhoven, Eindhoven, The Netherlands\printead[presep={,\ }]{e1,e2}}

\address[B]{Department of Mathematics and Halıcıoğlu Data Science Institute, University of California, San Diego, La Jolla, CA, USA\printead[presep={,\ }]{e3}}
\end{aug}

\begin{abstract}
Detecting anomalies in large sets of observations is crucial in various applications, such as epidemiological studies, gene expression studies, and systems monitoring. We consider settings where the units of interest result in multiple independent observations from potentially distinct referentials. Scan statistics and related methods are commonly used in such settings, but rely on stringent modeling assumptions for proper calibration. We instead propose a rank-based variant of the higher criticism statistic that only requires independent observations originating from ordered spaces. We show under what conditions the resulting methodology is able to detect the presence of anomalies. These conditions are stated in a general, non-parametric manner, and depend solely on the probabilities of anomalous observations exceeding nominal observations. The analysis requires a refined understanding of the distribution of the ranks under the presence of anomalies, and in particular of the rank-induced dependencies. The methodology is robust against heavy-tailed distributions through the use of ranks. Within the exponential family and a family of convolutional models, we analytically quantify the asymptotic performance of our methodology and the performance of the oracle, and show the difference is small for many common models. Simulations confirm these results. We show the applicability of the methodology through an analysis of quality control data of a pharmaceutical manufacturing process.
\end{abstract}

\begin{keyword}[class=MSC]
\kwd[Primary ]{62G10}
\kwd{62G20}
\kwd{62G32}
\kwd[; secondary ]{62J15}
\end{keyword}

\begin{keyword}
\kwd{Rank-based testing}
\kwd{Sparse anomaly detection}
\kwd{Distribution-free testing}
\kwd{High-dimensional inference}
\kwd{Minimax hypothesis testing}
\end{keyword}

\end{frontmatter}
%%%%%%%%%%%%%%%%%%%%%%%%%%%%%%%%%%%%%%%%%%%%%%
%% Please use \tableofcontents for articles %%
%% with 50 pages and more                   %%
%%%%%%%%%%%%%%%%%%%%%%%%%%%%%%%%%%%%%%%%%%%%%%
\tableofcontents

%%%%%%%%%%%%%%%%%%%%%%%%%%%%%%%%%%%%%%%%%%%%%%
%%%% Main text entry area:				    %%
%%%%%%%%%%%%%%%%%%%%%%%%%%%%%%%%%%%%%%%%%%%%%%
\section{Introduction}\label{sec:intro}
We study anomaly detection among a large number of subjects, each giving rise to multiple observations. The observations within each subject are assumed independent, but they can potentially originate from different \textit{referentials} --- i.e.~from a different distribution. For example, one entry within the subject may record some continuous measurement, while another entry may record a discrete count variable. Importantly, we assume the observations within each referential take values in an ordered space (for example, real numbers or ordinal categories). This allows observation ranks to be defined. Finally, we are primarily interested in a \emph{sparse} setting where only a small fraction of the subjects are potentially affected by anomalies.

Formally, we consider monitoring $n$ subjects, where each subject is associated with $t$ independent measurements, not necessarily within the same referential. Specifically, letting $[n] \equiv \{1, \dots, n\}$ and $[t] \equiv \{1, \dots, t\}$, we observe $\mathbf{X}\equiv(X_{ij})_{i\in[n], j\in[t]}$, where $X_{ij}$ is the value from subject $i \in [n]$ at measurement $j \in [t]$.

We cast our anomaly detection problem as a hypothesis testing problem. Under the null hypothesis, all variables $X_{ij}$ are independent, with a distribution function that depends only on the referential index $j$, denoted by $F_{0j}$ and assumed unknown. Under the alternative hypothesis a majority of observations follow these common null distributions $F_{0j}$ as well. However, there exists a small subset of anomalous subjects $\cS \subset [n]$ such that, for each $i\in\cS$ and each $j\in[t]$ the distribution of $X_{ij}$, denoted by $F_{ij}$ and also assumed unknown, differs from the null distribution $F_{0j}$. The subset $\cS$ is unknown and, importantly, is not assumed to have a particular structure. Our interest in sparse alternatives corresponds to the implicit assumption that $\abs{\cS}$ is much smaller than $n$. Succinctly, the hypothesis testing problem we consider in this paper is therefore:
\begin{align*}
H_0:\qquad &\forall_{i\in[n], j\in[t]} \ X_{ij} \stackrel{\text{indep.}}{\sim} F_{0j} \ , \numberthis \label{hyp:general}\\
H_1:\qquad & \exists_{\cS \subset [n]}\ : \forall_{i\in\cS,j\in[t]} \ X_{ij}\stackrel{\text{indep.}}{\sim} F_{ij} \ , \text{ and } \forall_{i\notin\cS,j\in[t]} \ X_{ij}\stackrel{\text{indep.}}{\sim} F_{0j}\ ,
\end{align*}
where none of the distributions above are assumed known.

Our main contribution is a novel rank-based methodology, based on a variant of the higher criticism statistic \citep{Donoho2004}, requiring no distributional knowledge, but which is able to asymptotically distinguish between the two hypotheses above provided the anomalous distributions $F_{ij}$ are sufficiently different from the null distributions $F_{0j}$. These results are formalized in Section~\ref{sec:res_analytic}. Although we focus on the scenario where anomalous observations are \emph{typically} larger than nominal ones, it is not assumed that the distribution $F_{ij}$ stochastically dominates the null distribution $F_{0j}$. 

Although not directly apparent, the setting we consider can be seen as a high-dimensional extension of the scenario typically envisioned when applying the classical Friedman test \citep{Friedman1937}. To see this, note that the null hypothesis in~\eqref{hyp:general} is composite and we do not make any assumptions on the null distributions $F_{0j}$, such that we can particularize~\eqref{hyp:general} by restricting
\begin{equation}\label{eq:friedman}
X_{ij} = \mu_i + A_j + Z_{ij} \ ,
\end{equation}
with the $\mu_i$ representing unknown constants, and $A_j\sim G_j$ independently from unknown distribution $G_j$ (they need not be identically distributed) and the $Z_{ij}$ being i.i.d.~with unknown distribution $F_0$ respectively. Then testing:
\begin{equation}\label{hyp:anova}
H_0: \text{all $\mu_i$ are equal,} \quad \text{ vs. } \quad H_1 : \text{not all $\mu_i$ are equal,}
\end{equation}
is the setting envisioned for the Friedman test.\footnote{The Friedman test is classically applied to settings where treatments are applied to subjects under study. In the terminology we have introduced, our subjects correspond to these classical treatments, and our referentials correspond to these classical subjects.} Further restricting $G_j$ and $F_0$ to zero-mean normal distributions particularizes the setting to a high-dimensional extension of ANOVA, resembling the classical normal location model considered in \cite{Donoho2004} where the higher criticism statistic was introduced.

Since we are not assuming any particular set of null distributions, our formulation in~\eqref{hyp:general} can be applied in a variety of settings without the need to accurately model the null distributions~$F_{0j}$. This is especially advantageous in settings where conclusions may change depending on the models for $F_{0j}$ assumed. Furthermore, as observations within each subject may originate from different referentials, various data sources on the subjects under study may be combined for the purposes of anomaly detection if each source provides independent ordinal data.

Applications can then be found in a variety of fields. We briefly discuss some possibilities below.
\begin{itemize}
\item In industrial settings, one may observe a large number of production endpoints (which could correspond to machines, or individual components thereof) for which various failure metrics are observed. Under nominal conditions, these can be modeled as independent samples from metric-specific unknown null distributions, whereas under anomalous conditions these metrics are raised and should be detected. We present such an example in Section~\ref{sec:application} based on pharmaceutical data from \cite{Zagar2022}. 
\item In cosmological settings, where the aim is to detect signals against the cosmic microwave background noise, a parametric application of the higher criticism statistic is considered in \cite{Jin2005}. Our non-parametric approach may be useful in these contexts, as it can be employed without the need to explicitly model the noise and signal distributions. In Section~\ref{sec:exp-family} we discuss a convolution model as a particularization of~\eqref{hyp:general} which is commonly hypothesized in this context, and show the (asymptotic) power difference with respect to a parametric approach. Depending on the signal distribution, our rank-based approach may even be more powerful than an approach based on subject means.
\item The above formulation may also be appropriate in some gene expression studies. For example, consider studies like \cite{Gordon2002, Wang2005}, where a small subset of genes is found to have predictive power with respect to presence of a disease. It may be relevant to test if there exists a (small) subset of patients for which the gene expressions within this small subset of genes are different compared to the complete (diseased) patient body; this heterogeneity among the diseased group may be important in subsequent studies. The formulation in~\eqref{hyp:general} then allows for the expression levels to have a gene-specific null distribution. 
\item In a similar vein, in copy number variation analysis our methodology may be used to detect rare segment variants without the need to assume a parametric model on the observations, and allowing for either a probe-specific or subject-specific null distribution. For such settings, parametric methods on a particularization of~\eqref{hyp:general} using the higher criticism statistic have been discussed in \cite{Jeng2013} for example.
\end{itemize}

Our methodology requires the observations to originate from an ordered space (within their referential) such that observation ranks can be defined. This precludes directly incorporating vector-valued and categorical variables. Nevertheless, depending on the specific context, one may appropriately transform such non-ordered observations to an ordered space prior to application of our methodology. For vector-valued observations this could amount to either using vector coordinates as separate measurements, or computing an adequate (ordered) summary statistic for each vector-valued observation --- for example a suitable vector norm. For categorical variables various approaches exist which transform categorical variables to meaningful numerical distance measures or outlying scores. See \cite{Taha2020} and Chapter 8 of \cite{Aggarwal2017} and references therein.

\myparagraph{Contributions} In this work we propose an exact test for the null hypothesis~\eqref{hyp:general}. Our procedure replaces the observations by their ranks within each referential $F_{0j}$, and relies on a variant of the higher criticism statistic for inference. In this way, it extends the classic Friedman test \citep{Friedman1937} to a high-dimensional, sparse setting. Note that the proposed procedure itself does not rely on asymptotic considerations. As far as we are aware, there are no other works that propose a distribution-agnostic methodology suitable for the high-dimensional setting~\eqref{hyp:general} with finite-sample calibration guarantees. Due to the use of ranks, calibration of the statistic can be done through Monte-Carlo simulation, and since the calibration distribution depends only on the sample size, critical values can be tabulated in advance. 

We draw an unexpected parallel with parametric work, where we show that given sufficiently many referentials, the problem~\eqref{hyp:general} viewed through the lens of ranking boils down to a heteroskedastic normal location model. The analysis requires a refined understanding of the distribution of the ranks under the presence of anomalies. This calls for a very sharp characterization going significantly further than those in related literature. Furthermore, the analysis requires a careful treatment of rank-induced dependencies.

Our analysis is not restricted to identical and weak anomalies. This restriction is natural in parametric settings where results can be naturally extrapolated to stronger signals. However, in our rank-based setting, such extrapolation is no longer straightforward. Furthermore, our general statements elucidate how anomalies of varying magnitudes and heteroskedasticity should be weighed in order to determine asymptotic detection difficulty. This insight allows us to show asymptotic power guarantees (in the form of a detection boundary) in an elegant nonparamatric way: asymptotically, this only depends on the probability of observing an anomalous observation larger than a null observation, and the probability of observing an anomalous observation simultaneously larger than two independent null observations. These quantities are easily particularized to specific parametric settings to obtain the corresponding detection boundaries. We compare our methodology with an oracle test in a one-parameter exponential family and a convolutional setting, and characterize the performance loss due to the use of ranks. Finally, through simulation we confirm this theoretical loss of power closely matches the loss in finite samples.

\myparagraph{Related work} Our work exists on the boundary of two branches of literature; on the one hand, our approach connects well with the high-dimensional signal detection literature, particularly with works that use the higher criticism statistic in nonparametric settings. On the other hand, the rank transformation considered and subsequent scope of application closely connects with the classical Friedman test \citep{Friedman1937} when viewed as a high-dimensional sparse extension of this two-way nonparametric ANOVA methodology. 

As far as we are aware, the unsuitability of the Friedman test in high-dimensional sparse settings has not been discussed in previous literature, and approaches originating from this angle have not been pursued. Nevertheless, the usage of ranks as a means to construct distribution-agnostic tests has a long history; for a discussion see \cite{Lehmann1975, MR758442} for classical references.

In the statistics literature, previous authors \citep{Ingster2010, Arias-Castro2011a, mukherjee2015hypothesis} have considered high-dimensional testing problems motivated from ANOVA and regression models, but with an important distinction; the authors assume normally distributed residuals and consider a large number of covariates (much larger than the number of subjects under study) where they assume the effects of the covariates are sparse (i.e.~nonzero for a sparse subset of the covariates), without a main effect that is potentially nonzero for a sparse subset of the subjects. When we particularize our general setting~\eqref{hyp:general} to a repeated measurements ANOVA~\eqref{eq:friedman}, we are not assuming any sparsity over the block effects $A_j$, but a potentially sparse subsets of subjects may have a nonzero main effect $\mu_i$. And, importantly, we do not require the residuals to be normal. 
In \cite{Wu2014} the authors investigate an extension of the aforementioned sparse linear regression models with dependencies that naturally arise in genomics settings, assuming normality of residuals. However, in the numerics section calibration by permutation is mentioned, which hints at the applicability of such an approach in a distribution-agnostic manner with finite-sample calibration guarantees.
Similar comments apply to related work in \cite{liu2018powerful}.
In \cite{Sabatti2009} this approach is considered in an applied setting, and \cite{Donoho2015} also discusses applications of the higher criticism statistic in genomics. Here, permutations are mentioned as a means to calibrate tests in finite samples. In \cite{Liu2019b} an approximation to such permutation-based $p$-values is proposed which has similar accuracy in large samples, but is computationally advantageous. 

In \cite{Butucea2013} even sparser settings related to such regression problems are considered. The authors consider a setting where one observes data akin to our setting (a matrix $\bX$ consisting of $n$ subjects with $t$ measurements) where only a sparse subset of the measurements is affected for a sparse subset of the subjects, assuming residuals originate from some known exponential family. This setting then boils down to detecting a submatrix for which observations have nonzero mean. While we do not principally require all referentials to exhibit anomalous behavior and our methodology may still have some power in this case (we formalize this in Section~\ref{sec:analytic_calpow}), when faced with such sparsity one naturally requires the set of nonzero effects to exhibit strong anomalous behavior in order for detection to be successful. In \cite{Ma2015} a similar problem is discussed with an eye for computational costs. A distribution-agnostic approach for this problem based on scan statistics is proposed in \cite{Arias-Castro2017a}. 

We continue with a discussion relating to our work in the context of the high-dimensional distribution-agnostic statistical signal detection literature. Many works exist in this domain for parametric settings, with a particular interest for higher criticism-based methodologies initiated by \cite{Donoho2004}. However, there are only a few works that propose a distribution-agnostic approach; we will discuss these first.

In \cite{stoepker2021} a distribution-agnostic approach is proposed for a particularization of the setting in~\eqref{hyp:general} in which the nominal distributions are identical for each referential, i.e.~$F_{0j} = F_0$. The higher criticism statistic on the raw data is used for inference, and permutations are used both in the definition of the statistic as well as for calibration, leading to an exact test. The authors prove the permutation test has risk tending to zero under the same signal strengths as its parametric counterpart. Recent works have developed general frameworks for studying the power of permutation tests, proving minimax rate optimality in various settings --- see \cite{Kim2022} and \cite{Dobriban2022}. In particular, examples in the latter include the linear regression setting of \cite{Ingster2010} where the noise distribution is assumed symmetrical, which is then used to propose a test based on sign-flips. In \cite{Delaigle2011} the same particularization of~\eqref{hyp:general} is considered and another distribution-agnostic approach is proposed. The authors use studentized subject means for inference: their distribution is estimated through bootstrapping, subsequently allowing a higher criticism statistic to be computed. Asymptotic calibration guarantees are given.

Another setting close to~\eqref{hyp:general} is considered in \cite{arias-castro2018a}, where the authors assume $t=1$ and the anomalies are assumed to exhibit some structure. A distribution-agnostic approach is constructed through a scan statistic, either on raw data and their ranks, and calibration is done by permutation.

In \cite{Arias-Castro2017b} a similar setting to~\eqref{hyp:general} is considered with $t=1$. The null distribution is assumed to be symmetric, but apart from that the methodology is distribution-agnostic. A sign-based test tailored to sparse alternatives is proposed.
In \cite{Arias-Castro2020} a bivariate setting is considered where the goal is to detect a sparse contamination, which presents itself as positive dependencies between the bivariate observations. The authors adapt Spearman's rank correlation to make it powerful against sparse alternatives.

In \cite{Zou2017} a distribution-agnostic methodology is proposed for a the particularization of~\eqref{hyp:general} where the distributions are identical for each for each referential, i.e.~$F_{0j} = F_0$. However, instead of detection, the authors consider identifying the set of anomalous subjects. In \cite{delaigle2009} a setting quite different from ours is considered, where observations with unspecified distribution are tested to have mean equal to the mean of auxiliary observations. The higher criticism is used, and calibration guarantees are given asymptotically. In \cite{Kurt2020}, yet another different setting is considered in change-point detection with unknown distributions, where the subjects are observed in real-time and a period is guaranteed to be free of anomalies. The statistics used are shown to be bounded under the null, and power properties are shown empirically.

The hypothesis testing problem~\eqref{hyp:general} has been discussed for various parametric settings for $t=1$, initially for normal settings by \cite{MR1456646} (see also Chapter 8 in \cite{Ingster2003}) and \cite{Donoho2004}. For our setting, the parametric heteroskedastic normal location model considered by \cite{TonyCai2011} --- later extended in \cite{Cai2014} --- is particularly relevant, as our detection boundary is shown to have the same form as in that work. In this model the variance of the anomalous observations constitutes an unknown nuisance parameter. 

In \cite{Laurent2012} a more extensive model is considered where anomalies may have varying magnitudes and observations may have subject-specific variance, but this variance structure is assumed known. 
The authors study the minimax separation rate of the problem, which they define as the smallest $L^2$ distance of the anomalous magnitudes to zero such that a test exists with prescribed power (under sparsity constraints). The authors provide upper and lower bounds of this separation rate, depending on the structure of the problem.
The same setting is considered in \cite{Chhor2024}, but the analysis extended for general $L^p$ norms, and the provided upper and lower bounds of the separation rate are always matching.

Heteroskedasticity has also been considered in a different parametric setting in \cite{Arias-Castro2018b}, but here the anomalies are assumed to have the same mean as the null observations, instead allowing the variance to change with the sample size. As we do not make the latter restriction, the detection boundary related to that in \cite{TonyCai2011} arises. We assume no dependencies between subjects, unlike \cite{Hall2008, HJ09} that consider the effect of dependencies on the higher criticism in a parametric setting.

\myparagraph{Organization} Section~\ref{sec:intro} introduces the problem and its relationship with common settings encountered in practice. 
Section~\ref{sec:meth} proposes a novel rank-based higher criticism test. 
Section~\ref{sec:res_analytic} presents some theory developed for our methodology. 
Section~\ref{sec:sim-settings} describes some numerical experiments carried out to assess the finite-sample performance of the methodology, elucidating the power characteristics of the proposed test and confirming our theoretical results. 
We showcase the applicability of the proposed methodology on a real dataset in Section~\ref{sec:application}. 
A discussion in Section~\ref{sec:discussion} closes the main body of the paper.
Proofs are deferred to Sections~\ref{sec:proof-th1} and \ref{sec:particularizations}. 
Some auxiliary technical results which shed further light on remarks made in the main text, and further derivations to make the manuscript self-contained, are deferred to Appendix~\ref{app:math}. 
Supplemental simulation results are included in Appendix~\ref{app:supp-sim}. Supplemental results to the application setting considered are included in Appendix~\ref{app:supp-app}.

\myparagraph{Notation} Throughout the paper we use standard asymptotic notation. Let $n\to\infty$, then $a_n = \bigO(b_n)$ when $| a_n / b_n |$ is bounded, $a_n = \smallO(b_n)$ when $a_n / b_n \to 0$, and $a_n=\smallOmega( b_n)$ when $b_n = \smallO(a_n)$. We also use probabilistic versions: $a_n = \bigOp(b_n)$ when $| a_n / b_n |$ is stochastically bounded\footnote{That is, for any $\varepsilon>0$ there is a $C_\varepsilon$ and $n_\varepsilon$ such that $\forall n>n_\varepsilon\ \P{| a_n / b_n |>C_\varepsilon}< \varepsilon$.} and $a_n = \smallOp(b_n)$ when $a_n / b_n$ converges to $0$ in probability. Unless otherwise stated, we consider asymptotic behavior with respect to $n \to \infty$. We use the subscripts $H_0$ and $H_1$ when considering an expectation and a probability to explicitly indicate whether we are considering the null or alternative hypothesis.

\section{Methodology}\label{sec:meth}

\subsection{Ranking observations}\label{sec:ranking}
Our methodology replaces the observations by their ranks within their respective referential, in the same fashion as the ranking procedure used in the classical Friedman test \citep{Friedman1937}. For the purpose of our analysis ties are broken at random, which has the advantage that one can regard the subsequent testing procedure as a test calibrated by permutations sampled uniformly over the set of all permutations.

To allow us to state our results in a general way (and not focus on the particular case where the involved distributions are continuous) we must formalize the tie breaking procedure, which we do here by adding a small independent perturbation to the original observations, prior to ranking:
\begin{equation}\label{eq:def_tiebreak}
X_{ij}^{(\delta)} \equiv X_{ij} + \delta W_{ij} \ , 
\end{equation}
where $W_{ij} \overset{\text{i.i.d.}}{\sim} \text{Uniform}([-1,1])$. With probability one there are no ties in the transformed observations, and when $\delta$ is sufficiently small the original ordering is preserved among unique observations (in particular taking $0 < \delta < \tfrac{1}{2}\min_{i,k,j}\{\abs{X_{ij} - X_{kj}} : \abs{X_{ij} - X_{kj}} > 0\}$ suffices). We can now compute the rank of observation $X_{ij}$ by
\begin{equation}\label{eq:def_ranks}
R_{ij} \equiv \lim_{\delta\to 0} R_{ij}(X_{1j}^{(\delta)},\dots,X_{nj}^{(\delta)}) = \lim_{\delta\to 0} \sum_{k\in[n]}\ind{X_{ij}^{(\delta)} \geq X_{kj}^{(\delta)}} \ .
\end{equation}
We denote the resulting set of ranks by $\bR \equiv \bR(\bX) = (R_{ij})_{i\in[n], j\in[t]}$.

\subsection{Rank-based higher criticism test}\label{sec:hc_rank}

We adapt a version of the higher criticism statistic which was initially proposed for analytic purposes in \cite{Donoho2004}. This version is also convenient from a practical standpoint \citep{stoepker2021, Arias-Castro2011a, Wu2014, Arias-Castro2020}.

Define the rank mean of subject $i$ as:
\[
Y_i(\bR) \equiv \frac{1}{t}\sum_{j\in[t]} R_{ij} \ .
\]
The version of the higher criticism that we consider makes use of the following quantity:
\begin{equation}\label{eq:def_Nq}
N_q(\bR) \equiv \sum_{i\in[n]} \ind{ \frac{Y_i(\bR) - \Rb}{\sigma_R} \geq \sqrt{\frac{2q\log(n)}{t}}}\ ,
\end{equation}
which denotes the number of subject centered rank means exceeding a threshold based on parameter $q > 0$, with 
\begin{equation} \label{eq:def_Rbar_sigmaR}
\Rb\equiv\frac{1}{nt}\sum_{i\in[n]}\sum_{j\in[t]} R_{ij} = \frac{n+1}{2}, \text{ and } \sigma_R^2\equiv\frac{1}{nt} \sum_{i\in[n]}\sum_{j\in[t]} (R_{ij}-\overline R)^2 = \frac{n^2-1}{12} \ .
\end{equation}
Define also
\begin{equation}\label{eq:def_pq}
p_{q} \equiv \Phn{\frac{Y_1(\bR) - \Rb}{\sigma_R} \geq \sqrt{\frac{2q \log(n)}{t}} }  \ ,
\end{equation}
which corresponds to the probability of a subject rank mean to exceed the same threshold as in~\eqref{eq:def_Nq} under the null hypothesis in~\eqref{hyp:general}. Note that this quantity is independent of the data apart from the sample size $(n,t)$ and can be computed by Monte-Carlo simulation. We are now ready to define our higher criticism statistic and test:
\begin{defi}[Rank-based higher criticism test]\label{def:rank_hc_stat}
Define
\begin{equation}\label{eq:def_Vq}
V_q(\bR)\equiv \frac{N_q(\bR) - np_q}{\sqrt{np_q(1-p_q)}}\ ,
\end{equation}
where we take the convention that $0/0=0$, with $N_q(\bR)$ and $p_q$ defined as in~\eqref{eq:def_Nq} and~\eqref{eq:def_pq} respectively. Let
\begin{equation}\label{eq:def_original_grid}
Q_n \equiv \left\{\frac{2}{k_n}, \frac{4}{k_n}, \dots, 2 \right\}\ ,
\end{equation}
for some $k_n \in \bbN$. Then our rank-based higher criticism statistic is defined as
\begin{equation}\label{eq:def_rank_HC}
T(\bR) \equiv \max_{q\in Q_n} V_q(\bR)\ .
\end{equation}
Referring to the hypothesis testing problem~\eqref{hyp:general}, denote the set of observations by $\mathbf{x}$ and the set of ranks of our observations by $\mathbf{r} \equiv \mathbf{R}(\mathbf{x})$.\footnote{As is common, we use lowercase symbols to emphasize these are nonrandom.} Then, our rank-based higher criticism $p$-value is defined as:
\begin{equation}\label{eq:def_rank_HC_p_value}
\cP_{\text{rank-hc}}(\mathbf{r})  \equiv \Phn{ T(\bR) \geq T(\mathbf{r})} \ .
\end{equation}
Our rank-based higher criticism test with significance level $\alpha$ is consequently defined as:
\[
\psi_{\text{rank-hc}}(\mathbf{X}) = \ind{ \cP_{\text{rank-hc}}(\mathbf{R}) \leq \alpha}  \ ,
\]
where $\psi(\mathbf{X}) = 1$ implies rejection of the null hypothesis.
\end{defi}

The maximization over the thresholds is what lends the statistic its adaptivity over a diverse range of anomalous regimes; for moderately sparse and weak signals, lower threshold values are most discriminative between the null and alternative hypothesis, whereas for stronger but sparser signals it is instead more powerful to consider larger thresholds.

Note that the null distribution of the observation ranks in~\eqref{eq:def_pq} and~\eqref{eq:def_rank_HC_p_value} is only a function of the sample size $(n,t)$. Therefore, estimation of the probability $p_q$ and the distribution $T(\bR) \mid H_0$ may be done in advance and tabulated, such that future test executions for same-sized problems can be greatly expedited.

The supplementary code in \cite{CodeSupplement} includes an implementation of the test described in Definition~\ref{def:rank_hc_stat} in R \citep{RCoreTeam2024}.

\begin{rem}\label{rem:discrete_grid}
Given the discussion above, there is seemingly no need for the discrete grid $Q_n$, and one could instead maximize $V_q(\bR)$ over a continuous interval. However, discretization of the grid allows us to prove our analytical results below without the use of advanced tools from empirical process theory. The same strategy has been employed in several papers in the same line of work \citep{stoepker2021, Arias-Castro2011a, Wu2014, Cai2007}. The extension of the grid $Q_n$ in~\eqref{eq:def_original_grid} up until the value 2 may be unexpected given prior literature, but this extension is included as a convenient way to asymptotically deal with highly anomalous observations, ultimately allowing the statement of our main results in a general way.
\end{rem}

\section{Analytic results}\label{sec:res_analytic}

\subsection{Normal location model}
We pause to discuss results of the parametric higher criticism in a particularization of~\eqref{hyp:general}. In the next section, when we show analytic results for the test in Definition~\ref{def:rank_hc_stat} when applied to the general hypothesis test~\eqref{hyp:general}, we will then be able to connect our results back to this restricted setting.

Our methodology uses a variant of the higher criticism statistic introduced by \cite{Donoho2004}. The original motivation for the statistic is the detection, within a sample of univariate independent observations hypothesized to come from the same null distribution, of a small subset of observations with a different distribution. Consider the following particularization of the general hypothesis test~\eqref{hyp:general} for $t = 1$ which is commonly referred to as the (heteroskedastic) normal location model:
\begin{align*}
H_0:\qquad &\forall_{i\in[n]} \quad X_{i} \stackrel{\text{i.i.d.}}{\sim} \cN(0,1) \ , \\
H_1:\qquad &\exists_{\cS \subset [n]} : \forall_{i\in\cS} \quad X_{i}\stackrel{\text{i.i.d.}}{\sim}\cN(a,\sigma^2), \text{ and } \forall_{i\notin\cS} \quad X_{i} \stackrel{\text{i.i.d.}}{\sim} \cN(0,1)\ . \numberthis\label{hyp:norm}
\end{align*}
It has been shown that the higher criticism statistic can be used to construct an ``optimal'' test for the above hypothesis test without requiring knowledge of the parameters $\abs{\cS}$, $a$ and $\sigma$; first by \cite{Donoho2004} for fixed and known $\sigma = 1$ and generalized by \cite{TonyCai2011} for arbitrary unknown $\sigma > 0$. This optimality statement has an asymptotic nature, and is done through a so-called detection boundary which comes into play as follows; if $a$ is parameterized using a constant $r$ as
\begin{equation}\label{eq:mu_donoho}
a \equiv \sqrt{2r\log(n)} \ ,
\end{equation}
and $\abs{\cS}$ is parameterized using a constant $\beta\in(1/2,1)$ as
\begin{equation}\label{eq:n-anomalies}
\abs{\cS} \equiv \lceil n^{1-\beta}\rceil \ ,
\end{equation}
then the detection boundary, which we denote by $\rho(\beta,\sigma)$, captures for a given sparsity level~$\beta$ and heteroskedasticity $\sigma$ the minimum size of $r$ required for the hypotheses in~\eqref{hyp:norm} to be asymptotically distinguishable. For the hypotheses in~\eqref{hyp:norm}, the detection boundary $\rho(\beta,\sigma)$ is given by:
\begin{equation}\label{eq:def_detection_boundary}
\rho(\beta,\sigma) \equiv \begin{cases}
\begin{cases} 
(2-\sigma^2)(\beta - \frac{1}{2}) \ , & 1/2 < \beta \leq 1-\sigma^2/4\ , \\ 
(1 - \sigma\sqrt{1-\beta})^2 \ ,		&  1-\sigma^2/4 < \beta < 1 \ ,
\end{cases} & 0\leq\sigma<\sqrt{2} \ ,\vspace{0.5cm}\\
\begin{cases}
0 \ , 								& 1/2 < \beta \leq 1-1/\sigma^2 \ ,\\
(1 - \sigma\sqrt{1-\beta})^2 \ ,		& 1-1/\sigma^2 < \beta < 1 \ ,
\end{cases} & \sigma\geq\sqrt{2} \ .
\end{cases}
\end{equation}
A visualization of the detection boundary $\rho(\beta,\sigma)$ is given in Figure~\ref{fig:detection_boundary}. 

\begin{figure}[htb]
\centering
\begin{subfigure}{0.7\textwidth}
\hspace{0.5cm}\includegraphics[width=\linewidth]{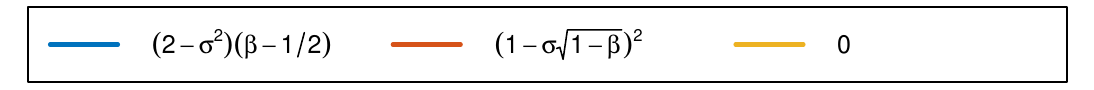}\vspace{0.3cm}
\end{subfigure}

\begin{subfigure}{0.32\textwidth}
\includegraphics[width=\linewidth]{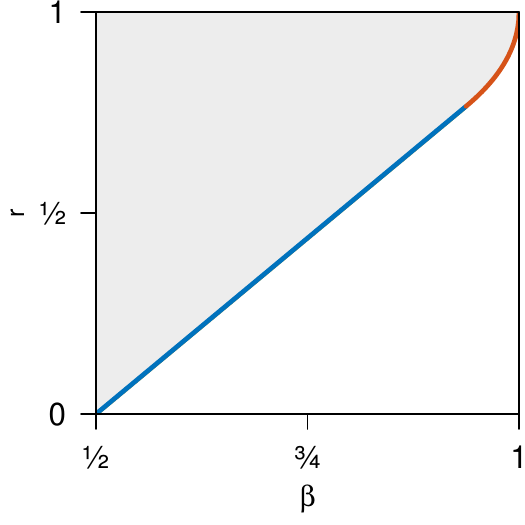}
\caption{Boundary for $\sigma = 0.5$.}
\end{subfigure}
\begin{subfigure}{0.32\textwidth}
\includegraphics[width=\linewidth]{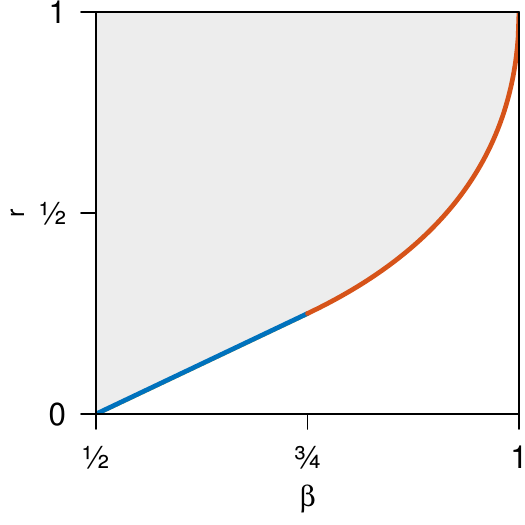}
\caption{Boundary for $\sigma = 1$.}
\end{subfigure}
\begin{subfigure}{0.32\textwidth}
\includegraphics[width=\linewidth]{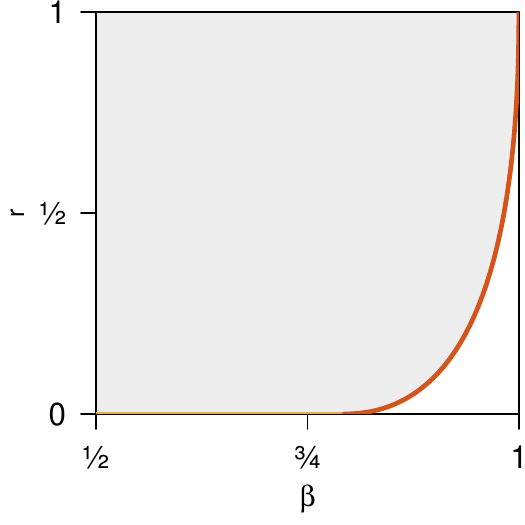}
\caption{Boundary for $\sigma = 2.2$.}
\end{subfigure}
\caption{The detection boundary $\rho(\beta,\sigma)$ from \eqref{eq:def_detection_boundary} for different values of $\sigma$.}\label{fig:detection_boundary}
\end{figure}

In \cite{MR1456646} it is shown that for the setting~\eqref{hyp:norm} with known $\sigma=1$, any test is powerless, meaning the sum of type I and type II errors converges to 1 when $r < \rho(\beta,\sigma)$. Conversely, in \cite{Donoho2004} it is shown that for known $\sigma=1$ the test based on the higher criticism statistic has its sum of type I and type II errors converge to zero when $r > \rho(\beta,\sigma)$; thus, it is ``optimal'' in the sense that any asymptotically detectable signal is detectable by the higher criticism statistic as well --- apart from signals existing on the boundary $r=\rho(\beta,\sigma)$, for which behavior is hard to describe and are typically excluded in this context. In \cite{TonyCai2011} the preceding statements are generalized for unknown $\sigma > 0$.

Note that the function $\rho(\beta,\sigma)$ is decreasing in $\sigma$ when $\beta$ is held constant; in other words, larger heteroskedasticity decreases the required magnitude for detection. In the next section it will be made clear how our setting relates to this heteroskedastic normal location model.

\subsection{Calibration and power guarantees}\label{sec:analytic_calpow}
We proceed with analytic results for the test in Definition~\ref{def:rank_hc_stat} in the general hypothesis testing scenario~\eqref{hyp:general}. Calibration results are stated regardless of the sample size. On the other hand, our power results are of an asymptotic nature; we consider the setting where the number of subjects~$n$ grows, while the number of anomalous subjects~$\abs{\cS}$ grows at a slower pace. Note that in general the number of observations for each subject $t$ and the anomalous distributions $F_{ij}$ also change with increasing $n$. However, to avoid cluttering the presentation, this dependency is not explicitly stated apart from the quantities in the upcoming Definition~\ref{def:rank_trans_char}, which will be instrumental to state our power guarantees.

Our methodology is based strictly on the observation ranks (as defined in Section~\ref{sec:ranking}, after tie-breaking). Note that this ranking may distort evidence against the null; to understand which characteristics of the anomalous distributions $F_{ij}$ may be relevant for successful detection, it is useful to consider a transformation of our observations. Consider the transformation:
\begin{align*}
U_{ij} &\equiv \lim_{\delta \to 0} F_{0j}^{(\delta)}(X_{ij}^{(\delta)}) = F_{0j}(X_{ij}) + \left(\tfrac{W_{ij}-1}{2}\right)\bigg(F_{0j}(X_{ij}) - F_{0j}^{-}(X_{ij}) \bigg)\ ,  \numberthis\label{eq:def_Uij}
\end{align*}
where $F_{0j}^{(\delta)}$ is the convolution of $F_{0j}$ and the $\delta$-dilated distribution of $W_{ij}$ resulting from the tie-breaking transformation~\eqref{eq:def_tiebreak}, and $F_{0j}^-(x) = \lim_{t\uparrow x} F_{0j}(t)$. Note that for null observations (i.e.~for $i\not\in\cS$) we have that $U_{ij}$ is uniform on $[0,1]$. For continuous observations the tie-breaking scheme is irrelevant, and the above definition reduces to $U_{ij} = F_{0j}(X_{ij})$. Through this lens, the degree of \emph{non-uniformity} of the anomalous observations after the transformation~\eqref{eq:def_Uij} dictates the testing difficulty.

Perhaps unexpectedly, given a sufficient amount of referentials, the \textit{only} characteristics of the distribution of $U_{ij}$ (for $i\in\cS$) that determine asymptotic power are its mean and variance. A similar phenomenon has been observed in \cite{stoepker2021} in a different setting as well. Essentially, given sufficient averaging of the ranks over the referentials, the subject means are sufficiently normal at a specific point in the tails, such that the asymptotic detection boundary matches the boundary if $Y_i(\bR)$ were \textit{exactly} normal. In this normal regime, the only relevant quantities for detection are the signal mean and heteroskedasticity. At this point we can connect our work with that of \cite{TonyCai2011} on the heteroskedastic normal location model~\eqref{hyp:norm}. While this is initially not obvious, when the nonparametric hypothesis testing problem~\eqref{hyp:general} is viewed through the lens of ranking, given sufficient averaging over the subjects, its detection boundary asymptotically exhibits the same structure as the heteroskedastic normal location model~\eqref{hyp:norm} considered in \cite{TonyCai2011}. This insight has serious implications: anomalies that manifest themselves in the first and second moment of $U_{ij}$ play an asymptotic role, but any other anomalous behavior is asymptotically irrelevant!

We highlight the importance of the mean and variance of $U_{ij}$ with the following definition:
\begin{defi}[Anomalous characteristics after rank-transform]\label{def:rank_trans_char}
Let $U$ be a continuous uniform random variable supported on $[0,1]$. Through the lens of ranking, the anomalous signal \emph{magnitude} of subject $i$ is given by:
\begin{equation}\label{eq:def_mu}
\mu_{n,i} \equiv \frac{1}{t}\sum_{j\in[t]}\frac{\E{U_{ij}} - \E{U}}{\sqrt{\Var{U}}} =  2\sqrt{3}\left( \frac{1}{t}\sum_{j\in[t]}\E{U_{ij}} - \frac{1}{2} \right)\ ,
\end{equation}
and the anomalous signal \emph{heteroskedasticity} of subject $i$ is given by:
\begin{equation}\label{eq:def_var}
\sigma^2_{n,i} \equiv \frac{1}{t}\sum_{j\in[t]}\frac{\Var{U_{ij}}}{\Var{U}} = \frac{12}{t}\sum_{j\in[t]}\Var{U_{ij}} \ .
\end{equation}
\end{defi}

\begin{rem}
The quantities of Definition~\ref{def:rank_trans_char} relate to simple probabilities of ``correctly ranking'' anomalous observations with higher ranks than nominal observations, which we further detail in Appendix~\ref{app:moments-u}.
\end{rem}

In stating our analytic results, it will be critical to distinguish between ``weak'' anomalies, ``strong'' anomalies, and a category of anomalies that exists on the boundary between them. Concretely, letting $\eta \in (0,2]$ be fixed, we consider the following partition of $\cS$ indexing strong, weak, and boundary signals respectively:
\begin{align*}
\cS_{T,n} &\equiv \left\{ i\in\cS : \mu_{n,i} \geq \sqrt{\frac{2(1+\eta)\log(n)}{t}} \right\} \ , \\
\cS_{W,n} &\equiv \left\{ i\in\cS: \mu_{n,i} < \sqrt{\frac{2\log(n)}{t}} \right\} \ , \\
\cS_{B,n} &\equiv \cS \setminus (\cS_{T,n}\cup\cS_{W,n}) \ . \numberthis\label{eq:def_partition_anomalies} 
\end{align*}

Note that the definitions of $\cS_{T,n}$ and $\cS_{B,n}$ depend on $\eta$, but to avoid complicating the notation this dependency is not explicitly stated. The following sequence will play a role in characterizing power properties that hinge on the strength of the boundary signals in $\cS_{B,n}$:
\begin{equation}
\nu_n \equiv n^{\frac{1}{2}\log(n)^{-1/4} +\frac{1}{4k_n} + \frac{1}{2}\sqrt{\frac{\log(n)}{t}}}\big(\sqrt{k_n} + n^{\frac{1}{2}\log^{-1/2}(n)}\big)\ .
\end{equation}
Importantly, under the assumptions of our upcoming Theorem~\ref{th:rank_hc_test}, $\nu_n = n^{o(1)}$ and $\nu_n = \omega(1)$. We are now ready to state our main result:

\begin{theorem}[Performance of permutation-rank-based higher criticism test]\label{th:rank_hc_test}
Referring to the hypothesis testing problem~\eqref{hyp:general}, consider the rank-based higher criticism test $\psi_\text{rank-hc}(\mathbf{X})$ from Definition~\ref{def:rank_hc_stat} for arbitrary $\alpha\in (0,1)$. Then:

\begin{enumerate}[label=(\arabic*)]
\item This test has level at most $\alpha$. \label{en:th1_null}
\item Consider the alternative hypothesis and the parameterizations of the anomalous subjects in Definition~\ref{def:rank_trans_char}, partitioned as in~\eqref{eq:def_partition_anomalies}. Assume $k_n = n^{o(1)}$ and $k_n \to \infty$. Assume $t = \omega(\log(n))$ and $t = o(n)$. The test has power converging to one if either one of the following conditions hold: \label{en:th1_alt}
\begin{enumerate}[label=(\roman*)]
\item $\abs{\cS} = \bigO(\sqrt{n})$ and $\liminf_{n\to\infty} \ \abs{\cS_{T,n}} \geq 1$. \label{en:th1_strong}
\item $\abs{\cS} \leq \sqrt{n}$ and $\abs{\cS_{B,n}} = \omega(\nu_n)$. \label{en:th1_boundary}
\item Assume constants $\beta > 1/2$, $r$ and $\gamma$  exist such that $\abs{\cS} = n^{1-\beta + o(1)}$ and
\begin{equation}\label{eq:def_min_r_gamma}
\liminf_{n\to\infty}\min_{i\in\cS_{W,n}}  \frac{\mu_{n,i}^2 t}{2\log(n)} = r \ , \quad  \liminf_{n\to\infty} \min_{i\in\cS_{W,n}} \sigma^2_{n,i} = \gamma^2 \ ,
\end{equation}
when $r > \rho(\beta,\gamma)$. If $\gamma^2 > 2$ and $\beta\in(1/2,1-1/\gamma^2)$ then $r = 0$ suffices. \label{en:th1_weak}
\end{enumerate}
\end{enumerate}
\end{theorem}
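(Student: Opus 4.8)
For part~\ref{en:th1_null} the plan is to exploit the distribution-freeness of the rank transform under the null. First I would note that under $H_0$ the observations in referential $j$ are i.i.d.\ from $F_{0j}$, so after the random tie-breaking of Section~\ref{sec:ranking} the within-column ranks $(R_{1j},\dots,R_{nj})$ form a uniformly random permutation of $[n]$, and these permutations are independent across $j$. Hence the joint law of $\bR$ under $H_0$ does not depend on the unknown $F_{0j}$: it is the ``permutation law'' determined solely by $(n,t)$. Since $T(\bR)$ and $p_q$ are functions of $\bR$ alone, $T(\bR)\mid H_0$ has this same fixed law, so $\cP_{\text{rank-hc}}(\mathbf r)=\Phn{T(\bR)\ge T(\mathbf r)}$ is an exact $p$-value, and the bound $\PPhn(\cP_{\text{rank-hc}}(\bR)\le\alpha)\le\alpha$ follows from the standard fact that a survival function evaluated at an independent copy is stochastically larger than $\text{Unif}[0,1]$ (taking care of atoms, since $T(\bR)$ is discrete). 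This is essentially the calibration lemma already used in the permutation-test literature.

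For part~\ref{en:th1_alt} the unifying strategy is to exhibit, for a well-chosen grid point $q\in Q_n$, a divergence $V_q(\bR)\to+\infty$ in probability under $H_1$, while the null $(1-\alpha)$-quantile of $T(\bR)$ is of strictly smaller order; since $T(\bR)\ge V_q(\bR)$ and the test rejects above that quantile, power $\to1$ follows. The crux is a sharp tail characterization of an anomalous subject's rank mean. I would first reduce $Y_i(\bR)$ to a sum of independent terms: for fixed $i$ the ranks $R_{i1},\dots,R_{it}$ depend on disjoint, hence independent, columns, and since $\abs{\cS}=o(n)$ the empirical CDF of each column concentrates on $F_{0j}$ (via a DKW-type bound, with the anomalous contribution bounded by $\abs{\cS}/n=o(1)$), so $R_{ij}/n=U_{ij}+o(1)$ with the $U_{ij}$ independent across $j$ and $(Y_i-\Rb)/\sigma_R\approx 2\sqrt3\big(t^{-1}\sum_j U_{ij}-\tfrac12\big)$, a standardized mean with expectation $\approx\mu_{n,i}$ and variance $\approx\sigma^2_{n,i}/t$. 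The heart of the argument is then a moderate-deviation estimate showing $\Pha{(Y_i-\Rb)/\sigma_R\ge\sqrt{2q\log(n)/t}}$ matches the Gaussian tail with these two moments to leading exponential order, uniformly over $Q_n$ (of size $k_n=n^{o(1)}$) and over the partition~\eqref{eq:def_partition_anomalies}. Here the assumption $t=\omega(\log(n))$ is exactly what forces the Cramér correction, of order $(\log n)^{3/2}/\sqrt t$ in the log-probability, to be $o(\log n)$, which is what collapses the problem onto the heteroskedastic normal model and produces $\rho(\beta,\gamma)$. I expect this sharp characterization, together with control of the $o(1)$ replacement error, to be the main obstacle.

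With the tail estimate in hand I would treat the three regimes by choosing $q$ appropriately. In the weak case~\ref{en:th1_weak} an anomaly crosses with probability $\approx n^{-(\sqrt q-\sqrt r)^2/\gamma^2}$, so the signal excess $\E{N_q}-np_q\approx n^{1-\beta-(\sqrt q-\sqrt r)^2/\gamma^2}$ must dominate the null fluctuation $\sqrt{np_q}\approx n^{(1-q)/2}$; optimizing the exponent over $q$ reproduces precisely $r>\rho(\beta,\gamma)$, and the extension of $Q_n$ up to $q=2$ is what lets the variance-only regime ($\gamma^2>2$, $\beta<1-1/\gamma^2$, $r=0$) be captured at $q\approx2$. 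In the strong case~\ref{en:th1_strong} I would pick $q\in(1,1+\eta)$: a strong anomaly lies $\omega(1)$ standard deviations above the threshold and so crosses with probability $\to1$, while $np_q\approx n^{1-q}\to0$, so a single crossing sends $V_q\to\infty$. The boundary case~\ref{en:th1_boundary} is a counting argument at $q\approx1$, where $\nu_n$ quantifies the number of near-threshold crossings needed to beat the null standard deviation. Throughout, to pass from $\E{N_q}$ to $N_q$ I must control $\Var{N_q(\bR)}$: the crossing indicators are dependent because within-column ranks are tied to a fixed total, but this dependence has a negative-association structure, so I would bound $\Cov{\ind{i\text{ crosses}},\ind{i'\text{ crosses}}}$ via the weak within-column coupling, show $\Var{N_q}$ is of the independent order, and conclude by Chebyshev. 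The boundedness of the null quantile of $T(\bR)$ follows from the same moment and dependency control applied under $H_0$, completing the reduction.
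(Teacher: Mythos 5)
Your calibration argument for part~\ref{en:th1_null} and your overall architecture for part~\ref{en:th1_alt} --- bound the null quantile of $T(\bR)$ by $O(\sqrt{k_n})$ via Chebyshev and negative association under the permutation law, then exhibit a grid point at which $V_q(\bR)$ diverges --- match the paper (its Lemma~\ref{lem:bound-p-value} gives $\cP_{\text{rank-hc}}(\bR)\le k_n/T(\bR)^2+\ind{T(\bR)\le 0}$), as do your choices of $q$ and the tilting/moderate-deviation estimate for an anomalous subject's crossing probability (Lemma~\ref{lem:prob-alt-char}). The genuine gap is in your variance control under the alternative. You propose to bound $\Var{N_q(\bR)}$ by arguing that the crossing indicators retain ``a negative-association structure.'' Negative association of the ranks holds only under the null (exchangeability within columns); under the alternative it fails, and the paper constructs an explicit example (Appendix~\ref{app:cov-nq}) in which the sum of the covariances of the crossing indicators equals $s-2$ times the sum of their variances, so Chebyshev applied directly to $N_q(\bR)$ is too loose. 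The paper's fix is to lower-bound $N_q(\bR)$, on an event of probability $1-o(1)$, by a surrogate $\tilde N_q(\bX)$ whose summands are indicators of $\frac{n-s}{t}\sum_j U_{ij}$ exceeding a slightly inflated threshold (Lemma~\ref{lem:bound-Nq}, proved by Hoeffding applied conditionally on subject $i$'s own observations); the surrogate's summands are genuinely independent and Chebyshev then applies. Your DKW-style replacement $R_{ij}/n=U_{ij}+o(1)$ is the right instinct, but the error must be quantified at scale $O(\sqrt{n\log n/t}+s)$ per rank sum and propagated into the threshold, which is exactly what the surrogate's $\varepsilon_{n,i}$ term does.

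A second, subtler omission: you center $N_q$ at $np_q$ and only require the signal excess to beat the null fluctuation $\sqrt{np_q}$. But $p_q$ is computed under the exact null permutation law, whereas a nominal subject under the alternative crosses the (shifted) threshold with a different probability $\tilde p_q$, since its ranks are computed against a contaminated sample. The term $n(p_q-\tilde p_q)$ enters the centering and can swamp the signal term $n^{1-\beta-(\sqrt q-\sqrt r)^2/\gamma^2}$ unless one proves a bound on the \emph{difference} far sharper than knowing both probabilities are $n^{-q+o(1)}$; this is the paper's Lemma~\ref{lem:prob-norm-null-alt-approx} (a mean-value-theorem argument on the Irwin--Hall distribution function), and it is where the hypotheses $t=o(n)$ and $\abs{\cS}\le\sqrt n$ are actually consumed. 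Your proposal does not address this comparison at all. A minor correction: in the variance-only regime ($\gamma^2>2$, $r=0$) the paper takes $q=1$, not $q\approx 2$; the extension of the grid up to $2$ serves the strong-signal case~\ref{en:th1_strong}, where $q_n=1+\eta/2$.
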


The proof of Theorem~\ref{th:rank_hc_test} is provided in Section~\ref{sec:proof-th1}, consisting of two main parts. The first part requires a characterization of the $p$-value of the test from Definition~\ref{def:rank_hc_stat} --- this is provided in Lemma~\ref{lem:bound-p-value} which resembles approaches from other works~\citep{arias-castro2018a, stoepker2021}. The second part is significantly more challenging. Most notably, we require a tight control on the difference between the tail behavior of the distribution of the rank subject means under the null, and that of the null subjects under the alternative. Obtaining such tight control, given in Lemma~\ref{lem:prob-norm-null-alt-approx}, requires a refined understanding of the dependencies induced by the ranks, which is captured in Lemma~\ref{lem:bound-Nq}.  

It is important to reflect on the assumptions on $k_n$, $t$ and $\abs{\cS_{B,n}}$ in the theorem above. The requirement that $k_n \to \infty$ enables us to assume that, for any $q\in(0,2]$, there exists a gridpoint in $Q_n$ that approximates this point arbitrarily well as $n$ increases. The upper bound on $k_n$ is, however, of a technical nature, and we conjecture it is a proof artifact induced by the use of a union bound (see Remark~\ref{rem:discrete_grid}). Similarly, we conjecture that the required minimal rate for $\abs{\cS_{B,n}}$ under condition~\ref{en:th1_boundary} can be lowered and is a proof artifact induced through similar reasons. We conjecture that a sharper analysis could lower the required rate to $\omega(\sqrt{\log\log(n)})$. The lower bound on $t$ ensures sufficiently Gaussian tail behavior of the subject means. This lower bound requirement is relatively mild, as the ranks have a bounded support, and thus the desired tail behavior can be ensured even for small $t$ (through a Bernstein bound for bounded random variables). The reasons for the upper bound requirement $t = \smallO(n)$ are more subtle; most notably, it ensures that we can accurately relate the distribution of the ranks from nominal subjects (including their dependencies) under the alternative hypothesis with the distribution of ranks under the null hypothesis. Obtaining a sufficiently accurate approximation is far from trivial as one requires a high degree of accuracy in the distribution tails. It is especially challenging for large values of $t$. A restriction to $t = n^{o(1)}$ simplifies some of the arguments, but through careful treatment of higher order terms one can show that for much larger $t$ the theorem holds true as well. Similar dynamics are also encountered and discussed in the permutation setting of \cite{stoepker2021} where imposing the stronger restriction $t = n^{o(1)}$ seems unavoidable without modifying the methodology itself or imposing restrictions on the distributions involved. 

Prior literature on parametric settings highlight only the most intricate setting where $\cS = \cS_{W,n}$, since one can typically extrapolate results to stronger signal regimes, as often the power of such tests is clearly monotonic in the signal magnitude. In our rank-based test such monotonicity is no longer obvious, in part due to the use of ranking. Similar considerations arise in related works~\citep{arias-castro2018a, stoepker2021}. Therefore understanding the power properties of the test for stronger signal regimes requires a more detailed investigation in this setting, and thus have been included here to paint a complete picture.  

\begin{rem} Through the assumption on $t = \omega(\log(n))$ and the parameterization of $\mu_{n,i}$, the theorem implies detection is possible even if $\mu_{n,i} \to 0$, corresponding to the probability of ranking an anomalous observation larger than a nominal observation converging to $1/2$, provided this convergence is sufficiently slow and the constants involved are sufficiently large. Moreover, if the anomalous asymptotic minimal variance is large enough ($\gamma^2 > 2$) and the signal is not too sparse (i.e.~$\beta < 1-1/\gamma^2$), this constant can be 0 as well (corresponding to $\sum_{j\in[t]}\E{U_{ij}} = 0$) and detection is then possible through the heteroskedasticity $\sigma_{n,i}^2$ of the anomalous signals alone.
\end{rem}

\begin{rem}\label{rem:variance-zero}
The case of $\gamma = 0$ may be considered uninteresting when restricting to the normal location model~\eqref{hyp:norm}, but it is undesirable to exclude this case in our general setting~\eqref{hyp:general}; doing so precludes scenarios where anomalous observations have asymptotically vanishing variance (under parameterization in Definition~\eqref{def:rank_trans_char}). An example of a case where both $\mu_{n,i} \to 0$ and $\sigma_{n,i} \to 0$ is included in Appendix~\ref{app:variance-zero}.
\end{rem}

\begin{rem}\label{rem:nonconverging_sigma}
In Theorem~\ref{th:rank_hc_test} the power characterizations for the weakest signals are shown to be dependent on the limit inferior of $\sigma_{n,i}^2$. One might wonder if this characterization is needlessly broad, as $\mu_{n,i} \to 0$ may imply that $\sigma_{n,i}$ has a well-defined limit. In Appendix~\ref{app:nonconverging-sigma} we include an example where $\mu_{n,i} \to 0$, but $\sigma_{n,i}^2$ does not converge.
\end{rem}

\begin{rem}\label{rem:varying-signal}
Although we have aimed to present Theorem~\ref{th:rank_hc_test} as general as possible, we do present guarantees based on minimal signal strength under case~\ref{en:th1_weak}. Since the subjects which attain the minima for $\mu_{n,i}$ and $\sigma_{n,i}$ may not coincide, this may potentially paint a pessimistic picture of the asymptotic power requirements if there is a large amount of heterogeneity in the characteristics of the anomalies. Understanding how the anomalous subject means $\mu_{n,i}$, variances $\sigma_{n,i}$, and their potential multiplicity within the set $\cS_{W,n}$ should be weighed against each other is intricate, and a general statement can only be given in an implicit manner. A more general result is possible, but it would come at the expense of clarity of presentation. We direct the reader to Equation~\eqref{eq:final-req} and Lemma~\ref{lem:prob-alt-char} in the proof of Theorem~\ref{th:rank_hc_test} in Section~\ref{sec:proof-th1}, which together form a less explicit requirement for asymptotic power that does not rely on minimal signal magnitude and heteroskedasticity.
\end{rem}

\subsection{Two special cases}\label{sec:particular}
In this section we compare the asymptotic power properties of our rank-based test with respect to minimax lower bounds, with the aim of quantifying the potential power loss incurred due to the use of ranks. As is standard in the analysis of rank based tests (see, e.g. \citet[Chapter 8]{Lehmann1975}; \citet[Chapter 4.3]{MR758442}; \citet[Chapter 14]{VanDerVaart1998}; \citet[Chapter 2]{Hajek2010}) such a comparison is made through particularization to parametric families. We present numerical results which illustrate power loss, complementary to these analytic results, in Section~\ref{sec:sim}.

The particularization serves two purposes: first, it allows us to effectively connect the results of Theorem~\ref{th:rank_hc_test} to related literature. Second, this focuses the discussion on settings where detailed quantification of the power loss of our rank-based test is meaningful. To see this, note that for settings in which nominal observations arise from a distribution with bounded support and anomalies manifest as a location shift, usage of subject rank mean statistics is in general suboptimal. We therefore focus on settings for which the subject rank means are a meaningful statistic, for which such a comparative analysis is compelling. 

The results in this section refer to parametric families with non-varying referentials, and where potential anomalies all have identical anomalous distributions. This allow us to effectively connect our results to related literature. Moreover, such results also give an upper bound on the power loss due to ranks when varying referentials are induced within the parametric family through a monotone data transformation:
\begin{equation}\label{eq:transform-monotone}
\tilde X_{ij} \equiv f_j(X_{ij}) \ ,
\end{equation}
where $f_j$ is strictly monotone increasing and possibly random. This can be motivated through an invariance argument; conditionally on the data $\bX$, the $p$-value of the rank-based test in Definition~\ref{def:rank_hc_stat} on $(\tilde X_{ij})_{i\in[n],j\in[t]}$ is invariant of the choice of $f_j$ and as such the minimal signal strength required for asymptotic power is identical for any $f_j$. A prototypical example for $f_j(x)$ is the classical Friedman setting~\eqref{eq:friedman} in which $f_j(x) = A_j + x$ for some $A_j\sim G$. Furthermore, if $f_j$ is independent of the data, then the minimal signal strength required for asymptotic power of the oracle test is at least the requirement when $f_j$ is the identity, since $f_j$ does not provide any information to reject the null hypothesis in~\eqref{hyp:general}. Stated differently, the lower bound on the minimal signal strength for the original data are a (possibly loose) lower bound for any such transformation, and the degree of this looseness highly depends on the nature of $f_j$. These two observations together imply that the upcoming results also apply to observations with varying referentials induced as in~\eqref{eq:transform-monotone}.

\subsubsection{The exponential family}\label{sec:exp-family}
Like earlier works in this domain~\citep{arias2011detection, arias-castro2018a, stoepker2021, Arias-Castro2017a} we consider the one-parameter exponential family in natural form as a natural benchmark to study. This family is defined through its density as an exponentially tilted density with respect to a base probability measure. Specifically, with reference to the general hypothesis test in~\eqref{hyp:general}, let $F_{0j}=F_0$ be a probability distribution on the real line (which can be continuous, discrete, or a combination thereof) with all moments finite, and let $\mu_0$ and $\sigma_0^2$ denote its mean and variance respectively. Then, the anomalous distributions $F_{ij} = F_{\theta_i}$ are defined through their density $f_{\theta_i}$ with respect to $F_0$, and parameterized by $\theta_i \in [0,\theta_\star)$ as $f_{\theta}(x) = \exp{\theta x  - \log \varphi_0(\theta)}$, where $\varphi_0(\theta) = \int e^{\theta x} {\rm d}F_0(x)$, and $\theta_\star = \sup\{\theta > 0: \varphi_0(\theta) < \infty\}$.

As remarked in \cite{stoepker2021, arias-castro2018a} this family encompasses many models interesting in practice: the normal location model~\eqref{hyp:norm} common in signal detection literature, Poisson models arising from syndromic surveillance~\citep{kulldorff2005stp}, and various Bernoulli models~\citep{walther2010optimal, mukherjee2015hypothesis, Wu2014}.

Assume that for all observations, $X_{ij} \sim F_{\theta_i}$ from the one-parameter exponential family as defined above. In this parameterization, the hypothesis test in~\eqref{hyp:general} particularizes to:
\begin{align*}
H_0:\qquad &\forall_{i\in[n]} \quad \theta_{i}=0\ , \numberthis \label{hyp:exp}\\
H_1:\qquad &\exists_{\cS \subset [n]} : \forall_{i\in\cS} \quad \theta_{i} = \theta > 0\ , \text{ and } \forall_{i\notin\cS} \quad \theta_{i}=0 \ .
\end{align*}
The collection of subject sums constitute sufficient statistics for this hypothesis test. In the alternative hypothesis, constraining the anomalies to equal magnitude is motivated from a minimax stance \citep{stoepker2021} and leads to the following lower bound:
\begin{theorem}[\cite{stoepker2021}, Theorem~1]\label{th:exp-lowerbound}
Refer to the hypothesis testing problem~\eqref{hyp:exp} and parameterization $\theta=\tau\sqrt{2\rho(\beta,1)\log(n)/(\sigma_0^2t)}$ with $\tau$ constant, and $\abs{\cS}=n^{1-\beta}$ with $\beta \in(1/2,1)$. Provided $t = \smallOmega(\log^3(n))$, any test will be asymptotically powerless if $\tau<1$.
\end{theorem}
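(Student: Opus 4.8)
The plan is to prove this impossibility result by the classical second-moment (truncated-likelihood) method: reduce the composite alternative to a single Bayesian mixture and show the associated chi-square divergence vanishes. First I would assign a prior to the anomalous set, replacing the fixed cardinality $\abs{\cS}=n^{1-\beta}$ by an i.i.d.\ inclusion rule in which each subject is anomalous independently with probability $\epsilon_n = n^{-\beta}$; this makes the mixture alternative $\bar\PP_1$ factorize across subjects. Since the minimax summed error dominates the Bayes summed error, and the latter equals $1-\norm{\PP_0-\bar\PP_1}_{\mathrm{TV}}$ (with $\PP_0$ the null law), it suffices to show $\norm{\PP_0-\bar\PP_1}_{\mathrm{TV}}\to 0$. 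A standard argument, conditioning on the near-deterministic value of the Binomial$(n,\epsilon_n)$ size, then transfers the conclusion from the random-size prior to the exactly-sized alternative in the statement.

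Writing $Z_i = e^{\theta\sum_{j}X_{ij} - t\log\varphi_0(\theta)}$ for the per-subject likelihood ratio of $F_\theta^{\otimes t}$ against $F_0^{\otimes t}$, the mixture likelihood ratio is $L=\prod_i(1-\epsilon_n+\epsilon_n Z_i)$, and by independence and $\Ehn{Z_1}=1$, $\Ehn{L^2}=\prod_i\Ehn{(1-\epsilon_n+\epsilon_n Z_i)^2}=\bigl(1+\epsilon_n^2(\Ehn{Z_1^2}-1)\bigr)^n\le e^{n\epsilon_n^2(\Ehn{Z_1^2}-1)}$. Here $\Ehn{Z_1^2}=(\varphi_0(2\theta)/\varphi_0(\theta)^2)^t=e^{t(\psi_0(2\theta)-2\psi_0(\theta))}$ with $\psi_0=\log\varphi_0$. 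Because $\theta=\tau\sqrt{2\rho(\beta,1)\log(n)/(\sigma_0^2 t)}\to 0$, a cumulant expansion gives $\psi_0(2\theta)-2\psi_0(\theta)=\sigma_0^2\theta^2+\bigO(\theta^3)$, so $t(\psi_0(2\theta)-2\psi_0(\theta))=2\tau^2\rho(\beta,1)\log(n)+\bigO(t\theta^3)$ with $t\theta^3=\bigO\bigl((\log n)^{3/2}t^{-1/2}\bigr)=\smallO(\log n)$. Hence $\Ehn{Z_1^2}=n^{2\tau^2\rho(\beta,1)+\smallO(1)}$ and, since $n\epsilon_n^2=n^{1-2\beta}$, the exponent becomes $n\epsilon_n^2(\Ehn{Z_1^2}-1)=n^{1-2\beta+2\tau^2\rho(\beta,1)+\smallO(1)}$.

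In the moderately sparse regime $1/2<\beta\le 3/4$ one has $\rho(\beta,1)=\beta-\tfrac12$, so the exponent is $n^{(\tau^2-1)(2\beta-1)+\smallO(1)}$, which tends to $0$ exactly when $\tau<1$; then $\Ehn{L^2}\to 1$, so $\norm{\PP_0-\bar\PP_1}_{\mathrm{TV}}\to 0$ and every test is powerless. The subtlety, and the main obstacle, is the very sparse regime $\beta>3/4$, where $\Ehn{Z_1^2}$ is governed by the heavy upper tail of $Z_1$ and the bare second moment diverges even below the boundary. The remedy is truncation: replace $Z_i$ by $\tilde Z_i=Z_i\,\ind{\sum_j X_{ij}\le \kappa_n}$, with $\kappa_n$ set at the least-favorable mixing point dictated by the second branch $(1-\sqrt{1-\beta})^2$ of $\rho(\beta,1)$. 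One then shows (i) the truncated and untruncated mixtures are $\smallO(1)$-close in $L^1$, so controlling the truncated chi-square still suffices, and (ii) through a precise moderate-deviation estimate for the tail of $\sum_j X_{ij}$, the truncated second moment $\Ehn{\tilde Z_1^2}$ is governed by the same Gaussian tail integral arising in the heteroskedastic normal location model~\eqref{hyp:norm}. It is precisely this moderate-deviation step that forces $t=\smallOmega(\log^3(n))$: the relevant tail lies at depth of order $\sqrt{2\log n}$ standard deviations, and the Gaussian approximation is valid there only when $\sqrt{\log n}=\smallO(t^{1/6})$. Carrying the truncated computation through reproduces the normal-model boundary, so $\tau<1$ again yields a vanishing (truncated) chi-square and $\norm{\PP_0-\bar\PP_1}_{\mathrm{TV}}\to 0$, completing the argument.
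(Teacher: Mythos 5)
Note that the paper does not prove this statement itself --- it is imported verbatim from \cite{stoepker2021} (their Theorem~1) --- so the comparison is against that cited proof. Your proposal is correct and follows essentially the same route as that proof: prior randomization to an i.i.d.\ Bernoulli$(n^{-\beta})$ inclusion rule, a chi-square/second-moment bound with the cumulant expansion $\psi_0(2\theta)-2\psi_0(\theta)=\sigma_0^2\theta^2+\bigO(\theta^3)$ handling the regime $1/2<\beta\le 3/4$, and a truncated likelihood ratio combined with a Cram\'er-type moderate-deviation estimate for $\beta>3/4$, which is exactly where the condition $t=\smallOmega(\log^3(n))$ (equivalently $\sqrt{\log n}=\smallO(t^{1/6})$) enters.
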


To particularize Theorem~\ref{th:rank_hc_test} to this one-parameter exponential setting define
\[
\Upsilon_0 \equiv \left(\sqrt{3}\, \E{\max\left\{\frac{X_1-\mu_0}{\sigma_0}, \frac{X_2-\mu_0}{\sigma_0}\right\}}\right)^{-1} \ ,
\]
where $X_1, X_2 \sim F_0$ independently. The following proposition characterizes the loss due to ranks in this parametric family:

\begin{prp}\label{prp:rank-perf-exp}
Refer to the hypothesis testing problem~\eqref{hyp:exp} and parameterization ${\theta=\tau\sqrt{2\rho(\beta,1)\log(n)/(\sigma_0^2t)}}$ with $\tau$ constant, and $\abs{\cS}=n^{1-\beta}$ with $\beta \in(1/2,1)$. Provided $t = \smallOmega(\log(n))$, the rank-based higher criticism test from Definition~\ref{def:rank_hc_stat} for any $\alpha\in (0,1)$ has power converging to one if $\tau > \Upsilon_0$.
\end{prp}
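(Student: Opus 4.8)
The plan is to obtain Proposition~\ref{prp:rank-perf-exp} as a corollary of Theorem~\ref{th:rank_hc_test}; the substantive work is the asymptotic evaluation of the rank-transform characteristics $\mu_{n,i}$ and $\sigma_{n,i}^2$ of Definition~\ref{def:rank_trans_char} in this model. Because the null referential $F_0$ and the anomalous law $F_\theta$ are common to all $i\in\cS$ and all $j\in[t]$, both $\mu_{n,i}$ and $\sigma_{n,i}^2$ are constant over $i$ and $j$; write $\mu_n$ and $\sigma_n^2$ for their common values. Under the parameterization $\theta=\tau\sqrt{2\rho(\beta,1)\log(n)/(\sigma_0^2 t)}$, the assumption $t=\omega(\log n)$ forces $\theta\downarrow 0$, so all expansions below are taken in this limit. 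First I would reduce everything to a first-order expansion of $g(\theta)\equiv\mathbb{E}_{F_\theta}[F_0(X)]=\mathbb{E}[U_{ij}]$ (with $i\in\cS$), since $\mu_n=2\sqrt{3}(g(\theta)-\tfrac12)$ and $\sigma_n^2=12\,\mathrm{Var}_{F_\theta}(F_0(X))$.

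The heart of the matter is evaluating $g'(0)$. Differentiating under the integral (licit since all moments of $F_0$ are finite, hence $\varphi_0$ is smooth on $[0,\theta_\star)$) and using $\varphi_0(0)=1$, $\varphi_0'(0)=\mu_0$ gives $g(0)=\tfrac12$ and $g'(0)=\mathbb{E}_{F_0}[F_0(X)(X-\mu_0)]$. Writing $F_0(X)=\mathbb{E}[\ind{X'\le X}\mid X]$ for an independent copy $X'\sim F_0$ and symmetrizing via $\max(a,b)=\tfrac{a+b}{2}+\tfrac{|a-b|}{2}$, one finds $g'(0)=\tfrac12\,\mathbb{E}[\max(X_1,X_2)-\mu_0]$ with $X_1,X_2\sim F_0$ independent; factoring out $\sigma_0$ identifies this with $\sigma_0/(2\sqrt{3}\,\Upsilon_0)$. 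Consequently $\mu_n=2\sqrt{3}\,(g(\theta)-\tfrac12)=(\sigma_0/\Upsilon_0)\,\theta\,(1+o(1))=(\tau/\Upsilon_0)\sqrt{2\rho(\beta,1)\log(n)/t}\,(1+o(1))$. For the heteroskedasticity, $\mathrm{Var}_{F_\theta}(F_0(X))\to\mathrm{Var}_{F_0}(F_0(X))=\tfrac1{12}$ as $\theta\downarrow 0$, whence $\sigma_n^2\to1$.

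It then remains to feed these into Theorem~\ref{th:rank_hc_test}, choosing the grid parameter so that $k_n\to\infty$ and $k_n=n^{o(1)}$ (e.g.\ $k_n=\lceil\log n\rceil$), with $t=\omega(\log n)$ as assumed. Set $\kappa\equiv\lim \mu_n/\sqrt{2\log(n)/t}=\tau\sqrt{\rho(\beta,1)}/\Upsilon_0$. Because $\rho(\beta,1)<1$ on $(1/2,1)$, the hypothesis $\tau>\Upsilon_0$ always yields $r=\kappa^2=\tau^2\rho(\beta,1)/\Upsilon_0^2>\rho(\beta,1)$. A short case split, keyed exactly to the weak/boundary/strong partition~\eqref{eq:def_partition_anomalies}, then finishes the proof. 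If $\kappa<1$ then $\cS=\cS_{W,n}$ eventually, $\gamma^2=\liminf\sigma_n^2=1$, and the requirement $r>\rho(\beta,\gamma)=\rho(\beta,1)$ is precisely $\tau>\Upsilon_0$, so case~\ref{en:th1_weak} applies. If $\kappa>1$, pick $\eta\in(0,2]$ with $\sqrt{1+\eta}<\kappa$ so that $\cS=\cS_{T,n}$ eventually; since $\beta>1/2$ gives $\abs{\cS}=n^{1-\beta}=o(\sqrt n)$ and $\abs{\cS_{T,n}}\to\infty$, case~\ref{en:th1_strong} applies. In the borderline $\kappa=1$ the subjects lie, for each $n$, either in $\cS_{W,n}$ (where $r=1>\rho(\beta,1)$) or in $\cS_{B,n}$ (where $\abs{\cS_{B,n}}=n^{1-\beta}=\omega(\nu_n)$ as $\nu_n=n^{o(1)}$, and $\abs{\cS}\le\sqrt n$), so case~\ref{en:th1_weak} or case~\ref{en:th1_boundary} applies. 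In every case the power tends to one. This case analysis is exactly the point flagged after Theorem~\ref{th:rank_hc_test}: monotonicity of power in $\theta$ is not available for ranks, so the stronger regimes must be handled separately rather than inferred from the weak one.

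The main obstacle is making the expansion of $g$ fully rigorous, in two respects. First, one must show the remainder is genuinely $o(\theta)$ and uniform enough that the limits defining $r$ and $\gamma^2$ in~\eqref{eq:def_min_r_gamma} exist with the claimed values; this follows from smoothness of $\theta\mapsto g(\theta)$ and $\theta\mapsto\mathrm{Var}_{F_\theta}(F_0(X))$ near $0$, guaranteed by finiteness of all moments of $F_0$. Second, and more delicate, is the case where $F_0$ has atoms: there $U_{ij}$ is defined through the tie-breaking limit~\eqref{eq:def_Uij}, so $F_0(X)$ must be replaced by its mid-rank version $\P{X'<X}+\tfrac12\P{X'=X}$. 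The symmetrization identity for $g'(0)$ and the $\max$-representation of $\Upsilon_0$ continue to hold in this case, but verifying that the tie correction is encoded correctly requires the bookkeeping developed in Appendix~\ref{app:moments-u}.
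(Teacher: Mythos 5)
Your proposal is correct and follows essentially the same route as the paper: Taylor-expand $\E{U_{ij}}$ and $\E{U_{ij}^2}$ around $\theta=0$ to obtain $\mu_{n,i}=\theta\sigma_0/\Upsilon_0+\bigO(\theta^2)$ and $\sigma_{n,i}^2=1+o(1)$, then invoke Theorem~\ref{th:rank_hc_test}. The only differences are that you derive the first-order coefficient $g'(0)=\tfrac12\E{\max(X_1,X_2)-\mu_0}=\sigma_0/(2\sqrt3\,\Upsilon_0)$ explicitly (the paper cites Proposition~2 of \cite{arias-castro2018a} for this) and you spell out the weak/boundary/strong case split according to $\kappa=\tau\sqrt{\rho(\beta,1)}/\Upsilon_0$, which the paper compresses into ``a direct application of Theorem~\ref{th:rank_hc_test}''; both elaborations are sound.
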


The proof is an extension of Proposition 2 from \cite{arias-castro2018a}, and given in Section~\ref{sec:rank-perf-exp}. Contrasting Theorem~\ref{th:exp-lowerbound} with Proposition~\ref{prp:rank-perf-exp} we see that within this parametric family the loss in power due to ranks is given by $\Upsilon_0$. For many common models, this quantity is small; for the normal model, for example, it corresponds to $\sqrt{\pi/3} \approx 1.023$. For the exponential distribution, this quantity corresponds to $2/\sqrt{3} \approx 1.15$. Furthermore, $\Upsilon_0 \geq 1$, and equality is attained if and only if $F_0$ is the continuous uniform distribution.
The magnitude of this asymptotic power loss arises in other contexts as well; see, for example, \citet[Chapter 4.3]{MR758442} for a statement pertaining to the normal location model.

Although this parametric family is relevant from a practical context, it does not allow us to explore the performance gap of our rank based test when both the signal magnitude and heteroskedasticity change. The reason is that, although in general parameter $\theta$ above influences both the mean and variance of $F_\theta$, the variance is only affected at higher order terms which are asymptotically irrelevant. Interestingly, an extension of Theorem~\ref{th:exp-lowerbound} to a two-parameter exponential family is far from trivial (see discussion in Section~\ref{sec:discussion}). 

\begin{rem}
The setting in~\eqref{hyp:exp} particularizes all anomalies to have equal magnitude, motivated from a minimax perspective \citep{stoepker2021}. A full characterization in the presence of varying anomalies --- akin to Theorem~\ref{th:rank_hc_test} --- requires an analysis going far beyond those presented in Theorem~\ref{th:exp-lowerbound} of \cite{stoepker2021}. To draw some connections between the setting of~\eqref{hyp:exp} and the setting of Theorem~\ref{th:rank_hc_test}, consider a setting with varying anomalies of magnitude $\theta_i$ for $i \in\cS$ where, in the spirit of categorizing ``weak'' and ``strong'' signals akin to~\eqref{eq:def_partition_anomalies}, one signal is ``strong'' with magnitude $\theta_i = \tau_i\sqrt{2\log(n)/(\sigma_0^2 t)}$. Then a sufficient condition for the likelihood ratio test to have power converging to one is $\tau_i > 1$, irrespective of the magnitude of the weaker signals. In that setting, a sufficient condition for our rank-based test to have power converging to one is $\tau_i > \Upsilon_0$.
\end{rem}

\subsubsection{Heteroskedastic convolutions}\label{sec:particular-conv}

Consider a setting where the signal manifests itself through a convolution with a normal distribution. This setting is of interest in cosmological applications \citep{Jin2005, Cai2014}. Specifically, with reference to~\eqref{hyp:general}, let $F_{0j}$ be the standard normal distribution, whereas for $i\in\cS$ distribution $F_{ij}$ is given by the convolution of a normal distribution with variance $\sigma^2$ and an unspecified distribution $G$ dilated by a scalar $\theta$. The hypothesis test~\eqref{hyp:general} then particularizes to:
\begin{align*}
H_0:\qquad &\forall_{i\in[n],j\in[t]} \quad X_{ij} \overset{\text{i.i.d.}}{\sim} \cN(0,1) \ , \numberthis \label{hyp:conv}\\
H_1:\qquad &\exists_{\cS \subset [n]} : \forall_{i\in\cS,j\in[t]} \quad X_{ij} = Z_{ij} + \theta Q_{ij} \ \text{ with } Z_{ij} \overset{\text{i.i.d.}}{\sim} \cN(0,\sigma^2) \\
&\text{ and } Q_{ij} \overset{\text{i.i.d.}}{\sim} G \text{ , and } \forall_{i\notin\cS,j\in[t]} \quad  X_{ij} \overset{\text{i.i.d.}}{\sim} \cN(0,1) \ .
\end{align*}

Note that if $G$ is the mass distribution at $1$, then this this precisely the setting in~\eqref{hyp:norm} with repeated observations. As shown in \cite{Cai2014}, if $G$ is unbounded then the dilation $\theta$ needs to be sufficiently small with respect to the specific distribution $G$ in order for the problem to exhibit nontrivial behavior. Therefore, we restrict ourselves to the setting where $G$ has bounded support on $[0,1]$. The essential supremum of $Q_{ij}\sim G$ plays a prominent role, and is defined as:
\[
s(G) \equiv \inf\{a \in \bbR : G(a) = 1\} \ .
\]
The \emph{optimal} test for some particularizations of~\eqref{hyp:general} may not rely on subject means, as these are generally not sufficient statistics. For example, when the anomalous signal heteroskedasticity is large a test based on subject variances instead may perform much better, and this may be the case for the setting~\eqref{hyp:conv}. To allow for a meaningful comparison that captures the effect of ranking, we contrast with the performance of the best possible test based on subject means only as the rank-based test in Definition~\ref{def:rank_trans_char} is based on rank subject means as well.

Using results from \cite{Cai2014}, we obtain the following proposition, proven in Section~\ref{sec:prp:means-perf-conv_proof}.
\begin{prp}\label{prp:means-perf-conv}
Refer to the hypothesis testing problem in~\eqref{hyp:conv} with $G$ having support on $[0,1]$, and parameterization $|\cS|=n^{1-\beta}$ with $\beta \in(1/2,1)$ and $\theta=s(G)^{-1}\sqrt{2r\log(n)/t}$ with $r > 0$ constant. Suppose that only the subject sums $\sum_{j=1}^tX_{ij}$ are observed. Then, any test is asymptotically powerless if $r < \rho(\beta,\sigma)$.
\end{prp}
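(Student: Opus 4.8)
The plan is to reduce the problem to a univariate sparse-mixture detection problem on the $n$ subject sums, bound the chi-square divergence between the null and a least-favorable alternative, and dominate that divergence by the corresponding quantity for the heteroskedastic normal location model~\eqref{hyp:norm}, whose lower bound is already available from \cite{TonyCai2011,Cai2014}.

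First I would pass to the statistics actually observed. Writing $\bar S_i \equiv t^{-1/2}\sum_{j\in[t]} X_{ij}$, under $H_0$ the $\bar S_i$ are i.i.d.\ $\cN(0,1)$, while under $H_1$ a subject $i\in\cS$ has $\bar S_i = \xi_i + R_i$ with $\xi_i\sim\cN(0,\sigma^2)$ and $R_i \equiv \theta t^{-1/2}\sum_{j\in[t]} Q_{ij}$ independent. Since $G$ is supported on $[0,1]$ with essential supremum $s(G)$, we have $0\le R_i\le \theta\sqrt t\,s(G)=\sqrt{2r\log n}=:a_n$ almost surely; the normalization $\theta=s(G)^{-1}\sqrt{2r\log n/t}$ is precisely what calibrates this maximal shift to $a_n$ (note no appeal to $t\to\infty$ is needed, as the bound is deterministic). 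Hence the anomalous $\bar S_i$ has density $g_n = \int \phi_\sigma(\,\cdot-\mu)\,\mathrm d\Pi_n(\mu)$, a Gaussian location mixture with mixing measure $\Pi_n$ (the law of $R_i$) supported in $[0,a_n]$, where $\phi_\sigma$ is the $\cN(0,\sigma^2)$ density and $\phi$ the standard normal density. This makes the reduced problem an instance of univariate sparse-mixture detection with $n$ observations, sparsity $\varepsilon_n = n^{-\beta}$, and anomalous component $g_n$.

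Next I would run the standard second-moment argument. Placing each subject in $\cS$ independently with probability $\varepsilon_n$ (the usual least-favorable prior, with the fixed-size constraint $|\cS|=n^{1-\beta}$ handled in the customary way), the alternative is a product measure and the chi-square divergence factorizes: $\chi^2(\PP_1,\PP_0)+1 = \bigl(1+\varepsilon_n^2(I_n-1)\bigr)^n$ with $I_n \equiv \int g_n^2/\phi$. It suffices to show $n\varepsilon_n^2(I_n-1)\to0$, which forces the total variation distance to $0$ and the sum of the two error probabilities to $1$. The key step is a domination lemma: for $\sigma^2<2$ the Gaussian integral $\Psi(\mu_1,\mu_2)\equiv\int \phi_\sigma(x-\mu_1)\phi_\sigma(x-\mu_2)/\phi(x)\,\mathrm dx$ is explicit, and its exponent is a quadratic form in $(\mu_1,\mu_2)$ whose restriction to $[0,a_n]^2$ is maximized at the corner $(a_n,a_n)$; writing $I_n = \mathbb E_{\mu,\mu'\sim\Pi_n}[\Psi(\mu,\mu')]$ then yields $I_n\le \Psi(a_n,a_n)=\int \phi_\sigma(\,\cdot-a_n)^2/\phi$. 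The right-hand side is exactly the per-coordinate second moment for the extreme alternative with anomalies $\cN(a_n,\sigma^2)$, i.e.\ the heteroskedastic normal location model~\eqref{hyp:norm} with $a=a_n=\sqrt{2r\log n}$. For that model the lower bounds of \cite{TonyCai2011,Cai2014} give $n\varepsilon_n^2(\Psi(a_n,a_n)-1)\to0$ whenever $r<\rho(\beta,\sigma)$, and the domination transfers this to $I_n$, settling the case $\sigma<\sqrt2$.

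The hard part will be the regime $\sigma\ge\sqrt2$ (with $\beta>1-1/\sigma^2$, the only sub-regime where the claim is non-vacuous, since $\rho(\beta,\sigma)=0$ otherwise), where $I_n=\infty$ because the $\chi^2$ integral diverges. Here I would replace the plain second moment by the truncated version used in \cite{Cai2014}: restrict the likelihood ratio to the event that no $\bar S_i$ exceeds a slowly growing threshold $L_n$, control the truncated chi-square, and separately bound the probability that the alternative ever crosses $L_n$. The delicate point is re-establishing the domination at the level of the truncated integrals: truncation removes the right tail, so it is no longer immediate that concentrating the mixing mass at $a_n$ maximizes the truncated overlap, and one must verify that the single-atom configuration $\cN(a_n,\sigma^2)$ remains the most detectable on the truncated region. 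Once this truncated domination is in place, the bound again reduces to the truncated computation of \cite{Cai2014} for the normal location model, yielding powerlessness for $r<\rho(\beta,\sigma)$ throughout.
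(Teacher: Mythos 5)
Your reduction to the univariate mixture on the subject sums is the same first step the paper takes, and your corner-domination computation for $\Psi(\mu_1,\mu_2)$ is correct: the exponent of the Gaussian overlap integral is an indefinite quadratic whose maximum over $[0,a_n]^2$ is indeed attained at $(a_n,a_n)$, so the mixture is dominated by the point-mass alternative $\cN(a_n,\sigma^2)$. However, there is a genuine gap in the case you claim to have settled. The \emph{untruncated} chi-square bound $n\varepsilon_n^2\bigl(\Psi(a_n,a_n)-1\bigr)\to 0$ does \emph{not} hold throughout $r<\rho(\beta,\sigma)$ when $\sigma<\sqrt{2}$: with $a_n^2=2r\log n$ one gets $\Psi(a_n,a_n)\asymp n^{2r/(2-\sigma^2)}$, so $n\varepsilon_n^2(\Psi(a_n,a_n)-1)\to 0$ only when $r<(2-\sigma^2)(\beta-\tfrac12)$, i.e.\ only on the dense branch of the boundary. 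In the sparse regime $\beta>1-\sigma^2/4$ the boundary is $(1-\sigma\sqrt{1-\beta})^2>(2-\sigma^2)(\beta-\tfrac12)$, and in the gap between the two the plain chi-square divergence blows up even though tests are still powerless; truncation of the likelihood ratio is unavoidable there, exactly as it is for $\sigma\ge\sqrt2$. Since the truncated domination step --- verifying that the single atom at $a_n$ remains worst-case after restricting to the event $\max_i \bar S_i\le L_n$ --- is precisely the part you flag as unresolved, the proposal as written only proves the proposition on the dense branch.

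It is worth contrasting with the paper's route, which avoids this issue entirely: rather than bounding second moments, the paper computes the log-likelihood ratio of the mixture at the scale $u\sqrt{2\log n}$ and uses Lemma~3 of \cite{Cai2014} (a Laplace/essential-supremum argument over the bounded mixing variable $\bar Q_i$) to show that $\ell(u\sqrt{2\log n})$ has the same limiting exponent $u^2-((u-\sqrt r)/\sigma)^2$ as the pure location model with shift $\sqrt{2r\log n}$. Powerlessness for all $r<\rho(\beta,\sigma)$ then follows from Theorem~1 of \cite{Cai2014}, whose proof already packages the truncated second-moment machinery. If you want to complete your argument along your own lines, you would need to carry out the truncated chi-square computation and re-prove the corner domination for the truncated overlap integral; alternatively, replacing your domination lemma by the LLR-exponent comparison reduces the whole problem to a result that is already available.
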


The following quantity captures the limit of the variance of the anomalous signal after rank transform as in Definition~\ref{def:rank_trans_char} within the particularization~\eqref{hyp:conv}: 
\[
\xi_\sigma^2 \equiv \lim_{n\to\infty} \sigma_{n,i}^2 =  12\left(\E{\Phi\Big(\cN(0,\sigma^2)\Big)^2}-\frac{1}{4}\right) \ ,
\]
where $i\in\cS$. Letting $\mu(G)$ denote the mean of distribution $G$, the following proposition characterizes the loss due to ranks in this parametric family:

\begin{prp}\label{prp:rank-perf-conv}
Refer to the hypothesis testing problem in~\eqref{hyp:conv} with $G$ having support on $[0,1]$, and parameterization $|\cS|=n^{1-\beta}$ with $\beta \in(1/2,1)$ and $\theta=s(G)^{-1}\sqrt{2r\log(n)/t}$ with $r>0$ constant. Let $t = \smallOmega(\log(n))$. Then, the rank-based higher criticism test from Definition~\ref{def:rank_hc_stat} for arbitrary $\alpha\in (0,1)$ has power converging to one if
\[
r > \sqrt{\frac{s(G)}{\mu(G)}} \tilde \rho(\beta,\sigma) \ \text{ , where } \tilde\rho(\beta,\sigma) \equiv \sqrt{\tfrac{1}{6}\pi(\sigma^2+1)}\rho(\beta,\xi_\sigma) \ .
\]
\end{prp}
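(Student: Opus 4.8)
The plan is to specialize Theorem~\ref{th:rank_hc_test} to the convolution alternative~\eqref{hyp:conv}. Since the referentials are all standard normal, the rank-transform characteristics of Definition~\ref{def:rank_trans_char} are identical for every anomalous pair $(i,j)$, so it suffices to compute $\mu_{n,i}$ and $\sigma_{n,i}^2$ for a single anomalous variable $X = Z + \theta Q$ with $Z\sim\cN(0,\sigma^2)$ and $Q\sim G$ independent, through its transform $U = \Phi(X)$ (the null CDF $\Phi$ is continuous, so the tie-breaking in~\eqref{eq:def_Uij} is irrelevant). The argument then reduces to identifying the two limits $r_\star \equiv \lim_{n}\mu_{n,i}^2 t/(2\log n)$ and $\gamma^2 \equiv \lim_{n}\sigma_{n,i}^2$, after which the weak-signal branch, case~\ref{en:th1_weak}, yields power converging to one as soon as $r_\star > \rho(\beta,\gamma)$.

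The heart of the computation is a first-order expansion in the dilation $\theta = s(G)^{-1}\sqrt{2r\log(n)/t}$, which tends to $0$ under $t = \smallOmega(\log n)$. Writing $m(\theta) \equiv \E{\Phi(Z+\theta Q)}$, the symmetry of $Z$ gives $m(0) = 1/2$, while differentiating under the integral yields $m'(0) = \E{Q\,\phi(Z)} = \mu(G)\,\E{\phi(Z)}$ with the Gaussian convolution value $\E{\phi(Z)} = 1/\sqrt{2\pi(\sigma^2+1)}$. Hence $\mu_{n,i} = 2\sqrt{3}\,(m(\theta)-\tfrac12) = 2\sqrt{3}\,\mu(G)\,\theta\,(1+o(1))/\sqrt{2\pi(\sigma^2+1)}$, and substituting the parameterization of $\theta$ pins down $r_\star$ as an explicit multiple of $r\,\mu(G)^2/\bigl(s(G)^2(\sigma^2+1)\bigr)$. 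For the heteroskedasticity, $\sigma_{n,i}^2 = 12\,\Var{\Phi(Z+\theta Q)}$; since $\Phi(Z+\theta Q)\to\Phi(Z)$ boundedly as $\theta\to0$, dominated convergence gives $\gamma^2 = 12\,\Var{\Phi(Z)} = 12\bigl(\E{\Phi(Z)^2}-\tfrac14\bigr) = \xi_\sigma^2$, so $\gamma = \xi_\sigma$.

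With both limits in hand, case~\ref{en:th1_weak} of Theorem~\ref{th:rank_hc_test} gives power converging to one whenever $r_\star > \rho(\beta,\xi_\sigma)$; solving this inequality for $r$ and absorbing the constants into $\tilde\rho(\beta,\sigma) = \sqrt{\tfrac16\pi(\sigma^2+1)}\,\rho(\beta,\xi_\sigma)$ produces a sufficient condition of the stated form. Before invoking the theorem I would check its hypotheses: $\abs{\cS} = n^{1-\beta}$ with $\beta\in(1/2,1)$ matches $\abs{\cS}=n^{1-\beta+o(1)}$ with $\beta>1/2$; the lower rate $t = \smallOmega(\log n)$ is assumed; and $k_n$ is a free design parameter taken with $k_n\to\infty$, $k_n=n^{o(1)}$. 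Because $\theta\to0$ forces $\mu_{n,i}\to0$ while $r_\star$ remains a finite constant, the subjects occupy a single bucket of the partition~\eqref{eq:def_partition_anomalies}: if $r_\star<1$ they are weak and case~\ref{en:th1_weak} is the binding one, whereas if $r_\star\geq1$ they are boundary or strong and the conclusion follows a fortiori from cases~\ref{en:th1_strong}--\ref{en:th1_boundary} (which hold trivially here, since $\abs{\cS}=n^{1-\beta}=\omega(\nu_n)$ and $\abs{\cS}=\bigO(\sqrt n)$).

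The step I expect to be the main obstacle is controlling the Taylor remainder uniformly along the sequence $\theta=\theta_n\to0$: one must show the $O(\theta^2)$ curvature term in $m(\theta)$ is asymptotically negligible against the $O(\theta)$ term, so that $r_\star$ is determined exactly by $m'(0)$. This calls for a quantitative bound on $m''$ uniform over the relevant range of $\theta$, exploiting $|Q|\le 1$ and the boundedness of $\phi$ and $\phi'$. By comparison, the variance limit is a routine dominated-convergence argument, and checking the partition and rate conditions is immediate once $\theta\to0$ is established. A secondary bookkeeping matter is reconciling the explicit constants $\mu(G)$, $s(G)$, and the Gaussian factor $\sqrt{2\pi(\sigma^2+1)}$ with the compact form of $\tilde\rho$ and the loss factor $\sqrt{s(G)/\mu(G)}$.
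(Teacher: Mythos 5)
Your proposal is correct and follows essentially the same route as the paper's proof: a first-order Taylor expansion of $\E{U_{ij}}$ in $\theta$ (the paper computes the same quantity as $\E{\Phi\big(\theta Q_{ij}/\sqrt{\sigma^2+1}\big)}$ rather than differentiating $\E{\Phi(Z+\theta Q)}$ under the integral, but both yield $\mu_{n,i}=2\sqrt{3}\,\theta\mu(G)/\sqrt{2\pi(\sigma^2+1)}\,(1+o(1))$), a limit computation showing $\sigma_{n,i}^2\to\xi_\sigma^2$ (the paper sandwiches $\E{U_{ij}^2}$ using $0\le Q\le 1$ where you invoke dominated convergence), followed by case~\ref{en:th1_weak} of Theorem~\ref{th:rank_hc_test}. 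Both your derivation and the paper's land on the sufficient condition $\frac{6\mu(G)^2 r}{s(G)^2\pi(\sigma^2+1)}>\rho(\beta,\xi_\sigma)$, so the final ``absorbing the constants'' step you leave implicit is exactly the same step the paper's proof glosses with ``which implies the result''; your extra care about which bucket of the partition~\eqref{eq:def_partition_anomalies} the anomalies fall into is a welcome addition the paper omits.
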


The proof is given in Section~\ref{sec:rank-perf-conv}. A visualization of $\tilde\rho(\beta,\sigma)$ is given in Figure~\ref{fig:detection_boundary-rank}. Note that the results above imply that for some type of signals, the rank based test may have power converging to one while no test based on subject means exists which is asymptotically powerful! This is due to the variance after rank transform $\xi_\sigma$ being larger than $\sigma$ when the Gaussian component of the alternative is sufficiently underdispersed with respect to the null noise. Essentially, the ranks ``push-out'' the relatively low-variance signals, which eases detection. Referring to Figure~\ref{fig:detection_boundary-rank}, if $G$ is a point mass at 1, then the rank-based higher criticism test from Definition~\ref{def:rank_hc_stat} is powerful when $r$ is in the red region, but in that region no asymptotically powerful test based on subject means exists. However, for larger $\sigma$, we can expect rank-based tests to perform poorer than their oracle counterparts, as they cannot leverage heteroskedasticity as efficiently. Note that these remarks concern the best possible test based on subject means, and not the overall best possible test.

\begin{figure}[htb]
\centering
\begin{subfigure}{0.38\textwidth}
\hspace{0.4cm}\includegraphics[width=\linewidth]{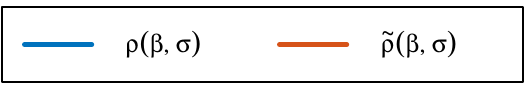}\vspace{0.3cm}
\end{subfigure}

\begin{subfigure}{0.32\textwidth}
\includegraphics[width=\linewidth]{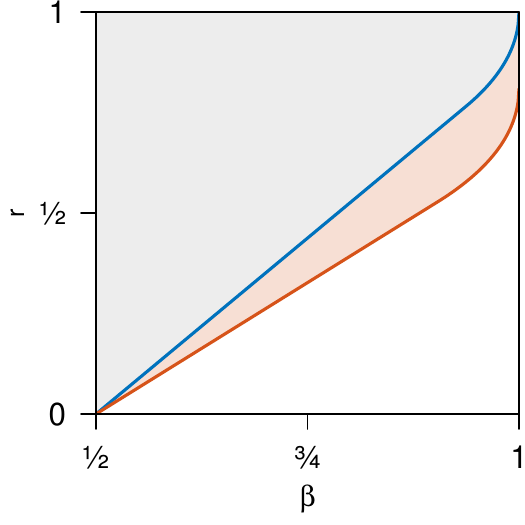}
\caption{Boundary for $\sigma = 0.5$.}
\end{subfigure}
\begin{subfigure}{0.32\textwidth}
\includegraphics[width=\linewidth]{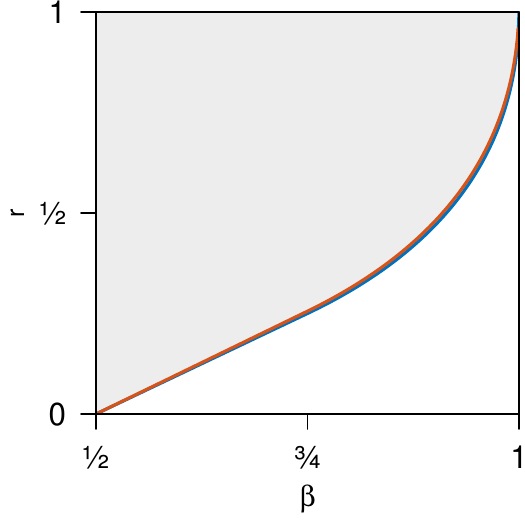}
\caption{Boundary for $\sigma = 1$.}
\end{subfigure}
\begin{subfigure}{0.32\textwidth}
\includegraphics[width=\linewidth]{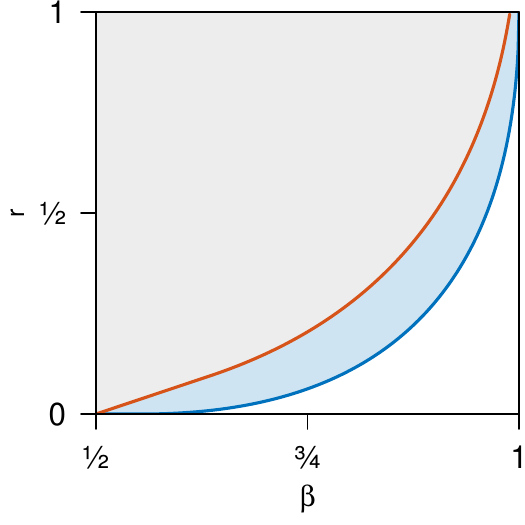}
\caption{Boundary for $\sigma = 1.5$.}
\end{subfigure}
\caption{The detection boundary $\rho(\beta,\sigma)$ and $\tilde\rho(\beta,\sigma)$ for different values of~$\sigma$.}\label{fig:detection_boundary-rank}
\end{figure}

\section{Simulation study}\label{sec:sim-settings}

In this section we conduct several numerical simulations to gain further insight and illustrate some of the theoretical results. This furthermore allows us to give practical recommendations regarding our methodology. Though our methodology is more flexible, we only consider settings where the null and alternative distributions are identical for each referential $j\in[t]$ for clarity and to allow comparison with other works.

To confirm the theoretical results established in Proposition~\ref{prp:rank-perf-exp} and Proposition~\ref{prp:rank-perf-conv}, we consider the settings described in Table~\ref{tb:sim-settings1} and Table~\ref{tb:sim-settings2} respectively.

\begin{table}[htb]
\centering
\begin{tabular}{@{}llll@{}}
\toprule
Name & $F_0$                            & $F_\theta$                                       								& $\Upsilon_0$ \\ \midrule
Normal        & $\cN(0,1)$                                & $\cN(\theta,1)$                                            	& $\sqrt{\pi/3}$ \\
Exponential   & $\text{Exponential}(\tfrac{3}{2})$ (rate) & $\text{Exponential}(\tfrac{3}{2}-\theta)$ (rate)          	& $\tfrac{2}{3}\sqrt{3}$\\
Uniform       & $\text{Uniform}(0,1)$                     & $F_\theta(x) = \frac{\exp{\theta x} - 1}{\exp{\theta}-1}$ 	& 1 \\ \bottomrule
\end{tabular}
\caption{Simulation settings fitting the exponential hypothesis testing scenario~\eqref{hyp:exp} considered. The third column contains the distributions from the exponential family with $F_0$ as the base measure as described in Section~\ref{sec:exp-family}. The derivation of the values of $\Upsilon_0$ are provided in Appendix~\ref{app:upsilon}.}\label{tb:sim-settings1}
\end{table}

\begin{table}[htb]
\centering
\begin{tabular}{@{}lll@{}}
\toprule
Name 						& $G$   															& $\zeta_G(\beta,\sigma)$ 			\\ \midrule
Normal($\sigma$)		& Point mass at 1    											& $\sqrt{\frac{\pi(\sigma^2+1)\rho(\beta,\xi_\sigma)}{6\rho(\beta,\sigma)}}$				 		\\
Triangular($\sigma$)   				& Triangular distribution on $[0,1]$ with mode at $\tfrac{1}{2}$ 	& $\sqrt{\frac{2\pi(\sigma^2+1)\rho(\beta,\xi_\sigma)}{3\rho(\beta,\sigma)}}$          			\\ \bottomrule
\end{tabular}
\caption{Simulation settings fitting the normal convolution hypothesis testing scenario~\eqref{hyp:conv} considered, where in the remainder we indicate within parentheses the value of $\sigma$. Quantity $\zeta_G$ is defined in~\eqref{eq:def_zetaG}. Note that Normal($\sigma=1$) corresponds to the first setting of Table~\ref{tb:sim-settings1}.}\label{tb:sim-settings2}
\end{table}

In the settings of Table~\ref{tb:sim-settings1} corresponding to the exponential family setting~\eqref{hyp:exp}, we further parameterize $\theta$ using a parameter $\tau$ as in Theorem~\ref{th:exp-lowerbound} and Proposition~\ref{prp:rank-perf-exp}:
\begin{equation}\label{eq:theta-param-exp}
\theta_\tau = \tau\sqrt{\frac{2\rho(\beta,1)\log(n)}{\sigma_0^2 t}} \ .
\end{equation}

Similarly, in the settings of Table~\ref{tb:sim-settings2} corresponding to the convolution setting~\eqref{hyp:conv}, we further parameterize $\theta$ using a parameter $\tau$ as:
\begin{equation}\label{eq:theta-param-conv}
\theta_\tau=\tau \cdot s(G)^{-1}\sqrt{\frac{2\rho(\beta,\sigma)\log(n)}{t}} \ .
\end{equation}
Note that if $\sigma < \sqrt{2}$ then under the parameterization above Proposition~\ref{prp:means-perf-conv} implies no test based on stream means is asymptotically powerful when $\tau < 1$, and the rank-based test is asymptotically powerful (under suitable restrictions of $k_n$ and $t$) if
\begin{equation}\label{eq:def_zetaG}
\tau > \zeta_G(\beta,\sigma) \equiv \sqrt{\frac{\pi(\sigma^2+1)s(G)^{2}\rho(\beta,\xi_\sigma)}{6\mu(G)^2\rho(\beta,\sigma)}} \ .
\end{equation}
Finally, we consider a setting involving Cauchy distributions to highlight how our test performance depends on the parameters $\mu_{n,i}$ and $\sigma_{n,i}$, and not on the original distribution mean and variance. Specifically, referring to the general hypothesis test~\eqref{hyp:general} with nonvarying referentials, we particularize $F_{0j} = F_0$ to a Cauchy distribution with location parameter 0 and scale parameter 1, and for $i\in\cS$ we particularize $F_{ij} = F_i$ to a Cauchy distribution with location parameter $\theta$ and scale parameter 1. We further parameterize $\theta$ using $\tau$ as:
\begin{equation}\label{eq:theta-param-cauchy}
\theta_\tau = \tau\cdot\pi \sqrt{\frac{2\rho(\beta,1)\log(n)}{3t}} \ ,
\end{equation}
such that Theorem~\ref{th:rank_hc_test} implies the rank test from Definition~\ref{def:rank_hc_stat} has power converging to one if $\tau > 1$. Details are deferred to Appendix~\ref{app:sim-cauchy}.

\subsection{Grid choice}\label{sec:exp-grid}
The grid in~\eqref{eq:def_original_grid} is sufficient for the asymptotic power guarantees given by Theorem~\ref{th:rank_hc_test}. In finite samples, however, there may be evidence against the null hypothesis that is most clearly observed outside of this grid. Specifically, for extremely strong anomalies, the discrepancy between $N_q(\bR)$ and $np_q$ may be highest at a value $q > 2$. To avoid disregarding this evidence, we introduce an extension of the grid in~\eqref{eq:def_original_grid} that includes a point such that $N_q(\bR) = 0$. Since $Y_i(\bR) \leq n$, we have that
\[
\frac{Y_i(\bR) - \overline R}{\sigma_R} \leq \frac{n - \frac{n+1}{2}}{\sqrt{\frac{n^2-1}{12}}} \leq \sqrt{3} \ .
\]
Therefore, if $q > 3t/2\log(n)$, then $N_q(\bR) = 0$. We thus extend the grid in~\eqref{eq:def_original_grid} as:
\begin{equation}\label{eq:extended-grid}
\tilde Q_n \equiv \left\{\frac{1}{k_n},\frac{2}{k_n},\dots, \frac{3t}{2\log(n)} \right\} \ .
\end{equation}
Note that $|\tilde Q_n| = \bigO(k_nt/\log(n))$. In the proof of Theorem~\ref{th:rank_hc_test}, it was critical that the gridsize~$k_n$ was subpolynomial. The extended grid is no longer of this size unless we restrict $t=n^{o(1)}$. As we have previously remarked, we conjecture that the upper bound requirement of the gridsize seems an artifact of the proof. Our numerical experiments do not show deteriorating performance when the gridsize is increased, and therefore we use the extended grid in~\eqref{eq:extended-grid} in subsequent experiments.

This leaves us to choose an appropriate value of $k_n$. From a practical standpoint a smaller grid resolution allows for a faster implementation and it is therefore useful to investigate what value of $k_n$ is sufficient. To recommend a choice of $k_n$, we compare the performance of the test for different values of $k_n$. Figure~\ref{fig:grid} depicts the results for three of the simulation settings given in Table~\ref{tb:sim-settings1} and Table~\ref{tb:sim-settings2}, with $n=10^3$ and $t=\ceil{\log(n)} = 7$, and sparsity $|\cS| = \ceil{ n^{1-0.85}} = 3$. Here, we observe that $k_n = \ceil{\log^2(n)}$ is sufficient. Supplemental results given in Appendix~\ref{app:sim-grid} indicate this choice is adequate for other settings as well, and we therefore use this choice for the remainder of the numerical experiments.

\begin{figure}
\centering
\begin{subfigure}{0.6\textwidth}
    \includegraphics[width=\textwidth]{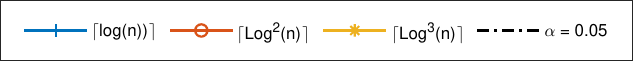}\vspace{0.3cm}
\end{subfigure}

\centering
\begin{subfigure}{0.32\textwidth}
    \includegraphics[width=\textwidth]{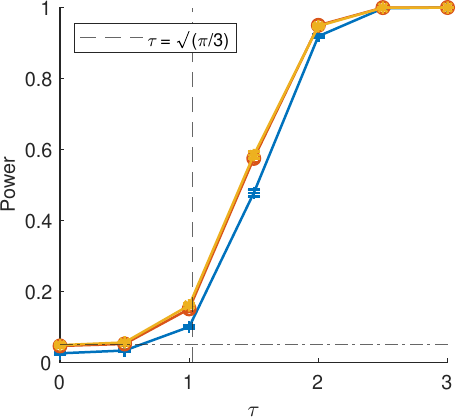}
    \caption{Normal}
\end{subfigure} 
\begin{subfigure}{0.32\textwidth}
    \includegraphics[width=\textwidth]{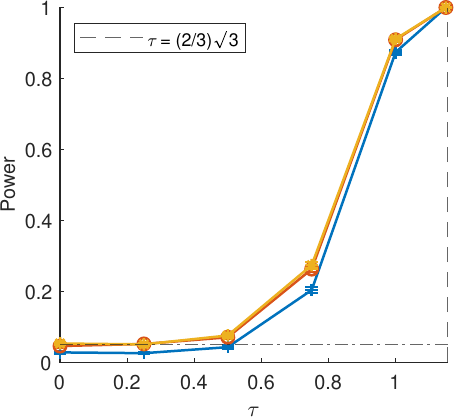}
    \caption{Exponential}
\end{subfigure}
\begin{subfigure}{0.32\textwidth}
    \includegraphics[width=\textwidth]{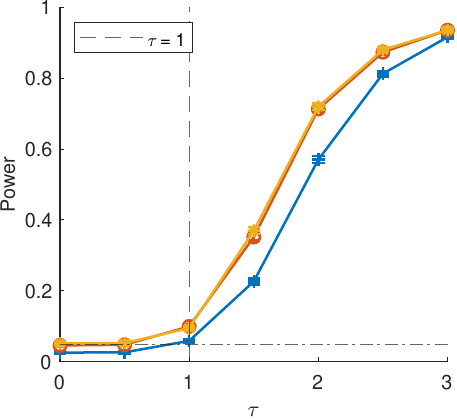}
    \caption{Cauchy}
\end{subfigure} 
\caption{Simulated power at $5\%$ significance for the rank test of Theorem~\ref{th:rank_hc_test} under different choices of $k_n$ as in~\eqref{eq:extended-grid} as a function of the signal strength. The subcaptions refer to the settings of Table~\ref{tb:sim-settings1} and Table~\ref{tb:sim-settings2}, with relevant parameters further parameterized using $\tau$ as in Equations~\eqref{eq:theta-param-exp},~\eqref{eq:theta-param-conv},~\eqref{eq:theta-param-cauchy}. In the above, $n=1000$, $t=7$ and $\lvert\cS\rvert = 3$. Monte-Carlo simulation for the null statistics is based on $10^5$ samples. Each signal level was repeated $10^4$ times, leading to the $95\%$ confidence bars depicted.}
\label{fig:grid}
\end{figure}

\subsection{Simulation results}\label{sec:sim}
We compare the performance of our methodology with three relevant alternatives; the permutation test as described in \cite{stoepker2021} and two tests calibrated using information on the involved distributions; a test based on a higher criticism statistic involving the subject means, and the Friedman test.

\myparagraph{Permutation test from \cite{stoepker2021}} We use the extended data-dependent grid with resolution $\log(n)$ as recommended in that work. Unlike our rank-based methodology, the validity of the inference from this permutation-based test hinges crucially on the assumption of identical referentials. This permutation test is therefore not a suitable alternative to our rank-based methodology in all settings --- the latter is more flexible. Since we have constrained ourselves to simulation settings with identical referentials (i.e.~$F_{0j}=F_0$), this permutation test is also applicable which allows for a meaningful comparison. Nevertheless, when contrasting the upcoming numerical results one should account for the relative flexibility of the rank-based methodology.

\myparagraph{Distribution-aware higher criticism test} The distribution-aware higher criticism test depends on the null distribution. We use the same form of the higher criticism statistic as we used to define the methodology of Theorem~\ref{th:rank_hc_test} and base the statistic on the subject means. Let $\mu_0$ and $\sigma_0$ be the null mean and variance respectively (when it exists), and define:
\[
T_{\text{dist-hc}}(\bX) = \max_{q\in Q(\bX)}\left\{ \frac{ \sum_{i\in[n}\ind{ (Y_i(\bX)-\mu_0)/\sigma_0 \geq \sqrt{2q\log(n)/t} } - nw_q}{\sqrt{nw_q(1-w_q)}} \right\},
\]
where $Y_i(\bX) = \frac{1}{t}\sum_{j\in[t]} X_{ij}$ denote the subject means and $Q(\bX) = \{\frac{1}{k_n},\frac{2}{k_n},\dots,(M(\bX)^2-\mu_0)t/(2\sigma_0\log(n)) \}$ with $M(\bX) = \frac{1}{t}\sum_{j\in[t]} \bX_{(j)}$ where $\bX_{(j)}$ denotes the $j$th overall order statistic of the data (i.e.~$M(\bX)$ is the average of the $t$ largest observations), and finally
\[
w_q = \Phn{\frac{Y_i(\bX)-\mu_0}{\sigma_0} \geq \sqrt{\frac{2q\log(n)}{t}}} \ .
\]
For the simulations, we use $k_n = \log(n)^2$. For the Cauchy distribution, the first two moments do not exist. Instead, the null median is used for $\mu_0 = 0$ and the null scale parameter is used to set $\sigma_0 = 1$.

\myparagraph{Friedman test \citep{Friedman1937}} Monte-Carlo simulation from the true null distribution is used to calibrate the test.

We investigate and compare the performance of the test of Theorem~\ref{th:rank_hc_test} under varying signal strengths in the exponential family settings of Table~\ref{tb:sim-settings1}, the convolution settings of Table~\ref{tb:sim-settings2}, and the Cauchy setting described in Section~\ref{sec:sim-settings} for various sample sizes. The results are given in Figure~\ref{fig:power}. Supplemental results are provided in Appendix~\ref{app:sim-perf}. The results of the permutation test are not computed for the settings where $n=10^4$ due to computational costs.

\begin{figure}
\centering
\begin{subfigure}{0.6\textwidth}
    \includegraphics[width=\textwidth]{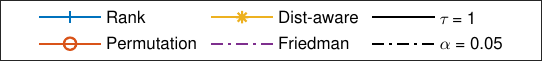}\vspace{0.3cm}
\end{subfigure}

\begin{subfigure}{0.32\textwidth}
    \includegraphics[width=\textwidth]{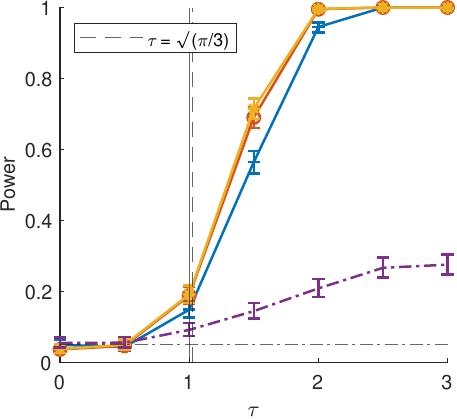}
    \caption{Normal($\sigma = 1$)}\label{fig:power-first-smalln}
\end{subfigure}
\begin{subfigure}{0.32\textwidth}
    \includegraphics[width=\textwidth]{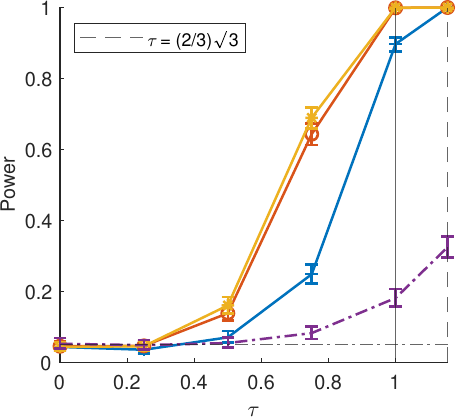}
    \caption{Exponential}
\end{subfigure} 
\begin{subfigure}{0.32\textwidth}
    \includegraphics[width=\textwidth]{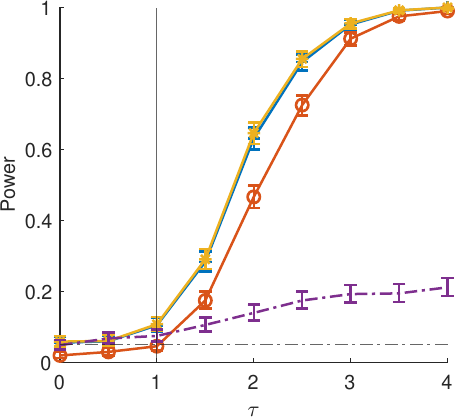}
    \caption{Uniform}
\end{subfigure}
\begin{subfigure}{0.32\textwidth}
    \includegraphics[width=\textwidth]{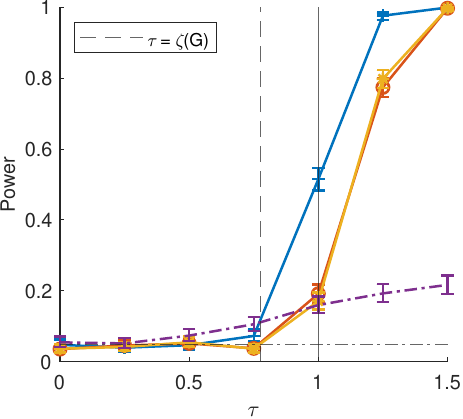}
    \caption{Normal($\sigma = \tfrac{1}{2}$)}
\end{subfigure}
\begin{subfigure}{0.32\textwidth}
    \includegraphics[width=\textwidth]{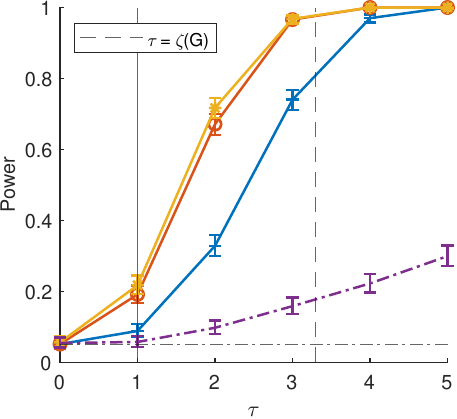}
    \caption{Normal($\sigma = \tfrac{3}{2}$)}
\end{subfigure} 
\begin{subfigure}{0.32\textwidth}
    \includegraphics[width=\textwidth]{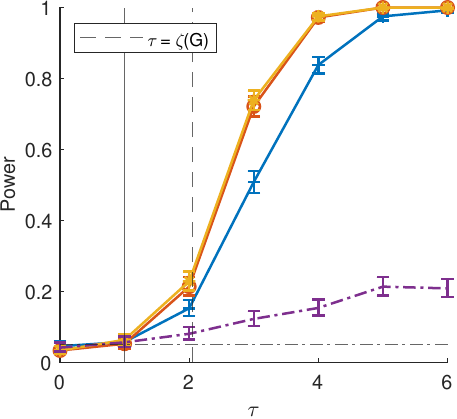}
    \caption{Triangular($\sigma = 1$)}
\end{subfigure}
\begin{subfigure}{0.32\textwidth}
    \includegraphics[width=\textwidth]{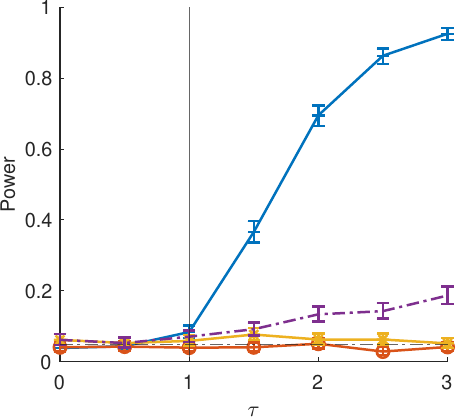}
    \caption{Cauchy}\label{fig:power-last-smalln}
\end{subfigure}  
\begin{subfigure}{0.32\textwidth}
    \includegraphics[width=\textwidth]{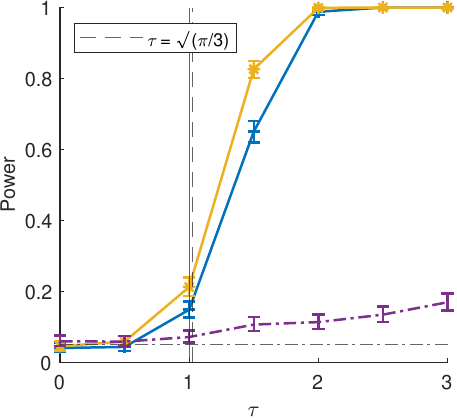}
    \caption{Normal($\sigma=1$)}\label{fig:power-first-bign}
\end{subfigure}  
\begin{subfigure}{0.32\textwidth}
    \includegraphics[width=\textwidth]{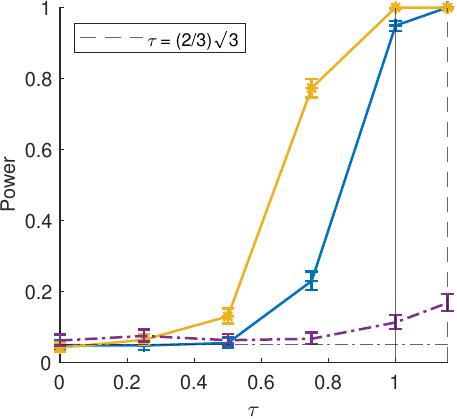}
    \caption{Exponential}\label{fig:power-last-bign}
\end{subfigure}
\caption{Simulated power at $5\%$ significance for the rank test of Theorem~\ref{th:rank_hc_test} and the three tests described in Section~\ref{sec:sim}, as a function of the signal strength. The subcaptions refer to the settings described in Section~\ref{sec:sim-settings} with signals further parameterized as in~\eqref{eq:theta-param-exp},~\eqref{eq:theta-param-conv} and~\eqref{eq:theta-param-cauchy}. Figures~\ref{fig:power-first-smalln}--\ref{fig:power-last-smalln} concern $n=10^3, \ t=7$ and $|\cS| = 3$. Figures~\ref{fig:power-first-bign} and~\ref{fig:power-last-bign} concern $n=10^4, \ t=10$ and $|\cS| = 4$. Monte-Carlo simulation for the null statistics is based on $10^5$ samples. The permutation test was executed with $10^3$ permutations for each simulation. Each signal level was repeated $10^3$ times, leading to the $95\%$ confidence bars depicted.}\label{fig:power}
\end{figure}

On the whole, the theoretical asymptotic performance gaps from Proposition~\ref{prp:rank-perf-exp} and Proposition~\ref{prp:rank-perf-conv} closely match what we observe through simulation. However, for the convolution setting with $G$ particularized to a triangular distribution, the theoretical gaps are not so closely observed in this finite sample. Although Proposition~\ref{prp:means-perf-conv} indicated that the only driving force of distribution $G$ is its essential supremum, since this notion is true only in an asymptotic sense, this is not observed in our experimental setup. Therefore, the factor in the performance gap $s(G)/\mu(G)$ from Proposition~\ref{prp:rank-perf-conv} gives rise to a larger performance gap between the best possible test (based on subject means) and our rank-based test than what is observed in practice.

The superior performance of the rank-based test compared to the means-based oracle test for underdispersed signals as predicted by Proposition~\ref{prp:rank-perf-conv} is observed as well. This is thus not merely an asymptotic peculiarity, but rather a reason to consider this test apart from its flexibility. 

Perhaps surprisingly, for the uniform setting the rank-based test performs better (and not on-par) with the permutation test. However, within this setting the rank-based test is essentially an oracle test, so this is not unexpected.

The results of the Cauchy setting confirm that our (asymptotic) testing performance only depends on the quantities $\mu_{n,i}$ and $\sigma_{n,i}$, and show how the ``regularizing'' effect of the ranks can be useful. The permutation and distribution-aware tests are naturally weak, as they are based on subject means. As the Cauchy distribution has heavy-tails these are rather uninformative summary statistics. Conversely, the rank-transformation prior to averaging in our methodology provides power in our methodology. Essentially, if large observations are informative (such as is the case for the exponential distribution setting) then using ranks incurs a loss of information and thus power; conversely, if large observations are uninformative (such as is the case for the Cauchy distribution) then using ranks regularizes and increases power.

Contrasting the performance of our rank-based test of Definition~\ref{def:rank_hc_stat} with that of Friedman test, the unsuitability of the latter is quite apparent, particularly for large samples.

\section{Application to quality control in pharmaceutical manufacturing}\label{sec:application}
To illustrate how our methodology may be applied in practice, we examine pharmaceutical manufacturing data of a batch-produced drug. This industry is subject to strict regulations and demands rigorous quality monitoring. Therefore, prompt notification of possible production issues is crucial, and the proposed anomaly detection methodology is valuable in this regard.

\myparagraph{Description of the data and measurements used}
We consider the laboratory data from \cite{Zagar2022}, which contain quality control measures of production batches at three production phases; before production on the incoming raw materials, at an intermediate production step, and after finalization of the drug. The data consist of measurements on a total of 25 different subfamilies of the drug. The quality control measures all exist within their own referential. For example, one measure corresponds to the active content of the drug released in 30 minutes time, while another measure corresponds to the number of impurities in the drug. The goal is to detect the presence of a subset of batches for which these quality control measures are distributed differently from the remaining batches. Our methodology is flexible and can jointly deal with these diverse referentials in a nonparametric manner.

We have selected two groups of two or more quality measurements which will be used for the purposes of anomaly detection over the batches. Since our calibration is only exact if there are no dependencies under nominal circumstances, we only consider groups of quality measures for which this is a reasonable assumption given the described context. 

To avoid dependencies, firstly we only consider quality control measurements within a single production phase. Secondly, some measurements within a single phase may be highly dependent, such as the minimum, average, and maximum weight of the tablets. In such cases, we select the averages. Thirdly, some measurements are done both before and after a film is applied, and in such cases we select the measurement based on the film covered drugs. Finally, some measurements may be dependent through latent variables, such as weight and thickness of the tablets. In such cases, we have selected a subset of the remaining measurements. 

To avoid incorporating missing values, we exclude some measures where a considerable amount of missing values is present. Finally, measurements on the raw material phase contain a considerable number of ties. While our methodology is exact under ties, one can expect signals within such measurements to have little power and we thus exclude them. The final two groups of measurements we consider are given in Table~\ref{tb:measurements}. 

\begin{table}[htb]
\centering
\begin{tabular}{l@{\hskip 1cm}p{8cm}@{\hskip 1cm}l}
\toprule
\textbf{Phase} & \textbf{Measurements considered}                                                                       & \textbf{Identifier} \\ \midrule
Intermediate   & • Film-coated tablet weight relative standard deviation measured after coating process.                                          & fct\_rsd\_weight    \\
               & • Average film-coated tablets hardness measured after coating; in Newtons.                  & fct\_av\_hardness   \\ \midrule
Final product  & • Drug release from final tablet in defined time: average (calculated) \% of API released in 30 minutes. & dissolution\_av     \\
			   & • Residual solvent content in final product measured with gas chromatography (GC) method.				& residual\_solvent \\
               & • Total impurities content in final product measured with HPLC method.                                   & impurities\_total  \\ \bottomrule
\end{tabular} \caption{Considered measurements in the case study of Section~\ref{sec:application}.}\label{tb:measurements}
\end{table}

\myparagraph{Comparison with normality-based approach}
We contrast our methodology with an approach based on an assumption of normality of the observations. For each measurement, we center the observations using their sample mean and standardize them using the sample standard deviations. The resulting observations are then summed and the parametric higher criticism is applied to their sums. A selection of the resulting subject means are provided in Figure~\ref{fig:application-histogram}, and supplemental figures are provided in Appendix~\ref{app:supp-app}.

\begin{figure}[htb]
\centering
\begin{subfigure}{0.32\textwidth}
    \includegraphics[width=\textwidth]{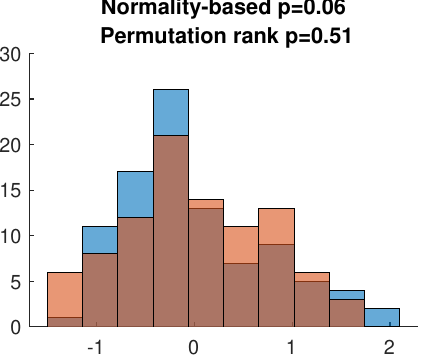}
    \caption{Subfamily 1, intermediate.}
\end{subfigure}
\begin{subfigure}{0.32\textwidth}
    \includegraphics[width=\textwidth]{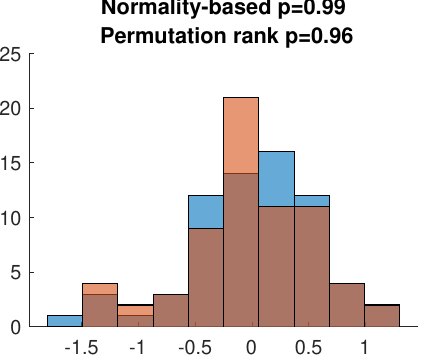}
    \caption{Subfamily 21, intermediate.}
\end{subfigure}
\begin{subfigure}{0.32\textwidth}
    \includegraphics[width=\textwidth]{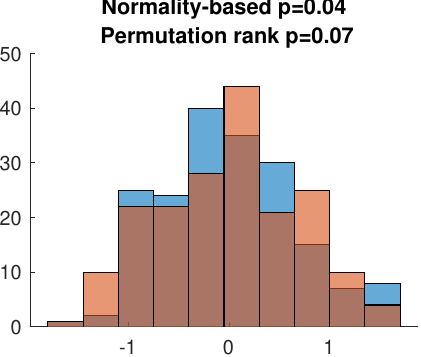}
    \caption{Subfamily 23, final product.}
\end{subfigure}
\caption{Histogram of (normalized) subjects means. The blue bars represent subject means based on standardized observations, the red bars represent subject means based on standardized observation fractional ranks. }\label{fig:application-histogram}
\end{figure}

\myparagraph{Dealing with ties}
Using our test with random tie-breaking comes with the guarantees in Theorem~\ref{th:rank_hc_test}, particularly conservativeness and near-exactness of the test. However, the randomness of the tie-breaking procedure results in a random $p$-value which may be undesirable in practice. Dealing with ties is a common challenge in rank-based testing. Various starting points to adapt rank-based tests for ties have been discussed in the literature: for example in Section 5.6 in \cite{Gibbons2003}. A common approach to avoid the test outcome being random is through the use of \textit{midranks} (see, for example, Section 1.4 in \cite{Lehmann1975}, sometimes referred to as fractional ranks or average ranks) rather than random tie-broken ranks. Midranking amounts to assigning groups of tied observations the expected value of the rank under random tie breaking. We suggest two avenues in which midranks can be used in our testing approach; a near-exact approach, and a naive approach.

The first approach computes the value of $N_q(\bR)$ based on the midranks. Estimating the value of $p_q$, as well as the calibration of the statistic, is done through column-wise permutation. This comes at a higher computational cost due to the calibration no longer depending on merely the sample size. One can prove conservativeness of such a methodology by arguing that the set of columnwise permutations (defined in the proof of Theorem~\ref{th:rank_hc_test} in Equation~\eqref{eq:def-permset}) forms a group with operation the composition of permutations --- see Theorem 15.2.1 in \cite{MR2135927}.

The second approach likewise computes the value of $N_q(\bR)$ based on the midranks. However, estimating the value of $p_q$, as well as calibration of the statistic, is done using the null distribution assuming ties are broken at random. This second approach comes without calibration guarantees, but no higher computational costs are incurred. 

As it turns out, in the present situation the $p$-values that result from these two approaches are similar.

\myparagraph{Results}
We apply our methodology on data of each drug subfamily separately and use the sets of measurements reported in Table~\ref{tb:measurements}. We only consider drug subfamilies whenever the data concerns more than 100 production batches with no missing data over the measurements. This means we ultimately consider the drug subfamilies 1, 13, 15, 17, 21, and 23, as labeled in the data source. The results are given in Table~\ref{tb:p-val-res}.

\begin{table}[htb]
\centering
\begin{tabular}{l|lll|lll}
\toprule
\textbf{Subfamily} & \textbf{Intermediate} & \textbf{}      & \textbf{}         & \textbf{Final}       &                &                   \\
\textbf{}               & \textit{Permutation}  & \textit{Naive} & \textit{Normality} & \textit{Permutation} & \textit{Naive} & \textit{Normality} \\ \midrule
1                       & 0.51                  & 0.45           & 0.06              & 0.15                 & 0.12           & 0.08              \\
13                      & 0.83                  & 0.91           & 0.27              & 0.74                 & 0.92           & 0.04              \\
15                      & 0.92                  & 0.89           & 0.22              & 0.65                 & 0.88           & 0.56              \\
17                      & 0.54                  & 0.69           & 0.54              & 0.47                 & 0.79           & 0.12              \\
21                      & 0.96                  & 0.99           & 0.99              & 0.90                 & 0.93           & 0.19              \\
23                      & 0.53                  & 0.64           & 0.37              & 0.07                 & 0.05           & 0.04  \\ \bottomrule           
\end{tabular}\caption{Resulting $p$-values from application of our rank-based methodology with fractional ranks and the higher-criticism statistic assuming normality, on the measurements of Table~\ref{tb:measurements} separately for each drug family. The $p$-values of our rank-based approach are computed based on calibration by permutation (``\textit{Permutation}'') or using the null distribution under random tie-breaking (``\textit{Naive}''), both based on $10^4$ samples. Values are rounded to two decimal places.}\label{tb:p-val-res}
\end{table}

Our test indicates potential anomalous behavior based of the final measurements of the subfamily with code 25, and based on this finding it would be advisable to further investigate this set of batches for production anomalies. Discovery of evidence for potential anomalous behavior in the final measurements is in line with comments made by the authors of the dataset \citep{Zagar2022}, who note that for the quality measures at the final production stage, the deviations from the averages (when considering all products simultaneously) are largest among all quality measures.

\myparagraph{Discussion}
Note that our methodology does not identify the subset of anomalous batches directly. Identification of anomalies comes with its own set of challenges and typically requires stronger signals for reliable detection \citep{Donoho2004}. It may be natural to consider the subject-level $p$-values as:
\[
p_i(\br) = \Phn{ Y_i(\bR) \geq Y_i(\br)} \ ,
\]
where $\br$ is the currently observed (fixed) data in the expression above. We resume this discussion in Section~\ref{sec:discussion}.

\begin{rem}
Note that our methodology is one-sided, and we principally have monitored for large values in all measurements. In case one would prefer to monitor low values instead, one could compute the ranks on the observations after multiplication by $-1$. A two-sided version (which we have not analyzed) could be constructed by computing ranks of the observations after centering and computing absolute values, or by suitably combining two one-sided tests.
\end{rem}

\section{Discussion}\label{sec:discussion}
In this paper, we have introduced a rank-based higher criticism statistic. Among others, it is useful as a high-dimensional extension of the classical Friedman test setting. The characterization of the regimes under which the test has asymptotic power converging to one were given in a nonparametric way and are easily particularizable to common parametric models. The underlying analysis requires careful treatment of rank-induced dependencies such that a sharp characterization of the tails of the rank distributions could be obtained. A surprising connection with a parametric testing scenario was drawn, showing that under moderate amount of referentials the testing dynamics boil down to those of the heteroskedastic normal location model. Finally we have shown that particularization to relevant parametric families allows us to elegantly capture the difference in power between the best possible test and our rank-based test. Perhaps surprisingly, a test with distributional knowledge based on subject means may actually perform worse than our rank-based test in this setting.

The one-parameter exponential family was considered in Section~\ref{sec:exp-family} as a particularization of our hypothesis test~\eqref{hyp:general}. This family has a drawback of only parameterizing signal magnitude and not heteroskedasticity. Although a two-parameter exponential family would allow us to parameterize both these quantities, finding a sharp lower bound result analogous to Theorem~\ref{th:exp-lowerbound} is far from trivial. Note that in our context it is most informative to present lower bound results on the best possible test solely observing subject sums, as our rank-based methodology is based solely on subject rank sums (as discussed in Section~\ref{sec:particular-conv}). Under this restriction application of \cite{Cai2014} or direct proofs akin to the proof of Theorem~\ref{th:exp-lowerbound} are complicated due the subject means no longer constituting as sufficient statistics. Although the individual anomalous observations are exponentially tilted counterparts of the null distributions, when summing anomalous observations we can no longer characterize their sum distributions as exponentially tilted counterparts of the null convolution distributions --- unlike in the one parameter exponential family. This nontrivially complicates the analysis and is an interesting avenue for future work; along with a statement that makes no restriction on solely observing subject means.

Our main result requires an upper bound on the number of observations per subject as $t = o(n)$. As discussed in Section~\ref{sec:analytic_calpow}, overcoming this upper bound is challenging since it requires an even sharper understanding of the nominal rank distribution under the presence of anomalies. We conjecture this upper bound is an artifact of the proof, however, and not indicative of potential asymptotic power deterioration when $t$ grows large. This conjecture is in line with exploratory simulation results which we include in Figure~\ref{fig:streamlengths} in Appendix~\ref{app:streamlengths}.

In the presentation of our asymptotic power results, we have not assumed that all referentials are affected in the presence of anomalies. Instead the (averaged) quantities in Definition~\ref{def:rank_trans_char} must be sufficiently large --- which is a result of basing our test on subject rank means --- and this is also possible when a fraction of referentials are unaffected. Nevertheless, when only a small fraction of referentials are affected (placing the setting closer to that of detecting a sparse submatrix akin to \cite{Butucea2013, Ma2015, Arias-Castro2017a}), a test based on quantities other than subject rank means (e.g. subject rank quantiles) may perform much better and this is an interesting area for future research. An approach based on scan statistics is discussed in \cite{Arias-Castro2017a}. 

For the proof of the results of Theorem~\ref{th:rank_hc_test}, it was not needed to explicitly derive the limiting null distribution of the rank-based higher criticism statistic in~\eqref{eq:def_rank_HC}. While its null distribution can be approximated for any sample size through Monte-Carlo sampling, exploring the limiting distribution of the statistic may be an interesting avenue for future research, which must carefully account for dependencies induced by the ranks. For the classical higher criticism statistic introduced in \cite{Donoho2004} this has been discussed in \cite{Gontscharuk2015}.

Our methodology is based on the higher criticism statistic. It might be fruitful to explore the use of other statistics such as those proposed by \cite{Berk1979} --- which \cite{moscovich2016exact} apply to the homoscedastic version of normal testing problem~\eqref{hyp:norm} --- or \cite{Stepanova2014}, but when observation ranks are used. 

Another direction, already discussed in Section~\ref{sec:application} is related to identification of anomalous subjects based on the rank $p$-values. Depending on the context it may be more appropriate to control the false discovery rate \citep{benjamini1995} instead of the more stringent family wise error rate in such problems --- which the higher criticism already allows to control. Dependencies induced by the ranks should be accounted for, and since the joint null distribution of the collection of subject $p$-values is known after rank-transform, an approach exploiting this knowledge is likely more powerful than general methods that can deal with dependencies such as that of \cite{benjamini2001}.

\section{Proof of Theorem~\ref{th:rank_hc_test}}\label{sec:proof-th1}

\myparagraph{Statement~\ref{en:th1_null}} the arguments for conservativeness of the test hold trivially by the definition of the $p$-value in~\eqref{eq:def_rank_HC_p_value}.

\myparagraph{Statement~\ref{en:th1_alt}} we must show that for any fixed $\alpha\in (0,1)$ and for any of the instances of the alternative described in the theorem, for the $p$-value in~\eqref{eq:def_rank_HC_p_value} we have that
\begin{equation}\label{eq:p-value-convergence}
\Pha{\cP_{\text{rank-hc}}(\bR) \leq \alpha } \to 1 \ .
\end{equation}
We show this in two parts; first, we bound the $p$-value $\cP_{\text{rank-hc}}(\bR)$ by the statistic $T(\bR)$. This argument is common over all three substatements~\ref{en:th1_strong},~\ref{en:th1_boundary}, and~\ref{en:th1_weak}, and rests on a permutation argument. In the second part, we show that the statistic $T(\bR)$ is sufficiently large such that the above convergence holds. This part is much more intricate, and the proof strategy diverges between the three substatements. For substatement~\ref{en:th1_strong}, a sharp result is most easily obtained by focusing on a single subject from $\cS_{T,n}$, and requires characterizing the tail probability of a single subject rank mean $Y_i(\bR)$. Zooming in on a single subject is not sufficient for statements~\ref{en:th1_boundary} and~\ref{en:th1_weak}, which subsequently requires characterizing rank-induced dependencies. This is done through a bound with a surrogate statistic that carries no problematic dependencies. The proof then requires a sharp analysis of the size of the surrogate statistic. While the analysis showing the statistic $T(\bR)$ is sufficiently large is different for all three substatements, a common part is the fundamental need to characterize the size of the normalizing quantity $p_q$, and in each substatement we must carefully contrast quantities with this normalization constant.

For convenience throughout the proof we define:
\[
s \equiv \abs{\cS} \ .
\]
We start by bounding $\cP_{\text{rank-hc}}(\bR)$ by the statistic $T(\bR)$. This bound can be conveniently constructed through a permutation argument, so we first restate our rank-based $p$-value as a permutation $p$-value. To do so, consider the set of permutations that only permute observations within their respective referential as:
\begin{equation}\label{eq:def-permset}
\Pi' \equiv \{\pi \in \Pi: \forall_{i\in[n],j\in[t]} \  \exists_{k\in[n]} : \pi(i,j) = (k,j) \} \ ,
\end{equation}
where $\Pi$ is the set of all permutations of index set $\{(i,j), i\in[n], j\in[t]\}$. Denote the ranks permuted with $\pi\in\Pi'$ by $\bR^\pi$. Then, if $\pi$ is sampled uniformly at random from $\Pi'$ and if $\br_1$ is any fixed set of ranks, we have that
\begin{equation}\label{eq:dist-null-perm}
\left( \bR \mid H_0 \right) \overset{d}{=} \br_1^\pi \ ,
\end{equation}
where $\overset{d}{=}$ indicate these two quantities have the same distribution. With this is mind, for any fixed set of ranks $\br_1$ we have 
\[
p_q = \P{\frac{Y_1(\br_1^\pi) - \Rb}{\sigma_R} \geq \sqrt{\frac{2q \log(n)}{t}}}  \ ,
\]
and
\[
\cP_{\text{rank-hc}}(\bR) = \P{ T(\bR^\pi)\geq T(\bR) \given \bR} \ .
\]
Now, we start by simplifying the role of $\pi$ in the above expression, resembling the approach taken in \cite{arias-castro2018a, stoepker2021}. This leads to the following bound on our $p$-value depending on the size of the grid $k_n$ which is proven in Section~\ref{sec:bound-p-value}:

\begin{lem}\label{lem:bound-p-value}
The following inequality holds almost surely:
\[
\cP_{\text{rank-hc}}(\bR) \leq \frac{k_n}{T(\bR)^2} + \ind{T(\bR) \leq 0} \ .
\]
\end{lem}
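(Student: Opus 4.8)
The plan is to exploit the permutation representation of the $p$-value together with a union bound over the grid and a second-moment (Chebyshev) estimate applied to the permutation law. First, if $T(\bR)\leq 0$ the claimed bound holds trivially, since the right-hand side is then at least $1$ while any $p$-value is at most $1$; so assume throughout that $T(\bR) > 0$. Writing $\bR^\pi$ for the ranks permuted by a $\pi$ drawn uniformly from $\Pi'$, the representation $\cP_{\text{rank-hc}}(\bR) = \P{T(\bR^\pi)\geq T(\bR)\given\bR}$ together with $T(\bR^\pi) = \max_{q\in Q_n}V_q(\bR^\pi)$ and a union bound over the $k_n = \abs{Q_n}$ gridpoints gives
\[
\cP_{\text{rank-hc}}(\bR) \leq \sum_{q\in Q_n}\P{V_q(\bR^\pi)\geq T(\bR)\given\bR}\ .
\]

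Next I would control each summand by Chebyshev's inequality under the permutation law. Conditionally on $\bR$, the law of $\bR^\pi$ coincides with the null law of the ranks by~\eqref{eq:dist-null-perm}; since $\Rb$ and $\sigma_R$ are deterministic and the subjects are exchangeable under the null, each exceedance indicator contributing to $N_q(\bR^\pi)$ has conditional mean exactly $p_q$, whence $\E{N_q(\bR^\pi)\given\bR} = np_q$ and $\E{V_q(\bR^\pi)\given\bR} = 0$. Consequently, for $T(\bR)>0$,
\[
\P{V_q(\bR^\pi)\geq T(\bR)\given\bR} \leq \frac{\Var{V_q(\bR^\pi)\given\bR}}{T(\bR)^2} = \frac{\Var{N_q(\bR^\pi)\given\bR}}{np_q(1-p_q)\,T(\bR)^2}\ .
\]
Summing over the $k_n$ gridpoints then yields the claim, provided each conditional variance ratio is at most $1$, i.e.~provided $\Var{N_q(\bR^\pi)\given\bR}\leq np_q(1-p_q)$. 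The degenerate cases $p_q\in\{0,1\}$ are harmless: there $V_q(\bR^\pi)=0$ almost surely by the convention $0/0 = 0$, so the corresponding probability vanishes for $T(\bR)>0$.

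The crux, and the step I expect to be the main obstacle, is this variance bound. Writing $N_q(\bR^\pi) = \sum_{i\in[n]}\ind{A_i^\pi}$ with $A_i^\pi$ the event that subject $i$'s centred rank mean exceeds the threshold, I would decompose $\Var{N_q(\bR^\pi)\given\bR} = np_q(1-p_q) + \sum_{i\neq k}\Cov{\ind{A_i^\pi},\ind{A_k^\pi}}$ and argue that every covariance is nonpositive. Under the permutation law each of the $t$ columns of $\bR^\pi$ is an independent uniform permutation of $\{1,\dots,n\}$; such a permutation vector is negatively associated, negative association is preserved under forming independent copies across the referentials and under nondecreasing functions of disjoint coordinate blocks, and the rank means $Y_1(\bR^\pi),\dots,Y_n(\bR^\pi)$ are exactly such functions (subject $i$ depending only on the block $\{(i,j):j\in[t]\}$). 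Hence the indicators $\ind{A_i^\pi}$, being nondecreasing functions of disjoint blocks, are pairwise negatively correlated, giving $\Var{N_q(\bR^\pi)\given\bR}\leq np_q(1-p_q)$ and completing the proof.

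This negative-correlation mechanism is the same phenomenon exploited in related permutation analyses \citep{arias-castro2018a, stoepker2021}; the only care needed is to verify that the closure properties of negative association indeed apply to the block structure induced by the referentials, so that the pairwise nonpositivity of the covariances is rigorous rather than merely heuristic.
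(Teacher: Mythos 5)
Your proposal is correct and follows essentially the same route as the paper's proof: a union bound over the $k_n$ gridpoints, Chebyshev's inequality under the conditional permutation law with $\E{N_q(\bR^\pi)\given\bR}=np_q$, and the variance bound $\Var{N_q(\bR^\pi)\given\bR}\leq np_q(1-p_q)$ via negative association of the permuted rank means (the paper invokes exactly the closure properties of \cite{Joag-Dev1983} that you describe). The only cosmetic difference is that you treat the degenerate cases $p_q\in\{0,1\}$ and the sign of $T(\bR)$ slightly more explicitly up front.
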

We can use the result of this lemma to bound our objective~\eqref{eq:p-value-convergence} as
\begin{align*}
&\Pha{\cP_{\text{rank-hc}}(\bR) \leq \alpha } \\
&\geq \Pha{\frac{k_n}{T(\bR)^2} + \ind{T(\bR) \leq 0} \leq \alpha \given T(\bR) \geq \sqrt{\frac{k_n}{\alpha}}}\Pha{ T(\bR) \geq \sqrt{\frac{k_n}{\alpha}}} \\
&= \Pha{ T(\bR) \geq \sqrt{\frac{k_n}{\alpha}}} \ .\numberthis\label{eq:p-val-to-stat}
\end{align*}
For brevity throughout the remainder define
\begin{equation}\label{eq:def-kn}
\tilde k_n \equiv \sqrt{\frac{k_n}{\alpha}} \ .
\end{equation}
To show~\eqref{eq:p-value-convergence}, first note that for any sequence $q_n\in Q_n$ we have that:
\begin{equation}\label{eq:bound-TR}
\P{T(\bR) \geq \tilde k_n} = \P{\max_{q\in Q_n} V_q(\bR) \geq \tilde k_n} \geq \PP\Big( V_{q_n}(\bR) \geq \tilde k_n\Big) \ .
\end{equation}
To show the test has power converging to one it therefore suffices to show that for any instance of the alternative (under the conditions of the theorem), there exists a sequence $q_n \in Q_n$ such that $V_{q_n}(\bR)$ is sufficiently large.

At this point, it is most convenient to restrict ourselves to the proof of substatement~\ref{en:th1_strong}, and proceed to prove the remaining statements afterwards.

\myparagraph{Statement~\ref{en:th1_strong}} It suffices to show~\eqref{eq:bound-TR} for a lower bound of $V_{q_n}(\bR)$. An obvious lower bound can be constructed by observing that:
\begin{equation}\label{eq:lb-nq-trivial}
N_q(\bR) \geq \ind{\frac{Y_i(\bR) - \overline R}{\sigma_R} \geq \sqrt{\frac{2q\log(n)}{t}}} \equiv M_{i,q}(\bR) \ ,
\end{equation}
where $i\in\cS_{T,n}$ which always exists since $\liminf_{n\to\infty} \abs{\cS_{T,n}} \geq 1$ by assumption. Define for convenience:
\[
b_q \equiv \tilde k_n\sqrt{np_q(1-p_q)} + np_q \ .
\]
Then~\eqref{eq:lb-nq-trivial} implies
\begin{equation}\label{eq:th1-subcaseii-suff}
\P{V_{q}(\bR) \geq \tilde k_n} \geq \P{M_{i,q}(\bR) \geq b_{q}} \ .
\end{equation}
Our aim is to apply Chebyshev's inequality to show the probability above converges to one. To do so, the following two lemmas provide characterizations of the quantities involved which are proven in Section~\ref{sec:bound-pq} and Section~\ref{sec:bound-Emiq} respectively.

\begin{lem}\label{lem:bound-pq}
\[
p_q \leq n^{-q + \frac{\sqrt{6}}{3}\sqrt{\frac{q^3\log(n)}{t}}} \ .
\]
\end{lem}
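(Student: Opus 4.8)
The plan is to exploit the key distributional fact that, under $H_0$, the ranks of subject $1$ across referentials are i.i.d.\ and uniform, reducing $p_q$ to an upper-tail probability for a sum of i.i.d.\ bounded variables, and then to apply an exponential (Chernoff--Bennett) bound whose exponent is expanded to second and third order.

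First I would establish that under $H_0$ the variables $R_{11},\dots,R_{1t}$ are i.i.d., each uniform on $\{1,\dots,n\}$: within referential $j$ the observations are i.i.d.\ hence exchangeable, so the rank vector is a uniform random permutation and $R_{1j}$ is uniform on $\{1,\dots,n\}$, while independence across $j$ follows from independence of the observations across referentials. Setting $\hat R_j \equiv (R_{1j}-\overline R)/\sigma_R$, the $\hat R_j$ are i.i.d.\ with mean $0$ and variance $1$, and using $\overline R=(n+1)/2$ and $\sigma_R^2=(n^2-1)/12$ they satisfy the deterministic bound $|\hat R_j|\le\sqrt 3$. The defining event of $p_q$ then rewrites as $\sum_{j\in[t]}\hat R_j\ge a$ with $a\equiv\sqrt{2qt\log n}$.

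Next I would apply the Chernoff bound $p_q\le e^{-\lambda a}\bigl(\E{e^{\lambda\hat R_1}}\bigr)^t$ for $\lambda>0$, and control the moment generating function of the bounded, unit-variance $\hat R_1$ by a Bennett-type estimate $\E{e^{\lambda\hat R_1}}\le\exp{\lambda^2 g(\sqrt 3\lambda)}$, where $g(x)\equiv(e^x-1-x)/x^2$. I would then take the near-optimal (Gaussian) choice $\lambda=a/t=\sqrt{2q\log(n)/t}$. Crucially, under the ambient assumption $t=\omega(\log n)$ together with $q\in Q_n\subseteq(0,2]$ one has $\lambda=o(1)$, so that $g(\sqrt 3\lambda)=\tfrac12+\tfrac{\sqrt3}{6}\lambda+o(\lambda)$. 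Expanding the exponent gives
\[
-\lambda a + t\lambda^2 g(\sqrt 3\lambda) = -\frac{a^2}{2t} + \frac{\sqrt 3\,a^3}{6t^2}\bigl(1+o(1)\bigr),
\]
and substituting $a^2=2qt\log n$ turns the quadratic term into $-q\log n$ and the cubic term into exactly $\tfrac{\sqrt 6}{3}(\log n)\sqrt{q^3\log(n)/t}$, which is precisely the exponent $\log n\cdot\bigl(-q+\tfrac{\sqrt6}{3}\sqrt{q^3\log(n)/t}\bigr)$ of the claimed bound.

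The main obstacle is pinning the cubic coefficient to exactly $\tfrac{\sqrt6}{3}$ while provably discarding all higher-order contributions: the quartic-and-higher terms of $t\lambda^2 g(\sqrt 3\lambda)$ are genuinely positive (of order $(\log n)^2/t$), so a crude bound such as $g(x)\le\tfrac12 e^{x}$ is too lossy and overshoots the constant by a factor of $3$. The clean route is a sharper pointwise estimate isolating the correct cubic slope, for instance $g(x)\le\tfrac12+\tfrac{x}{6}e^{x}$ for $x\ge0$ (verified by comparing Taylor coefficients), giving $t\lambda^2 g(\sqrt3\lambda)\le\tfrac{t\lambda^2}{2}+\tfrac{\sqrt3\,t\lambda^3}{6}e^{\sqrt3\lambda}$; since $\lambda=o(1)$ forces $e^{\sqrt3\lambda}=1+o(1)$, the cubic coefficient is recovered exactly and the residual is absorbed, so the stated inequality holds for all $n$ large within the regime of the theorem. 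I would close by noting that whenever the exponent on the right-hand side is nonnegative the bound is trivial (its right-hand side exceeds $1\ge p_q$), so only the regime $\lambda<\sqrt3$ is nontrivial, which is consistent with the $\lambda=o(1)$ analysis driven by $t=\omega(\log n)$.
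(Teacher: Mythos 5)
Your reduction to an i.i.d.\ sum is exactly the paper's: under $H_0$ the normalized ranks $\hat R_j=(R_{1j}-\Rb)/\sigma_R$ are i.i.d., mean zero, unit variance, bounded by $\sqrt 3$, and $p_q$ is the upper tail of their sum at $a=\sqrt{2qt\log n}$. The paper then simply invokes Bernstein's inequality for bounded variables, which gives $p_q\le\exp\bigl(-q\log n/(1+\tfrac{\sqrt6}{3}\sqrt{q\log(n)/t})\bigr)$, and finishes with $\tfrac{1}{1+x}\ge 1-x$; this yields the stated exponent exactly, with no asymptotics and no side conditions on $t$ or $q$. Your route — a Chernoff bound with the fixed Gaussian-optimal $\lambda=a/t$ and the Bennett estimate $\E{e^{\lambda\hat R_1}}\le\exp{\lambda^2 g(\sqrt3\lambda)}$ — is the same idea with the optimization step replaced by a hand-picked $\lambda$, and that substitution is where the argument breaks.

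The gap is in the final "absorption" claim. Your estimate $g(x)\le\tfrac12+\tfrac{x}{6}e^{x}$ is correct, but it leaves you with the exponent $\log n\bigl(-q+\tfrac{\sqrt6}{3}\sqrt{q^3\log(n)/t}\,e^{\sqrt3\lambda}\bigr)$. Since $e^{\sqrt3\lambda}>1$ for every $n$, the residual term $\tfrac{\sqrt6}{3}\sqrt{q^3\log(n)/t}\,(e^{\sqrt3\lambda}-1)\log n$ is strictly positive (of order $q^2\log^2(n)/t$), so your upper bound on $p_q$ is strictly larger than the lemma's right-hand side; a positive surplus in an upper bound cannot be "absorbed" into a smaller target, no matter how large $n$ is. What you have actually proved is $p_q\le n^{-q+\frac{\sqrt6}{3}\sqrt{q^3\log(n)/t}\,(1+o(1))}$ under the ambient assumptions $t=\omega(\log n)$ and $q\le 2$, which is weaker than the stated (non-asymptotic, assumption-free) inequality. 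This is a generic feature of fixing $\lambda$ at the Gaussian value rather than optimizing: the correction term always overshoots. The fix is short — either apply Bernstein's tail inequality directly as the paper does, or carry the Bernstein-type MGF bound $\E{e^{\lambda\hat R_1}}\le\exp\bigl(\tfrac{\lambda^2/2}{1-\sqrt3\lambda/3}\bigr)$ and optimize over $\lambda$, after which $\tfrac{1}{1+x}\ge 1-x$ delivers the constant $\tfrac{\sqrt6}{3}$ with no residual.
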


\begin{lem}\label{lem:bound-EMiq}
Consider the alternative in~\eqref{hyp:general}. If $\frac{s}{n}\sqrt{\frac{t}{\log(n)}} = o(1)$ and $q_n \to q$, and if for $i\in\cS$ we have $\mu_{n,i} \geq \sqrt{2r\log(n)/t}$ for some $r > q$, then
\[
\Pha{\frac{Y_{i}(\bR) - \overline R}{\sigma_R} \geq \sqrt{\frac{2q_n\log(n)}{t}}} \to 1 \ .
\]
\end{lem}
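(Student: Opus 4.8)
The plan is to reduce the statement to a Chebyshev bound, after showing that the expectation of the standardized rank mean exceeds the threshold by a margin of order $\sqrt{\log(n)/t}$, while its standard deviation is only of order $1/\sqrt{t}$. The starting point is to write, for fixed $i$, the rank in referential $j$ as $R_{ij} = 1 + \sum_{k\neq i}\ind{X_{ij}^{(\delta)} \geq X_{kj}^{(\delta)}}$ in the tie-broken limit $\delta\to0$, so that $R_{ij}$ is, up to the self-term, a sum of indicators comparing subject $i$ to the remaining $n-1$ subjects. For a nominal competitor $k\notin\cS$ the comparison probability is exactly $\E{U_{ij}}$ (this is where the transform $U_{ij}$ in~\eqref{eq:def_Uij} enters, using the properties recorded in Appendix~\ref{app:moments-u}), whereas for the at most $s-1$ anomalous competitors $k\in\cS\setminus\{i\}$ the comparison probability merely lies in $[0,1]$ and contributes a bounded nuisance.

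Taking expectations and averaging over $j$ gives, with $\bar u_i \equiv \tfrac1t\sum_j \E{U_{ij}}$ and a nuisance $\Delta_i \equiv \tfrac1t\sum_j\sum_{k\in\cS\setminus\{i\}}\P{X_{ij}^{(\delta)}\geq X_{kj}^{(\delta)}} \in [0,s-1]$,
\[
\E{Y_i(\bR)} = 1 + (n-s)\bar u_i + \Delta_i \ .
\]
Subtracting $\overline R = (n+1)/2$, dividing by $\sigma_R$, and substituting $\bar u_i = \tfrac12 + \mu_{n,i}/(2\sqrt3)$ from~\eqref{eq:def_mu}, the main term is $(1-s/n)\mu_{n,i}$ and the residual contributions (namely $\Delta_i$ together with the leftover constants) are all $\bigO(s/n)$ in absolute value, so the expected standardized rank mean equals $m_i = (1-s/n)\mu_{n,i} + \bigO(s/n)$. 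The hypothesis $\mu_{n,i} \geq \sqrt{2r\log(n)/t}$ with $s/n\to0$ yields $(1-s/n)\mu_{n,i} \geq (1-\smallO(1))\sqrt{2r\log(n)/t}$, and the assumption $\tfrac{s}{n}\sqrt{t/\log(n)} = \smallO(1)$ is precisely what renders the $\bigO(s/n)$ bias negligible relative to $\sqrt{\log(n)/t}$. Since $q_n\to q$ with $r>q$, it follows that $m_i$ exceeds the threshold $\sqrt{2q_n\log(n)/t}$ by a gap $g_n \geq (\sqrt r - \sqrt q - \smallO(1))\sqrt{2\log(n)/t}$, which is positive and of order $\sqrt{\log(n)/t}$ for large $n$.

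For the variance I would exploit that the referentials are independent: the ranks $R_{ij}$ for distinct $j$ depend on disjoint columns of observations and tie-breakers, hence $\Var{Y_i(\bR)} = t^{-2}\sum_j \Var{R_{ij}}$. As each $R_{ij}$ takes values in $\{1,\dots,n\}$ one has $\Var{R_{ij}} \leq (n-1)^2/4$, so $\Var{Y_i(\bR)} \leq (n-1)^2/(4t)$, and after dividing by $\sigma_R^2 = (n^2-1)/12$ the standardized rank mean has variance at most $3/t$. No refined variance estimate is needed here, because the gap $\sqrt{\log(n)/t}$ already dominates $1/\sqrt t$ by the diverging factor $\sqrt{\log(n)}$.

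Combining the two estimates through Chebyshev's inequality gives
\[
\Pha{\frac{Y_i(\bR)-\overline R}{\sigma_R} < \sqrt{\frac{2q_n\log(n)}{t}}} \leq \Pha{\Big| \tfrac{Y_i(\bR)-\overline R}{\sigma_R} - m_i \Big| \geq g_n} \leq \frac{3/t}{g_n^2} = \bigO\!\left(\frac{1}{\log(n)}\right) \to 0 \ ,
\]
which is the claim. I expect the only delicate step to be the mean computation: one must treat the tie-breaking randomization carefully in identifying the nominal comparison probability with $\E{U_{ij}}$, and one must verify that both the anomalous cross-term $\Delta_i$ and the $\bigO(s/n)$ residual are genuinely dominated by the margin — which is exactly the role of the hypothesis $\tfrac{s}{n}\sqrt{t/\log(n)} = \smallO(1)$. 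The variance half of the argument is deliberately crude and poses no difficulty.
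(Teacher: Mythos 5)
Your proposal is correct and follows essentially the same route as the paper's proof: both reduce the claim to Chebyshev's inequality applied to the standardized rank mean, identify the nominal comparison probabilities with $\E{U_{ij}}$ to show the expectation exceeds the threshold by a margin of order $\sqrt{\log(n)/t}$ (using $\tfrac{s}{n}\sqrt{t/\log(n)}=o(1)$ to absorb the $\bigO(s/n)$ contribution of the anomalous competitors and leftover constants), and bound the variance by $3/t$ via independence across referentials and the range bound on $R_{ij}$. The only cosmetic difference is that the paper simply drops the anomalous comparison terms to get a one-sided lower bound on $\Eha{R_{ij}}$, whereas you retain them as a nuisance $\Delta_i\in[0,s-1]$; the two are interchangeable here.
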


Lemma~\ref{lem:bound-pq} is a consequence of Bernstein's inequality for bounded random variables. Lemma~\ref{lem:bound-EMiq} requires relating the expectation of $Y_i(\bR)$ to the signal magnitudes $\mu_{n,i}$, and is then a consequence of Chebyshev's inequality.

Now, choose $q_n = 1+\eta/2 + o(1)$. Since $k_n \to \infty$ and $1+\eta/2 < 2$, for this choice $q_n \in Q_n$. Note that for $t=\omega(\log(n))$ lemma~\ref{lem:bound-pq} implies $p_q \leq n^{-1+o(1)}$. Since $k_n = n^{o(1)}$ and thus $\tilde k_n = n^{o(1)}$, and since $\eta > 0$, for this choice of $q_n$ we have $b_{q_n} \to 0$. Furthermore, since $t=o(n)$ and $s = \bigO(\sqrt{n})$, Lemma~\ref{lem:bound-EMiq} implies for this choice $\Eha{M_{i,q}(\bR)} \to 1$. All of the above then implies
\[
\Eha{M_{i,q}(\bR)} - b_{q_n} \to 1  \ ,
\]
such that for $n$ sufficiently large the above quantity is positive. Then, an application of Chebyshev's inequality for this choice of $q_n$ gives
\begin{align*}
\Pha{M_{i,q_n}(\bR) \leq b_{q_n}} &= \Pha{M_{i,q_n} - \Eha{M_{i,q_n}} \leq b_{q_n} - \Eha{M_{i,q_n}}} \\
&\leq \Pha{\abs{M_{i,q_n} - \Eha{M_{i,q_n}}} \geq \Eha{M_{i,q_n}} - b_{q_n}} \\
&\leq \frac{\Varha{M_{i,q_n}}}{(\Eha{M_{i,q_n}} - b_{q_n})^2} \\
&= \frac{\Eha{M_{i,q_n}}(1-\Eha{M_{i,q_n}})}{(\Eha{M_{i,q_n}} - b_{q_n})^2} \to 0 \ .
\end{align*}
The above implies~\eqref{eq:th1-subcaseii-suff} converges to 1 which completes the proof of statement~\ref{en:th1_strong}.

\myparagraph{Remaining statements~\ref{en:th1_boundary} and~\ref{en:th1_weak}} For these statements a crude lower bound as in~\eqref{eq:lb-nq-trivial} does not suffice. To lower-bound the probability in~\eqref{eq:bound-TR}, inspiration from earlier works \citep{Donoho2004, stoepker2021} suggests applying Chebyshev's inequality directly on $V_q(\bR)$ could be effective. However, characterizing the variance of $V_q(\bR)$ is not straightforward due to dependencies between the summands of $N_q(\bR)$ induced by ranking. One may hope that the covariances between the summands would be negative using an argument based on negative association \citep{Joag-Dev1983}. However, negative association of the ranks is only true under the null hypothesis, and one can even construct examples where the sum of the summand covariances is orders of magnitude larger than the sum of the summand variances (see Appendix~\ref{app:cov-nq}). This implies that Chebyshev's inequality is loose unless one is willing to impose severe restrictions on the class of distributions considered.

To overcome this issue we first lower-bound $N_q(\bR)$ by a surrogate statistic with similar characteristics, but with independent summands. This can be done by observing that the ranks can be stated as a sum of conditionally independent indicators, which ultimately leads to the following bound which is proven in Section~\ref{sec:bound-Nq}:
\begin{lem}\label{lem:bound-Nq}
Consider the alternative in~\eqref{hyp:general}. Define:
\[
\tilde N_q(\bX) \equiv \sum_{i\in[n]} \ind{\frac{\frac{n-s}{t}\sum_{j\in[t]}U_{ij} - \overline R}{\sigma_R} \geq \sqrt{\frac{2q\log(n)}{t}} + \frac{\varepsilon_{n,i}}{\sigma_R} } \ ,
\]
with $U_{ij}$ defined as in~\eqref{eq:def_Uij} and the sequence
\begin{equation}\label{eq:def-varepsilon-1}
\varepsilon_{n,i} \equiv 2\sqrt{\frac{n\log(n)}{t}} + 1 - \frac{1}{t}\sum_{j\in[t]}\sum_{k\in\cS}\E{\lim_{\delta\to 0} F_{kj}^{(\delta)}(X_{ij}^{(\delta)})} = \bigO\left(\sqrt{\frac{n\log(n)}{t}} + s\right) \ ,
\end{equation}
where $F_{ij}^{(\delta)}$ is the convolution of the distribution $F_{ij}$ and the $\delta$-dilated distribution of $W_{ij}$ resulting from the tie-breaking transformation~\eqref{eq:def_tiebreak}. Analogously, $X_{ij}^{(\delta)}$ is a sample from $F_{ij}^{(\delta)}$.

When $s \leq \sqrt{n}$ there exists a non-negative sequence $\xi_n = o(1)$ for which
\[
\forall x \in \bbR: \quad \Pha{N_q(\bR) \leq x} \leq \Pha{\tilde N_q(\bX) \leq x} + \xi_n \ .
\]
\end{lem}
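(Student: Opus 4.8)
The plan is to prove the inequality by stochastic domination built on a single high-probability event. Specifically, I would construct an event $\cE$ on which the deterministic, summand-by-summand bound $N_q(\bR) \geq \tilde N_q(\bX)$ holds, and then show $\Pha{\cE^c} \leq \xi_n$ for some $\xi_n = o(1)$. Granting this, the claim is immediate: since $\{N_q(\bR) \leq x\} \cap \cE \subseteq \{\tilde N_q(\bX) \leq x\}$, we get $\Pha{N_q(\bR) \leq x} \leq \Pha{\tilde N_q(\bX) \leq x} + \Pha{\cE^c}$. Comparing the thresholds of the indicators defining $N_q$ in~\eqref{eq:def_Nq} and $\tilde N_q$, the event making the domination work is $\cE = \bigcap_{i\in[n]} A_i$ with
\[
A_i \equiv \left\{ Y_i(\bR) \geq \frac{n-s}{t}\sum_{j\in[t]} U_{ij} - \varepsilon_{n,i} \right\}\ ,
\]
because on $A_i$ a satisfied $\tilde N_q$-indicator forces the matching $N_q$-indicator via $Y_i(\bR) - \Rb \geq \frac{n-s}{t}\sum_{j} U_{ij} - \varepsilon_{n,i} - \Rb \geq \sigma_R\sqrt{2q\log(n)/t}$.

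The heart of the argument is to show each $A_i$ holds with high probability, and to see that $\varepsilon_{n,i}$ from~\eqref{eq:def-varepsilon-1} is precisely the right amount of slack. First I would write the rank (in the $\delta\to0$ limit) as a sum of pairwise comparisons, split according to whether the compared subject is nominal or anomalous: $R_{ij} = B_{ij} + C_{ij}$ with $B_{ij} = \sum_{k\notin\cS}\ind{X_{ij}\geq X_{kj}}$ and $C_{ij} = \sum_{k\in\cS}\ind{X_{ij}\geq X_{kj}}$ (limits understood). Conditioning on subject $i$'s own observations $X_{i\cdot}$, the nominal term $B_{ij}$ is a sum of conditionally independent Bernoulli variables with success probability converging to $U_{ij}$, so $\E{B_{ij}\mid X_{i\cdot}} \geq (n-s)U_{ij}$; the self-comparison when $i\notin\cS$ only helps and is covered by the additive $1$ in $\varepsilon_{n,i}$. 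The anomalous term has unconditional mean $\E{C_{ij}} = \sum_{k\in\cS}\E{\lim_{\delta\to0}F_{kj}^{(\delta)}(X_{ij}^{(\delta)})}$, which is exactly the quantity subtracted in $\varepsilon_{n,i}$. Thus $\varepsilon_{n,i}$ is engineered so that, up to the $2\sqrt{n\log(n)/t}+1$ fluctuation budget, $\frac{n-s}{t}\sum_j U_{ij} - \varepsilon_{n,i}$ equals the combined mean of $\frac1t\sum_j B_{ij}$ and $\frac1t\sum_j C_{ij}$, and proving $A_i$ reduces to showing that neither average falls more than its allotted share below its mean.

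For the concentration I would bound $\Pha{A_i^c}$ by controlling two fluctuations. The nominal part $\frac1t\sum_j B_{ij}$ is, conditional on $X_{i\cdot}$, an average of $t$ independent terms in $[0,n]$ with conditional variance of order $n$; Bernstein's inequality makes its downward deviation exceed $\sqrt{n\log(n)/t}$ with probability at most $n^{-1-c}$. The anomalous part $\frac1t\sum_j C_{ij}$ must be concentrated around its \emph{unconditional} mean, which I would handle by a further split into a conditional fluctuation of the $C_{ij}$ around $\sum_k F_{kj}(X_{ij})$ and a fluctuation of $\frac1t\sum_j\sum_k F_{kj}(X_{ij})$ around its mean over $X_{i\cdot}$; both are averages of $t$ independent terms bounded by $s$, giving deviations of order $s\sqrt{\log(n)/t}$. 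This is exactly where $s\leq\sqrt{n}$ enters: it ensures $s\sqrt{\log(n)/t}\leq\sqrt{n\log(n)/t}$, so the total downward deviation stays within the $2\sqrt{n\log(n)/t}$ budget. A union bound over the $n$ subjects then gives $\Pha{\cE^c}\leq n\cdot n^{-1-c} = o(1)$, the desired $\xi_n$.

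The main obstacle I anticipate is the bookkeeping of the several independent sources of randomness and matching their scales to the precise form of $\varepsilon_{n,i}$: the nominal comparisons must be treated conditionally on $X_{i\cdot}$ (to fix the random level $U_{ij}$ appearing inside $\tilde N_q$), whereas the anomalous contribution is compared against an \emph{unconditional} mean, forcing the extra split and a clean accounting of which quantities are averaged over what. A secondary technical point is to make the $\delta\to0$ tie-breaking limit rigorous — interchanging it with the relevant expectations and probabilities — and to verify that the constant $2$ in $2\sqrt{n\log(n)/t}$ is large enough to absorb all three deviations simultaneously once the crude range-based variance bounds are sharpened to the Bernoulli-sum variances (bounded by their means).
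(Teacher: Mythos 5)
Your proposal is correct and follows essentially the same route as the paper: your event $A_i$ is exactly the paper's event $M_i$, the reduction to $\sum_i\Pha{A_i^c}=o(1)$ via a union bound is identical, and the concentration argument (conditioning on $X_{i\cdot}$ to make the pairwise comparison indicators independent, plus a separate Hoeffding bound for the fluctuation of $\frac1t\sum_j\sum_{k\in\cS}F_{kj}^{(\delta)}(X_{ij}^{(\delta)})$ about its mean, where $s\le\sqrt n$ enters) matches the paper's events $A_i$ and $B_i$. The only cosmetic difference is that the paper lumps the nominal and anomalous conditional indicator fluctuations into a single centered sum $S_i^{(\delta)}$, whereas you treat them separately; your allocation of the $2\sqrt{n\log(n)/t}+1$ budget still closes since the extra anomalous conditional fluctuation is of the negligible order $\sqrt{s\log(n)/t}$.
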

The beauty of the above lemma is that unlike the summands of $N_q(\bR)$, the summands of $\tilde N_q(\bX)$ are no longer dependent. As such, the variance\footnote{It may be insightful to view the result of Lemma~\ref{lem:bound-Nq} in light of a bias-variance tradeoff argument; essentially, the statistic $\tilde N_q(\bX)$ is biased, but has lower variance than $N_q(\bR)$. Our subsequent analysis capitalizes on this lower variance through Chebyshev's inequality, but must downstream bound the effect of the bias induced. The success of this argument hinges on the fact that the induced bias is small in our context.} of $\tilde N_q(\bR)$ can be easily characterized, and thus we are in good shape to proceed with Chebyshev's inequality. Define for brevity:
\begin{equation}\label{eq:def-aq}
a_q \equiv \tilde k_n\sqrt{np_q(1-p_q)} + \Big(np_q - \Eha{\tilde N_q(\bX)} \Big) \ .
\end{equation}
such that the bound of Lemma~\ref{lem:bound-Nq} and Chebyshev's inequality imply that, if $a_q < 0$, we have:
\begin{align*}
\Pha{V_q(\bR) \leq k_n} &= \Pha{N_q(\bR) \leq k_n\sqrt{np_q(1-p_q)} + np_q} \\
&\leq \Pha{\tilde  N_{q}(\bX) \leq k_n\sqrt{np_q(1-p_q)} + np_q} + o(1) \\
&= \Pha{\tilde N_{q}(\bX) - \Eha{\tilde N_{q}(\bX)} \leq a_q} + o(1) \\
&\leq  \Pha{ \abs{\tilde N_{q}(\bX) - \Eha{\tilde N_{q}(\bX)}} \geq -a_{q}}  + o(1) \\
&\leq \frac{\Varha{\tilde N_{q}(\bX)}}{a_{q}^2}  + o(1)\ . \numberthis\label{eq:bound-Vq}
\end{align*}
Now, combining the bound above with~\eqref{eq:p-val-to-stat} and~\eqref{eq:bound-TR} implies 
\begin{equation}\label{eq:bound-p-value-3}
\Pha{\cP_{\text{rank-hc}}(\bR) \leq \alpha } \geq 1 - \frac{\Varha{\tilde N_{q}(\bX)}}{a_{q}^2} + o(1) \ ,
\end{equation}
provided $a_q < 0$. 
Therefore, to complete the proof, we must show that there exists a $q_n \in Q_n$ such that
\begin{equation}\label{eq:req-endingth1i}
a_{q_n} < 0 \text{ for $n$ sufficiently large, and}
\end{equation}
\begin{equation}\label{eq:req-endingth1ii}
\frac{\Varha{\tilde N_{q_n}(\bX)}}{a_{q_n}^2} \to 0 \ .
\end{equation}
In order to establish~\eqref{eq:req-endingth1i} and~\eqref{eq:req-endingth1ii} for specific sequences $q_n$, we must first characterize the relevant quantities in $a_q$ and the variance of $\tilde N_q(\bX)$ under the alternative. At this stage, it is useful to remark that $\tilde N_{q_n}(\bX)$ can be simplified as: 
\begin{align*}
\tilde N_{q_n}(\bX) &= \sum_{i\in[n]}\ind{ \frac{1}{t}\sum_{j\in[t]} U_{ij} - \frac{1}{2} \geq \frac{\sigma_R}{n-s} \sqrt{\frac{2q_n\log(n)}{t}} + \frac{\varepsilon_{n,i} + \overline R}{n-s} - \frac{1}{2}} \\ \numberthis\label{eq:nq-simplified}
&= \sum_{i\in[n]}\ind{ \frac{2\sqrt{3}}{t}\sum_{j\in[t]} \left(U_{ij} - \tfrac{1}{2}\right) \geq \sqrt{\frac{2q_n\log(n)}{t}}\big(1 + \delta_{n,i,q_n}\big)} \ ,
\end{align*}
where
\begin{align*}
\delta_{n,i,q_n} &\equiv \left(\frac{2\sqrt{3}\sigma_R}{n-s} - 1\right) + 2\sqrt{3}\Bigg(\frac{\varepsilon_{n,i} + \overline R}{n-s} - \frac{1}{2} \Bigg)\sqrt{\frac{t}{2q_n\log(n)}} \numberthis\label{eq:def-delta}\\
&= \bigO\left(\frac{1}{\sqrt{q_n n}} + \frac{s}{n} +\frac{s}{n}\sqrt{\frac{t}{q_n\log(n)}} \right) \ ,
\end{align*}
where we used~\eqref{eq:def_Rbar_sigmaR} and the fact that $\varepsilon_{n,i} = \bigO(\sqrt{n\log(n)/t} + s)$. The above implies that if $q_n \to q > 0$ and $s \leq \sqrt{n}$ and $t = o(n)$, then for $n$ sufficiently large
\begin{equation}\label{eq:bound-delta}
\max_{i\in[n]}\abs{\delta_{n,i,q_n}} \leq \sqrt{\frac{1}{\log(n)}}  = o(1)\ .
\end{equation}
In many places throughout the proof, the simpler implication that $\max_{i\in[n]} \delta_{n,i,q_n} = o(1)$ will suffice. In the proof of statement~\ref{en:th1_boundary} the bound above plays a more prominent role, and for both remaining statements the result of Lemma~\ref{lem:prob-norm-null-alt-approx} is critical, which itself depends critically on the the detailed characterization of $\delta_{n,i,q_n}$ in~\eqref{eq:def-delta}.

It is now useful to introduce the following notation:
\begin{equation}\label{eq:def-vqi}
\Pha{\frac{2\sqrt{3}}{t}\sum_{j\in[t]}U_{ij} -\frac{1}{2} \geq \sqrt{\frac{2q\log(n)}{t}}(1+\delta_{n,i,q})} \equiv \begin{cases} v_{q,i} &\text{ if } i \in\cS \ , \\ \tilde p_q &\text{ if } i\notin\cS \ . \end{cases} 
\end{equation}
Since $\tilde N_q(\bX)$ is a sum of independent indicators, and the variance of indicators are bounded by their first moment, we obtain that
\begin{align*}
\Eha{\tilde N_q(\bX)} &= (n-s)\tilde p_q + \sum_{i\in\cS} v_{q,i}  \ \numberthis\label{eq:expected-Nqtilde} , \\
\Varha{\tilde N_q(\bX)} &\leq n\tilde p_q + \sum_{i\in\cS} v_{q,i}\ . \numberthis\label{eq:variance-Nqtilde}
\end{align*}
So $(n-s)\tilde p_q$ is the contribution to the expected value of $\tilde N_q(\bX)$ originating from the null subjects. For our methodology to work well at the boundary of detection, it is reasonable to suspect that one will require this to be close to the contribution to the expected value of $N_q(\bX)$ of $n-s$ subjects to under the null hypothesis, which is given by $(n-s)p_q$. In other words, we require $\tilde p_q$ to be close to $p_q$. We now formalize this intuitive requirement; using~\eqref{eq:expected-Nqtilde} we can rewrite $a_q$ from~\eqref{eq:def-aq} as:
\begin{equation}\label{eq:def-aq-2}
a_q =  \tilde k_n\sqrt{np_q(1-p_q)} + s\tilde p_q + n(p_q - \tilde p_q) - \sum_{i\in\cS} v_{q,i} \ ,
\end{equation}
which implies that a sufficient condition for requirement~\eqref{eq:req-endingth1i} is that there exists a sequence $q_n \in Q_n$ such that
\begin{equation}\label{eq:req-endingth1-2a}
\frac{\tilde k_n\sqrt{np_{q_n}} + s\tilde p_{q_n} + n(p_{q_n} - \tilde p_{q_n})}{\sum_{i\in\cS} v_{{q_n},i}} \to 0 \ .
\end{equation}
Under the condition above and using~\eqref{eq:variance-Nqtilde} we conclude that~\eqref{eq:req-endingth1ii} is satisfied provided we have for the same sequence $q_n$ that
\begin{equation}\label{eq:req-endingth1-2b}
\frac{n\tilde p_{q_n} + \sum_{i\in\cS} v_{q_n,i}}{\left(\sum_{i\in\cS} v_{{q_n},i}\right)^2} \to 0 \ .
\end{equation}
Showing~\eqref{eq:req-endingth1-2a} and~\eqref{eq:req-endingth1-2b} requires an upper bound on both $p_{q_n}$ and $\tilde p_{q_n}$, as well as their difference $(p_{q_n} - \tilde p_{q_n})$. The following two lemmas provide such characterizations:

\begin{lem}\label{lem:prob-null-char}
Consider the alternative in~\eqref{hyp:general}. If $i\not\in\cS$ then
\[
\tilde p_{q} \leq  n^{-q(1+\delta_{n,i,q}) + (1+\delta_{n,i,q})^2\frac{\sqrt{6}}{3}\sqrt{\frac{q^3\log(n)}{t}}}  \ .
\]
\end{lem}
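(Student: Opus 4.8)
The plan is to treat $\tilde p_q$ as a Bernstein tail bound for a sum of i.i.d.\ bounded variables, exactly paralleling the proof of Lemma~\ref{lem:bound-pq}, and then to absorb the perturbation $\delta_{n,i,q}$ into the exponent. First I would observe that for a null subject $i\notin\cS$ the variables $U_{ij}$, $j\in[t]$, are i.i.d.\ $\mathrm{Unif}[0,1]$, so that $V_{ij}\equiv 2\sqrt{3}\,(U_{ij}-\tfrac12)$ are i.i.d.\ with mean $0$, variance $1$, and $\abs{V_{ij}}\le\sqrt3$. By the definition~\eqref{eq:def-vqi} of $\tilde p_q$ we have $\tilde p_q=\PP\big(\sum_{j\in[t]}V_{ij}\ge \sqrt{2qt\log(n)}\,(1+\delta_{n,i,q})\big)$. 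Writing $d\equiv\delta_{n,i,q}$ and $x\equiv\frac{1+d}{\sqrt3}\sqrt{\tfrac{2q\log(n)}{t}}$, Bernstein's inequality with total variance $t$ and envelope $\sqrt3$ gives $\tilde p_q\le\exp{-\frac{q\log(n)(1+d)^2}{1+x}}$, in complete analogy with Lemma~\ref{lem:bound-pq} (which is the special case $d=0$).

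Second, I would match this against the claimed exponent. Set $C\equiv\frac{\sqrt6}{3}\sqrt{\tfrac{q^3\log(n)}{t}}$, so the target is $\tilde p_q\le n^{-q(1+d)+(1+d)^2C}$. A one-line computation gives the clean identity $qx=(1+d)C$, whence the target exponent equals $-q(1+d)(1-x)$. Dividing the desired inequality by the negative quantity $-q(1+d)\log(n)$ and clearing the positive factor $1+x$, the claim reduces to $\tfrac{1+d}{1+x}\ge 1-x$, i.e.\ to $1+d\ge 1-x^2$, i.e.\ to $\delta_{n,i,q}\ge -x^2$. Since $x^2\ge0$, it suffices to establish $\delta_{n,i,q}\ge 0$ for null subjects.

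The main obstacle is precisely this sign check, and it is where the specific choice of $\varepsilon_{n,i}$ in Lemma~\ref{lem:bound-Nq} does its work. Using the explicit form~\eqref{eq:def-delta}, the first summand of $\delta_{n,i,q}$ equals $\tfrac{\sqrt{n^2-1}}{n-s}-1$ (using $2\sqrt3\,\sigma_R=\sqrt{n^2-1}$), which is non-negative whenever $s\ge1$ because $\sqrt{n^2-1}\ge n-1\ge n-s$. For the second summand I would evaluate, for a null subject $i$ and an anomaly $k\in\cS$, the key quantity $\E{\lim_{\delta\to0}F_{kj}^{(\delta)}(X_{ij}^{(\delta)})}=1-\E{U_{kj}}$, the probability that a draw from the anomalous $F_{kj}$ falls below an independent null draw. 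Substituting into~\eqref{eq:def-varepsilon-1} and using $\tfrac1t\sum_{j\in[t]}\E{U_{kj}}=\tfrac12+\tfrac{\mu_{n,k}}{2\sqrt3}$ from~\eqref{eq:def_mu}, the bracket $\tfrac{\varepsilon_{n,i}+\Rb}{n-s}-\tfrac12$ has numerator $2\sqrt{n\log(n)/t}+\tfrac32+\tfrac{1}{2\sqrt3}\sum_{k\in\cS}\mu_{n,k}$, which is non-negative whenever the aggregate anomalous magnitude $\sum_{k\in\cS}\mu_{n,k}$ is non-negative --- the regime relevant to statements~\ref{en:th1_boundary} and~\ref{en:th1_weak}, where the anomalies carry positive magnitude. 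Both summands of $\delta_{n,i,q}$ are then non-negative, so $\delta_{n,i,q}\ge0\ge-x^2$ and the bound follows.

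I expect the delicate point to be exactly the evaluation $\E{F_{kj}(X_{ij})}=1-\E{U_{kj}}$ and its bookkeeping against $\Rb$ and $s$ inside $\varepsilon_{n,i}$: this is what guarantees that the rank-induced perturbation $\delta_{n,i,q}$ enters with the favourable sign, so that passing from the sharp Bernstein exponent $-q(1+d)^2+(1+d)^3C$ to the looser stated form $-q(1+d)+(1+d)^2C$ costs nothing. The Bernstein step itself is routine and identical in spirit to Lemma~\ref{lem:bound-pq}; the only genuinely new ingredient is the non-negativity of $\delta_{n,i,q}$ for null subjects.
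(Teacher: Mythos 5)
Your proposal is correct and follows the same route as the paper: Bernstein's inequality applied to the i.i.d.\ bounded variables $2\sqrt{3}(U_{ij}-\tfrac{1}{2})$, followed by algebra to match the stated exponent. You are in fact more careful than the paper at one point --- the paper silently replaces the Bernstein numerator $(1+\delta_{n,i,q})^2$ by $(1+\delta_{n,i,q})$, a step that (like your cleaner reduction to $\delta_{n,i,q}\ge -x^2$) requires $\delta_{n,i,q}$ not to be negative, and your sign verification via the explicit form of $\varepsilon_{n,i}$ supplies the missing justification, though note that it rests on $\sum_{k\in\cS}\mu_{n,k}$ being non-negative (or at least not large and negative), which holds in the regimes where the lemma is invoked but is not literally among the lemma's hypotheses.
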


\begin{lem}\label{lem:prob-norm-null-alt-approx}
Consider the alternative in~\eqref{hyp:general}. If $q_n \to q > 0$ and $t = \omega(\log(n))$ and $s \leq \sqrt{n}$, then
\[
p_{q_n} - \tilde  p_{q_n} \leq n^{-q + \bigO(\sqrt{t/(n^2\log(n))}) + o(1)}\bigO\left(\frac{\sqrt{t}}{n} + \sqrt{\frac{\log(n)}{n}} + \frac{\sqrt{t}}{n}\sum_{i\in\cS}\mu_{n,i}\right) \ ,
\]
where $\mu_{n,i}$ is defined in~\eqref{eq:def_mu}.
\end{lem}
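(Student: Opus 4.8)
The plan is to use that, for a null subject $i\notin\cS$ under the alternative, the transformed variables $U_{ij}$ of~\eqref{eq:def_Uij} are i.i.d.\ uniform on $[0,1]$, so the event defining $\tilde p_q$ in~\eqref{eq:def-vqi} (in the simplified form~\eqref{eq:nq-simplified}) is the upper-tail event of $S_t \equiv \tfrac{2\sqrt3}{t}\sum_{j}(U_{ij}-\tfrac12)$ at the perturbed threshold $\theta(1+\delta_{n,i,q_n})$, with $\theta\equiv\sqrt{2q_n\log(n)/t}$. Writing the standardized ranks $\tilde R_{1j}\equiv (R_{1j}-\Rb)/\sigma_R$, the quantity $p_{q_n}$ of~\eqref{eq:def_pq} is the upper-tail probability of $\tfrac1t\sum_j\tilde R_{1j}$ at threshold $\theta$. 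Under $H_0$ the ranks $R_{1j}$ are i.i.d.\ discrete uniform on $\{1,\dots,n\}$ and independent across $j$ (distinct referentials carry independent observations), so both $p_{q_n}$ and $\tilde p_{q_n}$ are tail probabilities of sums of $t$ i.i.d., bounded, mean-zero, unit-variance summands, each of order $n^{-q+o(1)}$ by Lemmas~\ref{lem:bound-pq} and~\ref{lem:prob-null-char}.

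The crucial step is to place both probabilities on a common footing through a coupling, so their difference becomes an exact tail-slab probability rather than a difference of two separate approximations (the latter would each only be accurate to a multiplicative $n^{o(1)}$, far too crude for the required bound). I would couple each rank to a continuous uniform via $R_{1j}=\lceil nU_{1j}\rceil$, which produces i.i.d.\ uniforms $U_{1j}$ with $\lvert \tilde R_{1j}-2\sqrt3(U_{1j}-\tfrac12)\rvert\le C/n$ almost surely, using $\Rb=(n+1)/2$ and $\sigma_R^2=(n^2-1)/12$ from~\eqref{eq:def_Rbar_sigmaR}. Since the bound is deterministic, it gives the sandwich $\tfrac1t\sum_j\tilde R_{1j}\le S_t+C/n$ regardless of correlation, whence, using that only the one-sided bound is needed,
\[
p_{q_n}-\tilde p_{q_n}\le \P{S_t\ge \theta-C/n}-\P{S_t\ge \theta(1+\delta_{n,i,q_n})}\le \P{S_t\in I_n}\,,
\]
where $I_n$ is the thin interval between the two thresholds, of width $\lvert\theta\,\delta_{n,i,q_n}\rvert+C/n$.

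To turn this width into the displayed factor I would substitute the explicit form of $\delta_{n,i,q_n}$ from~\eqref{eq:def-delta}, using that for $k\in\cS$ and $i\notin\cS$ the two-sample symmetry gives $\E{\lim_{\delta\to0}F_{kj}^{(\delta)}(X_{ij}^{(\delta)})}=1-\E{U_{kj}}$, so the expectation term of $\varepsilon_{n,i}$ in~\eqref{eq:def-varepsilon-1} equals $s/2-\tfrac{1}{2\sqrt3}\sum_{k\in\cS}\mu_{n,k}$ with $\mu_{n,k}$ as in~\eqref{eq:def_mu}. Multiplying the resulting $\delta_{n,i,q_n}$ by $\theta=\sqrt{2q_n\log(n)/t}$ then yields $\theta\,\delta_{n,i,q_n}=\bigO\!\big(\sqrt{t}/n+\sqrt{\log(n)/n}+\tfrac{\sqrt t}{n}\sum_{i\in\cS}\mu_{n,i}\big)$, which (together with the $C/n$ coupling term) is exactly the multiplicative factor in the lemma.

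It remains to bound $\P{S_t\in I_n}$ when $I_n$ sits at level $\theta$, i.e.\ at $\sqrt{2q\log(n)}$ standard deviations into the tail. Writing $Z_t\equiv\sqrt t\,S_t$ and $b\equiv\theta\sqrt t=\sqrt{2q_n\log(n)}$, I would bound $\P{S_t\in I_n}\le \lvert I_n\rvert\sqrt t\sup f_{Z_t}$ over the corresponding range, and control the density in the moderate-deviation zone by a sharp local estimate of the form $f_{Z_t}(x)\le \phi(x)\exp(\bigO(x^3/\sqrt t))$. Since $b=\sqrt{2q\log(n)}$ and $t=\omega(\log(n))$, the correction is $\exp(\bigO((\log n)^{3/2}/\sqrt t))=\exp(\smallO(\log n))=n^{o(1)}$ — this is precisely the role of the lower bound on $t$, ensuring "sufficiently Gaussian" tails — and $\phi(b(1+\delta_{n,i,q_n}))=\tfrac{1}{\sqrt{2\pi}}n^{-q_n(1+\delta_{n,i,q_n})^2}$, where the $-2q_n\delta_{n,i,q_n}$ term produces the exponent correction $\bigO(\sqrt{t/(n^2\log n)})$. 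Assembling these pieces gives the claim. The main obstacle is exactly this last step: a sufficiently sharp, uniform local (density) estimate for the standardized sum of bounded i.i.d.\ summands throughout the moderate-deviation range with error controlled to $n^{o(1)}$. This is where the coupling is essential, as it lets us work with the single continuous variable $S_t$ and bound a genuine slab probability, instead of subtracting two separately approximated discrete and continuous tails whose individual $n^{o(1)}$ errors would swamp the target.
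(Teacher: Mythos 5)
Your proposal follows essentially the same route as the paper's proof: both reduce $p_{q_n}$ and $\tilde p_{q_n}$ to tail probabilities of the \emph{same} continuous Irwin--Hall-type sum at two nearby thresholds (the paper does this by stochastically dominating the discrete uniform rank by $D_j(n)+A_j([0,1])$, you do it by the almost-sure coupling $R_{1j}=\lceil nU_{1j}\rceil$ --- the two devices are interchangeable and both cost only $\bigO(1/n)$ per coordinate), both then evaluate the threshold gap by unpacking $\varepsilon_{n,i}$ through the symmetry identity $\E{\lim_{\delta\to0}F_{kj}^{(\delta)}(X_{ij}^{(\delta)})}=1-\E{U_{kj}}$ (this is exactly Claim~\ref{clm:u-inverse}), and both arrive at a slab probability whose width, after rescaling to the $\frac{1}{\sqrt t}\sum_j$ normalization, is precisely the factor $\bigO\bigl(\sqrt{t}/n+\sqrt{\log(n)/n}+\tfrac{\sqrt t}{n}\sum_{i\in\cS}\mu_{n,i}\bigr)$ in the statement. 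Your accounting of the widths and of the role of $t=\omega(\log n)$ is correct.

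The one place where you diverge is the step you yourself flag as the main obstacle: bounding the density of the standardized sum at level $\sqrt{2q\log n}$ by $\phi(x)\exp(\bigO(x^3/\sqrt t))$. That estimate is true, but it is a local limit theorem with Cram\'er-type corrections, and you do not supply it; as written this is the one unproven step in your argument. The paper avoids it entirely with an elementary chain: since the Irwin--Hall density $f_S$ is symmetric and unimodal (unimodality being preserved under convolution of symmetric unimodal laws), $f_S$ is nonincreasing on $[0,\infty)$, so the mean value theorem gives $F_S(g_n+h_n)-F_S(g_n)\le f_S(g_n)h_n$, and then $f_S(g_n)\le\int_{g_n-1}^{g_n}f_S(x)\,\d x\le 1-F_S(g_n-1)$, which is a plain tail probability controlled by Bernstein's inequality (Lemma~\ref{lem:bernstein-bounded}); the unit shift costs only $e^{\bigO(\sqrt{\log n})}=n^{o(1)}$. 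Substituting this density-to-tail trick for your local CLT closes your acknowledged gap with tools already in the paper; otherwise your argument is sound.
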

The proofs of these results are deferred to Sections~\ref{sec:prob-null-char} and~\ref{sec:prob-norm-null-alt-approx} respectively. The proof of Lemma~\ref{lem:prob-null-char} follows from an application of Bernstein's inequality, and requiring such asymptotic characterizations is standard in this field. Proving Lemma~\ref{lem:prob-norm-null-alt-approx} is considerably more involved, hinging crucially on the sharp bound obtained in Lemma~\ref{lem:bound-Nq}. Particularly, the higher order terms we elsewhere abstracted in $\delta_{n,i,q}$ play a crucial role in the proof of Lemma~\ref{lem:prob-norm-null-alt-approx}. It is noteworthy to remark that the core of the arguments needed to prove Lemma~\ref{lem:prob-norm-null-alt-approx} are not due to the introduction of the surrogate statistic $\tilde N_q(\bX)$; even without its introduction, we fundamentally need to contrast the tail probabilities of subject rank means under the null hypothesis with the same tail probabilities of null streams under the alternative hypothesis. It is insufficient to have these probabilities be of the same order; to prove our results, we need a sharp characterization of their difference. Lemma~\ref{lem:prob-norm-null-alt-approx} provides this characterization, and additionally bounds additional discrepancies between the quantities that may have been introduced by the surrogate statistic. Lemma~\ref{lem:prob-norm-null-alt-approx} also elucidates how the degree of accuracy of this approximation is influenced by the size of $t$. Its proof clarifies that if one would impose a much sharper constraint that $t = n^{o(1)}$, then one could get the result $p_{q_n} - \tilde p_{q_n} \leq n^{-\frac{1}{2}-q +o(1)}$ (which suffices for the upcoming arguments) without carefully characterizing the higher order terms in $\delta_{n,i,q}$.

At this point, what is left is to show that the contribution to the expected value of $\tilde N_q(\bX)$ by the \textit{anomalous} subjects (i.e. $sv_{q,i}$) is large enough to outweigh the relevant nominal variability and approximation terms captured by Lemma~\ref{lem:bound-pq}, Lemma~\ref{lem:prob-null-char} and Lemma~\ref{lem:prob-norm-null-alt-approx}. This requires particularizing to a well-chosen threshold value $q$ and subsequently characterizing the probability $v_{q,i}$ sufficiently sharply. As this highly depends on the nature of the alternative, the proofs of statements~\ref{en:th1_boundary} and~\ref{en:th1_weak} diverge, and we start with the former. For both statements, it suffices (and will be necessary) to assume that $\abs{\cS_{T,n}} = \bigO(1)$.

\myparagraph{Statement~\ref{en:th1_boundary}} Under this scenario, we choose
\[
q_n = \argmax\left\{q\in Q_n : q < 1-\frac{1}{\log^{1/4}(n)} \right\} \ = \frac{\floor{2k_n\left(1-\frac{1}{\log^{1/4}(n)}\right)}}{2k_n} \leq 1-\frac{1}{\log^{1/4}(n)} \ .
\]
If $k_n \to \infty$ then $q_n \to 1$. Note that
\[
\max_{i\in[n]}\mu_{n,i} \leq \sqrt{3} \ , \text{ and } \max_{i\in\cS_{B,n}\cup\cS_{W,n}} \mu_{n,i} \leq \sqrt{\frac{2(1+\eta)\log(n)}{t}} \ .
\]
Since we have assumed $\abs{\cS_{T,n}} = \bigO(1)$ we can bound
\begin{align*}
\frac{\sqrt{t}}{n}\sum_{i\in\cS} \mu_{n,i} &\leq \frac{\sqrt{t}}{n} \left(\abs{\cS_{T,n}}\sqrt{3} + \abs{\cS_{W,n}\cup\cS_{B,n}} \sqrt{\frac{2(1+\eta)\log(n)}{t}}\right) \\
&\leq \sqrt{t}n^{-1 + o(1)} + n^{-\frac{1}{2} + o(1)} \ , \numberthis\label{eq:bound-sum-mu}
\end{align*}
where we used that $s \leq \sqrt{n}$. Then Lemma~\ref{lem:prob-norm-null-alt-approx} implies that for this choice of $q_n$ we have
\[
n(p_{q_n} - \tilde p_{q_n}) \leq n^{-\frac{1}{2}+\bigO(\sqrt{t/n^2\log(n)})}\left(\sqrt{\frac{t}{n}} + 1\right) \to 0\ ,
\]
where the convergence holds under the assumption $t = o(n)$. Next, we will use the following lemma:
\begin{lem}\label{lem:prob-strong-char}
Consider the alternative in~\eqref{hyp:general}. Let $\mu_{n,i} \geq \sqrt{2r\log(n)/t}$. Then $v_{q_n,i} \to 1$ if either one of the following conditions hold:
\begin{enumerate}
\item $q_n \to q < r\ .$
\item $q_n \leq r - \xi_n$ and $\xi_n > 0$ and $\xi_n = \omega\left(\sqrt{\frac{1}{\log(n)}}\right)\ .$
\end{enumerate}
\end{lem}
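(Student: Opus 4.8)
The plan is to read $v_{q_n,i}$ as a lower-tail concentration statement for a scaled sum of independent bounded variables and then to exploit the gap between the signal magnitude $\mu_{n,i}$ and the threshold. Writing $S_i \equiv \tfrac{2\sqrt{3}}{t}\sum_{j\in[t]}\left(U_{ij}-\tfrac12\right)$, the simplification~\eqref{eq:nq-simplified} together with the definition~\eqref{eq:def-vqi} gives $v_{q_n,i} = \Pha{S_i \geq \sqrt{2q_n\log(n)/t}\,(1+\delta_{n,i,q_n})}$. Under $H_1$ the variables $U_{i1},\dots,U_{it}$ are independent and take values in $[0,1]$, so $S_i$ equals $2\sqrt3$ times a centred empirical average, and by~\eqref{eq:def_mu} its mean is exactly $\Eha{S_i}=\mu_{n,i}$. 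Hence, with $\Delta_n \equiv \mu_{n,i}-\sqrt{2q_n\log(n)/t}\,(1+\delta_{n,i,q_n})$, the complement probability is $1-v_{q_n,i}=\Pha{S_i-\Eha{S_i} < -\Delta_n}$, and it suffices to show that $\Delta_n>0$ and $t\Delta_n^2\to\infty$.

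First I would invoke concentration. Since $U_{ij}-\tfrac12\in[-\tfrac12,\tfrac12]$, Hoeffding's inequality gives $\Pha{S_i-\Eha{S_i}\leq -\Delta_n}\leq \exp(-t\Delta_n^2/6)$ whenever $\Delta_n>0$. Using $\mu_{n,i}\geq\sqrt{2r\log(n)/t}$ and abbreviating $A\equiv\sqrt{2\log(n)/t}$,
\begin{equation*}
\Delta_n \geq A\big(\sqrt{r}-\sqrt{q_n}\big) - A\sqrt{q_n}\,\abs{\delta_{n,i,q_n}} \ ,
\end{equation*}
where the bound~\eqref{eq:bound-delta} controls the correction via $\abs{\delta_{n,i,q_n}}\leq 1/\sqrt{\log(n)}$, so that its contribution is at most $A\sqrt{q_n}/\sqrt{\log(n)}$. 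Here $r$ and $q_n$ are bounded (indeed $\mu_{n,i}\leq\sqrt3$ already forces $r\leq 3t/(2\log(n))$, and in the intended application to statement~\ref{en:th1_boundary} one has $r<1+\eta$), so these factors are harmless.

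The two cases then differ only in the size of the gap $\sqrt r - \sqrt{q_n}$. In case~1, $q_n\to q<r$ keeps $\sqrt r-\sqrt{q_n}$ bounded away from zero for large $n$, while the correction is $o(A)$; hence $\Delta_n\geq cA$ for some constant $c>0$ and all large $n$, giving $t\Delta_n^2\geq 2c^2\log(n)\to\infty$ and in fact the polynomial rate $v_{q_n,i}\geq 1-n^{-c^2/3}$. In case~2, I would bound $\sqrt r-\sqrt{q_n}\geq \sqrt r-\sqrt{r-\xi_n}\geq \xi_n/(2\sqrt r)$; because $\xi_n=\omega(1/\sqrt{\log(n)})$, the main term $A\xi_n/(2\sqrt r)$ dominates the correction $A\sqrt{q_n}/\sqrt{\log(n)}$ (their ratio is $\gtrsim \xi_n\sqrt{\log(n)}\to\infty$), so $\Delta_n\geq A\xi_n/(2\sqrt r)\,(1-o(1))>0$ and $t\Delta_n^2 \gtrsim \xi_n^2\log(n)=\omega(1)\to\infty$. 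In either case $\exp(-t\Delta_n^2/6)\to0$, which yields $v_{q_n,i}\to1$.

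I expect the only delicate point to be guaranteeing that the threshold lies strictly below the mean, i.e.\ controlling the multiplicative correction $\delta_{n,i,q_n}$ against the $r$-versus-$q_n$ gap. This is precisely where the hypothesis $\xi_n=\omega(\sqrt{1/\log(n)})$ in case~2 is used: $\delta_{n,i,q_n}$ is of order $1/\sqrt{\log(n)}$ by~\eqref{eq:bound-delta}, and it must be negligible relative to the gap $\xi_n$ for $\Delta_n$ to stay positive and for $t\Delta_n^2$ to diverge. Everything beyond this comparison is a routine application of the Hoeffding bound.
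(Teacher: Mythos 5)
Your proposal is correct and follows essentially the same route as the paper's proof: both reduce $1-v_{q_n,i}$ to a lower-deviation probability for the centred sum $\frac{2\sqrt{3}}{t}\sum_{j}(U_{ij}-\tfrac12)$, verify via~\eqref{eq:bound-delta} that the gap $\mu_{n,i}-\tau_n$ stays positive and of order $\sqrt{2\log(n)/t}$ (case 1) or $\sqrt{2\log(n)/t}\,\xi_n$ (case 2), and then apply a concentration bound. The only difference is that you invoke Hoeffding's inequality where the paper uses Chebyshev with the Popoviciu variance bound; this yields a stronger (polynomial in $n$) rate but the identical conclusion, and the delicate step --- dominating the $\bigO(1/\sqrt{\log(n)})$ correction $\delta_{n,i,q_n}$ by the gap, which is exactly where $\xi_n=\omega(1/\sqrt{\log(n)})$ enters --- is handled the same way in both arguments.
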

Since $\min_{i\in\cS_{B,n}} \mu_{n,i} \geq \sqrt{2\log(n)/t}$, and $q_n \leq 1-\log^{-1/4}(n)$, application of Lemma~\ref{lem:prob-strong-char} implies that $\min_{i\in\cS_{B,n}} v_{q_n,i} \to 1$ such that for $n$ sufficiently large
\[
\sum_{i\in\cS} v_{q_n,i} \geq \frac{\abs{\cS_{B,n}}}{2} \ .
\]
Then the requirements~\eqref{eq:req-endingth1-2a} and~\eqref{eq:req-endingth1-2b} are satisfied if for our choice of $q_n$ we have that
\begin{equation}\label{eq:final-req-statii}
\frac{\tilde k_n\sqrt{np_{q_n}} + s\tilde p_{q_n} + \sqrt{n\tilde p_{q_n}} + 1}{\abs{\cS_{B,n}}} \to 0 \ .
\end{equation}
Since $\floor{x} \geq x - 1$ for $x>0$, we have that $q_n \geq 1-\frac{1}{\log^{1/4}(n)} - \frac{1}{2k_n}$. Then Lemma~\ref{lem:bound-pq} implies
\begin{equation}\label{eq:statement-ii-pq}
p_q \leq n^{-1 + \log^{-1/4}(n) + \frac{1}{2k_n} + \frac{\sqrt{6}}{3}\sqrt{\frac{\log(n)}{t}}} \ .
\end{equation}
Using~\eqref{eq:bound-delta} and Lemma~\ref{lem:prob-null-char} we have
\begin{align*}
\tilde p_q &\leq n^{-(1-\abs{\delta_{n,i,q_n}})(1- \log^{-1/4}(n) -\frac{1}{2k_n}) + (1+\delta_{n,i,q_n})^2\frac{\sqrt{6}}{3}\sqrt{\frac{\log(n)}{t}}} \\
&\leq n^{-1 +  \log^{-1/4}(n) + \log^{-1/2}(n) + \frac{1}{2k_n} + (1+\delta_{n,i,q_n})^2\frac{\sqrt{6}}{3}\sqrt{\frac{\log(n)}{t}}} \ . \numberthis\label{eq:statement-ii-tildepq}
\end{align*}
Now, using~\eqref{eq:statement-ii-pq} and~\eqref{eq:statement-ii-tildepq}, recalling the definition of $\tilde k_n$ in~\eqref{eq:def-kn}, and using the bound $(1+\delta_{n,i,q_n})^2\sqrt{6}/6 \leq 1/2$ which holds for $n$ sufficiently large, the convergence in~\eqref{eq:final-req-statii} holds provided
\[
\abs{\cS_{B,n}} = \omega\Big(n^{\frac{1}{2}\log(n)^{-1/4} +\frac{1}{4k_n} + \frac{1}{2}\sqrt{\frac{\log(n)}{t}}}\big(\sqrt{k_n} + n^{\frac{1}{2}\log^{-1/2}(n)}\big)\Big) \ ,
\] 
completing the proof of statement~\ref{en:th1_boundary}.

\myparagraph{Statement~\ref{en:th1_weak}}
Let $q_n \to q > 0$, where we will particularize the value of $q$ depending on the value of $r$ and $\gamma$. First, since $\beta >1/2$ and through our assumption $\abs{\cS_{T,n}} = \bigO(1)$, the bound in~\eqref{eq:bound-sum-mu} and Lemma~\ref{lem:prob-norm-null-alt-approx} imply if $t = o(n)$ and $t=\omega(\log(n))$ we can bound
\begin{equation}\label{eq:prob-norm-null-alt-approx-implied}
n(p_{q_n} - \tilde  p_{q_n}) \leq n^{\frac{1}{2}-q+o(1)} \ ,
\end{equation}
and Lemma~\ref{lem:prob-null-char} and~\eqref{eq:bound-delta} implies that if $t = \omega(\log(n))$ and $t = o(n)$ we have
\[
\tilde p_{q_n} \leq n^{-q + o(1)} \ .
\]
Lemma~\ref{lem:bound-pq} implies for $t=\omega(\log(n))$ that
\begin{equation}\label{eq:prob-purenull-bound}
p_{q_n} \leq n^{-q+o(1)} \ .
\end{equation}
Using all of the above and noting that we have $\frac{1-q}{2} \geq \frac{1}{2} - q$ we find that the two requirements in~\eqref{eq:req-endingth1-2a} and~\eqref{eq:req-endingth1-2b} are satisfied if there exists a sequence $q_n \in Q_n$ with $q_n \to q > 0$ such that:
\begin{equation}\label{eq:final-req}
\frac{n^{\frac{1-q}{2} + o(1)} + 1}{\sum_{i\in\cS} v_{q_n,i}} \to 0 \ .
\end{equation}

We pause to remark that the statement above captures the requirements for the test to have asymptotic power without resorting to considering minimal signal magnitude and heteroskedasticity. However, the implicit nature of the requirement above (through the existence of $q_n \in Q$) calls for a more concrete result, which we give through considering minimal signal magnitude and heteroskedasticity.

To proceed, we need to carefully consider two subcases separately: the subcase where $r > 2\beta - 1$ and $\gamma \geq 0$, and the subcase where $\rho(\beta,\gamma) < r \leq 2\beta - 1$ for $\gamma > 0$. Note that the first subcase implies the result of the theorem for $\gamma=0$, since $\rho(\beta,0) = 2\beta - 1$.

We start with the case where $r > 2\beta - 1 = \rho(\beta,0)$. In that case, we choose $q = \min\{r - \frac{1}{2}(r-(2\beta - 1)), 1\} \in (0,1]$. Now, since $k_n \to \infty$ we have that there exists a $q_n \in Q_n$ such that $q_n \to q$.

Since $r \in (2\beta - 1, 1)$, we have that $q < r$, such that Lemma~\ref{lem:prob-strong-char} implies that $\min_{\cS}\{v_{q_n,i}\} \to 1$ yielding the following lower bound for $n$ sufficiently large:
\[
\sum_{i\in\cS} v_{q_n,i} \geq \frac{s}{2} = n^{1-\beta + o(1)} \ .
\]
Then~\eqref{eq:final-req} holds if:
\[
\max\left\{ \frac{1-q}{2}, 0 \right\} < 1-\beta \ .
\]
Now, since $\beta < 1$ the only binding condition is the first one. For our choice of $q$ the above is trivially true if the minimum is attained at $q=1$. In the other case, the above is true if
\[
\frac{1}{2} - \frac{1}{2}\Bigg(r - \frac{1}{2}\Big(r-(2\beta - 1)\Big)\Bigg)  < 1-\beta \ \Longrightarrow \ r > 2\beta - 1 \ ,
\]
which is true by assumption.

Next, assume $\rho(\beta,\gamma) < r \leq 2\beta - 1$ and $\gamma > 0$. Since $\beta < 1$ we have that $r < 1$. In this case, we choose $q$ to be fixed value from $(r,1]$ which we will particularize later. Once again, since $k_n \to \infty$ and $q \in (0,1]$, for this choice of $q$ there exists a sequence $q_n \in Q_n$ such that $q_n \to q$. However, Lemma~\ref{lem:prob-strong-char} does not apply as $r < q$. The following lemma characterizes the behavior of $v_{q,i}$ in this case:
\begin{lem}\label{lem:prob-alt-char}
Consider the alternative in~\eqref{hyp:general}. Let $q_n \to q > 0$. Let $\delta_{n,i,q} = o(1)$ for $i\in\cS$. If $\mu_{n,i} = \sqrt{2r_i(1+o(1))\log(n)/t}$ for some $r_i < q$ and if $t = \omega(\log(n))$ and if $\sigma_{n,i}$ in~\eqref{eq:def_var} is bounded away from zero, then:
\[
v_{q_n,i} \geq n^{-\frac{1}{\sigma^2_{n,i}}(\sqrt{q}-\sqrt{r_i})^2 + o(1)} \ .
\]
\end{lem}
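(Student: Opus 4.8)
The plan is to recognize $v_{q_n,i}$ as a moderate-deviation tail probability for a sum of independent bounded random variables and to lower-bound it through the classical Cram\'er change-of-measure argument. Writing $V_{ij} \equiv 2\sqrt{3}(U_{ij} - \tfrac{1}{2})$ and $W_i \equiv \sum_{j\in[t]} V_{ij}$, the variables $V_{ij}$, $j\in[t]$, are independent and bounded in $[-\sqrt{3},\sqrt{3}]$, with $\tfrac{1}{t}\sum_{j} \E{V_{ij}} = \mu_{n,i}$ and $\tfrac{1}{t}\sum_j \Var{V_{ij}} = \sigma_{n,i}^2$ by Definition~\ref{def:rank_trans_char}. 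In this notation $v_{q_n,i} = \Pha{W_i \geq c_n}$ with threshold $c_n \equiv \sqrt{2q_n t\log(n)}\,(1+\delta_{n,i,q_n})$. Since $q_n \to q$ and $\delta_{n,i,q} = o(1)$, while $t\mu_{n,i} = \sqrt{2 r_i t \log(n)}\,(1+o(1))$, the centered threshold satisfies $c_n - t\mu_{n,i} = \sqrt{2t\log(n)}(\sqrt{q}-\sqrt{r_i})(1+o(1))$, which is positive (as $r_i < q$) and of order $\sqrt{t\log(n)}$, i.e.\ $\Theta(\sqrt{\log(n)})$ standard deviations of $W_i$ above its mean. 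Because $t = \omega(\log(n))$, this deviation is $o(\sqrt{t})$, placing us squarely in the moderate-deviation regime where Gaussian-type rates are expected.

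To obtain the lower bound I would tilt each summand. For $\lambda > 0$, let $\Lambda_{ij}(\lambda) \equiv \log \E{e^{\lambda V_{ij}}}$ and define the tilted law $\widetilde{\bbP}$ under which the $V_{ij}$ remain independent with $d\widetilde{\bbP}_{ij} \propto e^{\lambda V_{ij}}\, d\bbP_{ij}$. The change-of-measure identity then gives, for any window $[c_n, c_n + w_n]$,
\[
v_{q_n,i} \geq \widetilde{\bbP}\big(c_n \leq W_i \leq c_n + w_n\big)\, \exp{\textstyle\sum_{j\in[t]}\Lambda_{ij}(\lambda) - \lambda(c_n + w_n)} \ .
\]
I would take $w_n \asymp \sqrt{t}\,\sigma_{n,i}$ (a constant number of standard deviations, using that $\sigma_{n,i}$ is bounded away from zero) and choose $\lambda = \lambda_n$ so that the tilted mean $\sum_{j}\Lambda_{ij}'(\lambda_n)$ falls at the centre of this window. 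Since boundedness makes each $\Lambda_{ij}$ smooth with derivatives bounded uniformly in $i,j$, a Taylor expansion yields $\sum_j \Lambda_{ij}(\lambda) = \lambda t\mu_{n,i} + \tfrac{\lambda^2}{2} t\sigma_{n,i}^2 + \bigO(t\lambda^3)$ and $\sum_j \Lambda_{ij}'(\lambda) = t\mu_{n,i} + \lambda t\sigma_{n,i}^2 + \bigO(t\lambda^2)$, so the centering condition forces $\lambda_n = \sqrt{2\log(n)/t}\,(\sqrt{q}-\sqrt{r_i})/\sigma_{n,i}^2 \,(1+o(1))$.

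Substituting $\lambda_n$ and using the centering relation, the quadratic terms combine to give $\sum_j \Lambda_{ij}(\lambda_n) - \lambda_n(c_n + w_n) = -\tfrac{\lambda_n^2}{2} t\sigma_{n,i}^2 + \bigO(t\lambda_n^3) - \lambda_n w_n$. Here $-\tfrac{\lambda_n^2}{2}t\sigma_{n,i}^2 = -\log(n)(\sqrt{q}-\sqrt{r_i})^2/\sigma_{n,i}^2\,(1+o(1))$ is the leading term, while $\lambda_n w_n = \bigO(\sqrt{\log(n)}) = o(\log(n))$ and the cubic remainder is $\bigO(t\lambda_n^3) = \bigO((\log(n))^{3/2}/\sqrt{t}) = o(\log(n))$, the latter precisely because $t = \omega(\log(n))$. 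Finally, since under $\widetilde{\bbP}$ the variable $W_i$ has mean at the window centre and variance $t\sigma_{n,i}^2(1+o(1))$, a one-sided Chebyshev (Cantelli) bound --- or a Lindeberg CLT, valid for bounded triangular arrays with non-degenerate variance --- shows $\widetilde{\bbP}(c_n \leq W_i \leq c_n + w_n)$ is bounded below by a positive constant, contributing only $n^{o(1)}$. Collecting the exponent gives $\log v_{q_n,i} \geq -\log(n)(\sqrt{q}-\sqrt{r_i})^2/\sigma_{n,i}^2 + o(\log(n))$, which is the claim.

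The main obstacle is the uniform control of the higher-order cumulants: the Gaussian rate only emerges after one shows the cubic term $\bigO(t\lambda_n^3)$ is negligible relative to $\log(n)$, and this is exactly where the hypothesis $t = \omega(\log(n))$ enters. Boundedness of the $V_{ij}$ is what makes this cubic remainder uniform in $(i,j)$ and insensitive to how the laws of $U_{ij}$ vary across referentials, so that neither the non-identical distributions nor a possibly non-converging $\sigma_{n,i}^2$ (cf.\ Remark~\ref{rem:nonconverging_sigma}) disrupts the argument; the remaining care is simply to verify that shifting the tilt to centre the window (rather than to hit $c_n$ exactly) perturbs $\lambda_n$ and the final exponent only at lower order.
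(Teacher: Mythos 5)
Your proposal is correct and follows essentially the same route as the paper's proof: an exponential tilting (change-of-measure) argument with the tilt parameter chosen so the tilted mean lands at the threshold, Taylor expansion of the log-moment-generating functions exploiting boundedness of $2\sqrt{3}(U_{ij}-\tfrac12)$ to control the cubic remainder via $t=\omega(\log(n))$, and a CLT-type lower bound on the window probability under the tilted law. The only cosmetic differences are that you center the tilted mean at the midpoint of the window rather than at the threshold itself and allow Cantelli in place of the Lindeberg CLT, neither of which changes the substance of the argument.
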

The proof of this lemma is provided in Section~\ref{sec:prob-alt-char}. Note that the bound in the first result of Lemma~\ref{lem:prob-alt-char} is increasing in $\sigma_{n,i}$ and $r_i$, such that
\[
\min_{i\in\cS_{W,n}}\{v_{q_n,i}\} \geq n^{-\frac{1}{\gamma}(\sqrt{q}-\sqrt{r})^2 + o(1)} \ .
\]
For $n$ sufficiently large, we can now lower bound $v_{q,i}$ as
\begin{align*}
\sum_{i\in\cS} v_{q_n,i} &= \sum_{i\in\cS_{W,n}}v_{q_n,i} + \sum_{i\in\cS_{T,n}}v_{q_n,i} \\
&\geq (s-\abs{\cS_{T,n}})\min_{i\in\cS_{W,n}}\{v_{q_n,i}\} + \frac{\abs{\cS_{T,n}}}{2} \\
&\geq n^{1-\beta-\frac{1}{\gamma^2}(\sqrt{q}-\sqrt{r})^2+o(1)} - \frac{\abs{\cS_{T,n}}}{2} \ ,
\end{align*}
where we bounded $v_{q_n,i} \geq 1/2$ for $i\in\cS_{T,n}$, and trivially bounded $v_{q_n,i} \leq 1$ in the final inequality. Since $\abs{\cS_{T,n}} = \bigO(1)$ the bound above implies the convergence in~\eqref{eq:final-req} holds if
\begin{equation}\label{eq:algebra-system}
\max\left\{ \frac{1-q}{2}, 0 \right\} < 1-\beta-\frac{1}{\gamma^2}(\sqrt{q}-\sqrt{r})^2 \ .
\end{equation}

We then have the following:
\begin{lem}\label{lem:algebra-system}
Consider Equation~\eqref{eq:algebra-system} with $r < 1$.
\begin{enumerate}
\item If $\gamma^2 < 2$ and
\[
r > \rho_1(\beta,\gamma) \equiv \begin{cases} 
(2-\gamma^2)(\beta - \frac{1}{2})\ , & 1/2 < \beta \leq 1-\gamma^2/4\ , \\ 
(1 - \gamma\sqrt{1-\beta})^2\ ,		&  1-\gamma^2/4 < \beta < 1 \ ,
\end{cases}
\]
then the system~\eqref{eq:algebra-system} holds for $q = \min\{\frac{4r}{(\gamma^2-2)^2}, 1 \}$ with $q > r$.
\item If $\gamma^2 \geq 2$ and
\[
r > \rho_2(\beta,\gamma) \equiv \begin{cases}
0 \ , 								& 1/2 < \beta \leq 1-1/\gamma^2 \ ,\\
(1 - \gamma\sqrt{1-\beta})^2 \ ,		& 1-1/\gamma^2 < \beta < 1 \ ,
\end{cases}
\]
then the system~\eqref{eq:algebra-system} holds for $q = 1$. If $\gamma > 2$ and $\beta\in(1/2, 1-1/\gamma^2)$ then the system also holds for $r=0$.
\end{enumerate}
\end{lem}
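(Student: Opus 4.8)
The plan is to reduce the lemma to an elementary optimisation of a single quadratic. Every value of $q$ we consider lies in $(0,1]$, so $\max\{(1-q)/2,\,0\} = (1-q)/2$, and the system~\eqref{eq:algebra-system} is equivalent to $g(q) > 0$, where
\[
g(q) \equiv \tfrac{1}{2} - \beta + \tfrac{q}{2} - \tfrac{1}{\gamma^2}\big(\sqrt{q} - \sqrt{r}\big)^2 \ .
\]
First I would substitute $u = \sqrt{q} \in (0,1]$ to turn $g$ into the quadratic $h(u) = A u^2 + B u + C$ with $A = (\gamma^2-2)/(2\gamma^2)$, $B = 2\sqrt{r}/\gamma^2$ and $C = \tfrac{1}{2} - \beta - r/\gamma^2$. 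The whole argument hinges on the sign of $A$, which is governed precisely by whether $\gamma^2$ is below or above $2$; this is exactly what splits the statement into its two parts.

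For Part~1 ($\gamma^2 < 2$, so $A < 0$ and $h$ is concave), the unconstrained maximiser sits at $u^\star = -B/(2A) = 2\sqrt{r}/(2-\gamma^2)$, i.e.\ $q^\star = 4r/(2-\gamma^2)^2$, matching the prescription $q = \min\{q^\star, 1\}$. Since $h$ is concave, its maximum over $(0,1]$ is attained at $\min\{q^\star,1\}$, so it suffices to evaluate $g$ there. A direct computation (via $C - B^2/(4A)$) gives the interior value $g(q^\star) = \tfrac{1}{2} - \beta + r/(2-\gamma^2)$ and the endpoint value $g(1) = 1 - \beta - \gamma^{-2}(1-\sqrt{r})^2$. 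I would then check positivity case by case: when $q^\star \le 1$ the condition $g(q^\star) > 0$ is exactly $r > (2-\gamma^2)(\beta - \tfrac12)$, and when $q^\star > 1$ the condition $g(1) > 0$ is exactly $r > (1-\gamma\sqrt{1-\beta})^2$ (and holds trivially if $1 - \gamma\sqrt{1-\beta} \le 0$). The remaining bookkeeping is to confirm the two $\beta$-regimes defining $\rho_1$ align with the two $q^\star$-regimes: one checks $\beta \le 1-\gamma^2/4$ forces $1-\gamma\sqrt{1-\beta} \le (2-\gamma^2)/2$ while $\beta > 1-\gamma^2/4$ forces the reverse, so that $r > \rho_1(\beta,\gamma)$ always lands in the branch yielding $g > 0$. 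Finally $q > r$ follows because $(2-\gamma^2)^2 < 4$ gives $q^\star > r$ in the interior case, while $q = 1 > r$ in the boundary case since $r < 1$.

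For Part~2 ($\gamma^2 \ge 2$, so $A \ge 0$ and $h$ is convex or affine), a convex function on $[0,1]$ satisfies $h(u) \le \max\{h(0), h(1)\}$; since $C = g(0) = \tfrac12 - \beta - r/\gamma^2 < 0$ (using $\beta > \tfrac12$), positivity is possible only at $q = 1$, so taking $q=1$ is without loss. I would then analyse $g(1) = 1 - \beta - \gamma^{-2}(1-\sqrt{r})^2 > 0$, equivalently $(1-\sqrt{r})^2 < \gamma^2(1-\beta)$. When $\beta \le 1 - 1/\gamma^2$ we have $\gamma^2(1-\beta) \ge 1 > (1-\sqrt{r})^2$ for every $r \in (0,1)$, so any $r > 0 = \rho_2$ works; when $\beta > 1-1/\gamma^2$ the inequality reduces to $r > (1-\gamma\sqrt{1-\beta})^2 = \rho_2$. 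The claim that $r = 0$ suffices on $\beta \in (1/2, 1-1/\gamma^2)$ (which forces $\gamma^2 > 2$) follows by evaluating $g(1) = 1 - \beta - 1/\gamma^2$, strictly positive exactly on that range.

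I do not expect a genuine obstacle here; the content is a routine concave/convex quadratic optimisation. The only real care-point is the case analysis: making sure the piecewise thresholds of $\rho_1$ and $\rho_2$ are matched against the regimes $q^\star \lessgtr 1$ and the endpoint behaviour, and handling the degenerate sign situations (notably $1 - \gamma\sqrt{1-\beta} \le 0$, and $\gamma^2 = 2$ where $h$ is merely affine) so that no inequality is squared in the wrong direction.
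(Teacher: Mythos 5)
Your proof is correct and follows essentially the same route as the paper's: the same case split on whether $\gamma^2$ is below or above $2$ and on whether $q^\star = 4r/(2-\gamma^2)^2$ exceeds $1$, with the same algebra at the two candidate points $q^\star$ and $q=1$. The only (welcome) streamlining is your observation that for $q\in(0,1]$ the left-hand side $\max\{(1-q)/2,0\}$ equals $(1-q)/2$, so the second inequality of the system is implied by the first (the paper checks it separately in its first case), together with your framing of the prescribed $q$ as the maximiser of a concave quadratic in $\sqrt{q}$, which motivates rather than merely verifies the choice.
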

A similar result was previously derived in \cite{TonyCai2011}. For completeness the proof of the lemma is given in Section~\ref{sec:algebra-system}. The lemma implies that under our assumptions on $r$ and for the particular choices of $q  > r$ specified, the convergence~\eqref{eq:final-req} holds. The proof of the lemma boils down to a careful case separation. The only intricacies lie in arguing that there is no need to consider a value of $q$ larger than 1; while the value of $q = \frac{4r}{\gamma^2-2}$ are asymptotically optimal for a large signal regime, this value can grow large for $\gamma^2$ close to 2. Nevertheless, we can argue that in such cases, the choice $q=1$ also suffices. \qed

\subsection{Proof of Lemma~\ref{lem:bound-p-value}}\label{sec:bound-p-value}

We begin by using a union bound for the max-operator in the permutation statistic, resulting in a sum with $k_n$ terms, i.e. as many as the grid $Q_n$:
\[
\cP_{\text{rank-hc}}(\bR) = \P{ \max_{q \in Q_n} V_q(\bR^\pi) \geq T(\bR) \given \bR } \leq \sum_{q\in Q_n} \P{V_q(\bR^\pi) \geq T(\bR) \given \bR } \ .
\]
In the above and the derivations that follow note that probability operators are defined with respect to the distribution of $\pi$, and all statements are conditional on a given set of ranks~$\bR$. 

Note that the only random quantity inside the final probability above (conditionally on $\bR$) is $N_q(\bR^\pi)$ within the term $V_q(\bR^\pi)$ through the randomness of the permutations. Since $\Eha{N_q(\bR^\pi) \given \bR} = np_q$ we have
\[
\cP_{\text{rank-hc}}(\bR)\leq \sum_{q\in Q_n} \P{N_q(\bR^\pi) - \E{N_q(\bR^\pi)} \geq T(\bR)\sqrt{n p_q(1- p_q)} \given \bR} \ .
\]
One may apply Chebyshev's inequality to the probability term in the expression above, provided the right-hand-side of the inequality inside the probability is positive. In the case that $T(\bR)$ is negative, we trivially bound the $p$-value by one. Therefore, using Chebyshev's inequality we get
\begin{equation}\label{eq:bound-p-value-chebyshev}
\cP_{\text{rank-hc}}(\bR) \leq \ind{T(\bR) \leq 0}+\frac{\ind{T(\bR) > 0}}{T(\bR)^2}\sum_{q\in Q_n} \frac{\Var{N_q(\bR^\pi) \given \bR}}{np_q(1-p_q)}  \ ,
\end{equation}
where we convention that $0/0 = 0$. To continue, we must quantify the conditional variance of $N_q(\bR^\pi)$. For ease of notation, we first define:
\begin{equation}\label{eq:def_zq}
z_q \equiv \Rb + \sqrt{\frac{2q\sigma_R^2 \log(n)}{t}}\ ,
\end{equation}
such that $N_q(\bR^\pi)$ is a sum of the indicator variables $\ind{ Y_i(\bR^\pi) \geq z_q}$, which are dependent through the permutations $\pi$. Nevertheless, this dependency is benign. Resorting to \citet[Theorem 2.11]{Joag-Dev1983} and the properties $\text{P}_7$, $\text{P}_6$ and $\text{P}_4$ in that result, we find that conditionally on $\bR$ the random variables $Y_i(\bR^\pi)$ and $Y_k(\bR^\pi)$ are negatively associated when $i\neq k$. We may now bound the variance of $N_q(\bR^\pi)$ as:
\begin{align*}
\Var{N_q(\bR^\pi) \given \bR} &= \sum_{i\in[n]} \Var{\ind{Y_i(\bR^\pi) \geq z_q}\given \bR}\\
&\quad\qquad + \sum_{i\in[n]}\sum_{k\neq i} \Cov{\ind{Y_i(\bR^\pi) \geq z_q}, \ind{Y_k(\bR^\pi) \geq z_q} \given \bR} \\
& \leq np_q(1-p_q) \ ,
\end{align*}
where we used the definition of negative association \citep[Definition 2.1]{Joag-Dev1983}. Using the bound on the variance above in~\eqref{eq:bound-p-value-chebyshev} gives our result.
\qed 

\subsection{Proof of Lemma~\ref{lem:bound-pq}}\label{sec:bound-pq}
We prove this statement through Bernstein's inequality (see, for example, \citet[Appendix A, Section 4.5]{ShoWel}):

\begin{lem}[Bernstein inequality for bounded random variables]\label{lem:bernstein-bounded}
Let $X_1,\dots,X_n$ be independent zero-mean random variables. Suppose that for all $i$, we have $\abs{X_i} \leq M$ almost surely. Then, for all positive $\zeta$, we have:
\[
\P{\sum_{i\in[n]} X_i \geq \zeta} \leq \exp{-\frac{\frac{1}{2}\zeta^2}{\sum_{i\in[n]}\E{X_i^2} + \frac{1}{3}M\zeta}} \ .
\]
\end{lem}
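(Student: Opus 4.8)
The plan is to establish the inequality via the classical exponential (Chernoff) method, optimized over a free parameter. First I would introduce a parameter $\lambda > 0$ and apply Markov's inequality to $\exp{\lambda\sum_i X_i}$, which combined with the independence of the $X_i$ gives
\[
\P{\sum_{i\in[n]} X_i \geq \zeta} \leq e^{-\lambda\zeta}\prod_{i\in[n]}\E{e^{\lambda X_i}} \ .
\]
The entire proof then reduces to controlling each moment generating function $\E{e^{\lambda X_i}}$ and making a judicious choice of $\lambda$.

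Second, I would bound the individual moment generating functions using only the hypotheses $\E{X_i}=0$ and $\abs{X_i}\leq M$. Expanding the exponential and discarding the vanishing first-order term yields $\E{e^{\lambda X_i}} = 1 + \sum_{k\geq 2}\frac{\lambda^k\E{X_i^k}}{k!}$, and boundedness gives $\abs{\E{X_i^k}}\leq M^{k-2}\E{X_i^2}$, so that $\E{e^{\lambda X_i}} \leq 1 + \frac{\E{X_i^2}}{M^2}(e^{\lambda M}-1-\lambda M)$. Applying $1+u\leq e^u$ term by term and taking the product over $i$, the right-hand side of the Chernoff bound becomes $\exp{-\lambda\zeta + \frac{v}{M^2}(e^{\lambda M}-1-\lambda M)}$, writing $v \equiv \sum_{i\in[n]}\E{X_i^2}$.

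Third, to produce the specific denominator with the constant $\frac{1}{3}$, I would bound $e^u-1-u$ for $0\leq u<3$ by the elementary power-series estimate $e^u-1-u = \sum_{k\geq 2}\frac{u^k}{k!} \leq \frac{u^2}{2}\sum_{j\geq 0}(u/3)^j = \frac{u^2/2}{1-u/3}$, which follows from the inequality $k!\geq 2\cdot 3^{k-2}$ valid for all $k\geq 2$. Applied with $u = \lambda M$ (legitimate whenever $\lambda M<3$), this collapses the exponent to $-\lambda\zeta + \frac{\frac{1}{2}\lambda^2 v}{1-\lambda M/3}$.

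Finally, I would make the explicit choice $\lambda = \zeta/(v+M\zeta/3)$, which lies in the admissible range $(0,3/M)$ since $v\geq 0$. A direct substitution, using $1-\lambda M/3 = v/(v+M\zeta/3)$, collapses the exponent to exactly $-\frac{1}{2}\zeta^2/(v+\frac{1}{3}M\zeta)$, which is the claimed bound. The main obstacle is not any single step but the bookkeeping required to land on the sharp constant $\frac{1}{3}$: a cruder estimate such as $e^u-1-u\leq\frac{1}{2}u^2 e^u$ would yield a strictly weaker denominator, so the power-series bound with the factor $3^{k-2}$ and the matching optimal $\lambda$ must be arranged in tandem. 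The degenerate case $v=0$, in which all $X_i=0$ almost surely, is handled separately and trivially.
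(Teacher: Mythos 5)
Your proof is correct, and it is the standard Chernoff-method derivation of Bernstein's inequality: the moment bound $\abs{\E{X_i^k}}\leq M^{k-2}\E{X_i^2}$, the series estimate $k!\geq 2\cdot 3^{k-2}$ giving $e^u-1-u\leq \tfrac{u^2/2}{1-u/3}$, and the optimizing choice $\lambda=\zeta/(v+M\zeta/3)$ all check out, and you are right that the degenerate case $v=0$ needs separate (trivial) treatment, since there the optimal $\lambda$ would sit on the boundary $\lambda M=3$. Note that the paper itself does not prove this lemma --- it invokes it as a known result with a citation to Shorack and Wellner --- so your argument simply supplies, correctly, the classical proof the paper defers to.
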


Let $D_j$ denote i.i.d. random variables with discrete uniform distribution over $[n]$. Note that under the null hypothesis,
\[
Y_1(\bR) \overset{d}{=} \frac{1}{t}\sum_{j\in[t]}D_j \ .
\]
Then since $\abs{\sqrt{\frac{12}{n^2-1}}\left(D_j - \frac{n+1}{2}\right)} \leq \sqrt{3}$, application of Lemma~\ref{lem:bernstein-bounded} implies
\begin{align*}
p_q &\equiv \Phn{\frac{Y_1(\bR) - \Rb}{\sigma_R} \geq \sqrt{\frac{2q \log(n)}{t}} } \\
&= \P{\sum_{j\in[t]} \sqrt{\frac{12}{n^2-1}}\left(D_j - \frac{n+1}{2}\right) \geq \sqrt{2qt\log(n)}} \\
&\leq \exp{\frac{-q\log(n)}{1+ \frac{\sqrt{6}}{3}\sqrt{q\log(n)/t}}} \\
&\leq \exp{-q\left(1-\frac{\sqrt{6}}{3}\sqrt{\frac{n^2}{n^2-1}}\sqrt{\frac{q\log(n)}{t}} \right)\log(n)} \\
&= n^{-q + \frac{\sqrt{6}}{3}\sqrt{\frac{q^3\log(n)}{t}}} \ ,
\end{align*}
where the first inequality is due to Lemma~\ref{lem:bernstein-bounded} and the second inequality holds due to $\frac{1}{1+x} \geq 1-x$ if $x > -1$. \qed

\subsection{Proof of Lemma~\ref{lem:bound-EMiq}}\label{sec:bound-Emiq}
Define $W_{i} \equiv (Y_{i}(\bR) - \overline R)/\sigma_R$ and ${\tau_n \equiv \sqrt{2q_n\log(n)/t}}$ for brevity. Then our probability of interest can be rewritten and bounded as
\begin{align*}
1-\Pha{W_{i} \leq \tau_n} &= 1-\Pha{W_{i} - \Eha{W_{i}} \leq \tau_n - \Eha{W_i}} \\
&= 1-\Pha{-(W_{i} - \Eha{W_{i}}) \geq \Eha{W_{i}}-\tau_n} \\
&\geq 1-\Pha{\abs{W_{i}- \Eha{W_{i}}} \geq \Eha{W_{i}} - \tau_n} \numberthis\label{eq:rawranks-alt-prob-1}\ .
\end{align*}
We now aim to apply Chebyshev's inequality on the final term and show it converges to 0, which implies our result. To do so we first find a lower bound on $\E{R_{ij}}$ and subsequently $\E{W_i}$. Note first that the moments of the individual ranks can be related to the expected value of $U_{ij}$ as:
\begin{align*}
\Eha{R_{ij}} &= \Eha{\lim_{\delta\to 0} \sum_{k\in[n]} \ind{X_{kj}^{(\delta)} \leq X_{ij}^{(\delta)}} } \\
&\geq \Eha{\lim_{\delta\to 0} \sum_{k\not\in\cS} \ind{X_{kj}^{(\delta)} \leq X_{ij}^{(\delta)}} }
= (n-s)\Eha{U_{ij}} \ ,
\end{align*}
where $X_{ij}^{(\delta)}$ is the convolution of the observation $X_{ij}$ and the $\delta$-dilated distribution of $W_{ij}$ resulting from the tie-breaking transformation~\eqref{eq:def_tiebreak}. Using the above and recalling the definition of $\mu_{n,i}$ in~\eqref{eq:def_mu} we can bound
\begin{align*}
\Eha{W_{i}} &\geq \sqrt{\frac{12}{n^2-1}}\left(\frac{n-s}{t}\sum_{j\in[t]}\left(\Eha{U_{{i}j}} -\tfrac{1}{2}\right) - \frac{s+1}{2}\right) \\
&\geq (1 - \tfrac{1}{2n^2})(1-\tfrac{s}{n})\mu_{n,i} - \frac{\sqrt{3}(s+1)}{n} \ . \\
&= (1+o(1))\mu_{n,i}
\end{align*}
where the final order statements hold by $\frac{s}{n}\sqrt{\frac{t}{\log(n)}} = o(1)$. Now, since $q_n \to q$, and $\mu_{n,i} \geq \sqrt{2r\log(n)/t}$ with $r > q$ this implies there exists a constant $c > 0$ such that for sufficiently large $n$ we have
\begin{align*}
\Eha{W_{i}} - \tau_n &\geq \left(\sqrt{r}(1+o(1)) -\sqrt{q_n} \right)\sqrt{\frac{2\log(n)}{t}} \geq c\sqrt{\frac{2\log(n)}{t}} \ .
\end{align*}
By Popoviciu's inequality on variances, $\max_{i\in[n],j\in[t]}\Var{R_{ij}} \leq \frac{(n-1)^2}{4}$ and thus 
\[
\Varha{W_{i}} \leq \frac{12}{n^2-1}\cdot \frac{(n-1)^2}{4} \leq \frac{3}{t} \ .
\]
Then application of Chebyshev's inequality on the final term in~\eqref{eq:rawranks-alt-prob-1} implies:
\[
\Pha{\abs{W_{i} - \Eha{W_{i}}} \geq \Eha{W_{i}} - \tau_n} \leq \frac{\Varha{W_{i}}}{(\Eha{W_{i}} - \tau_n)^2} \leq \frac{3}{t} \cdot \frac{c^2t}{2\log(n)} \to 0 \ ,
\]
which together with the bound in~\eqref{eq:rawranks-alt-prob-1} completes the proof. \qed

\subsection{Proof of Lemma~\ref{lem:bound-Nq}}\label{sec:bound-Nq}

The statement of the lemma suggests that our rank subject means $Y_i(\bR)$ are lower bounded by $\frac{n-s}{t}\sum_{j\in[t]} U_{ij}$, minus a discrepancy term $\varepsilon_{n,1}$ of smaller order. To prove this, the following event will naturally play a critical role:
\[
M_i \equiv \left\{ Y_i(\bR) \geq \frac{n-s}{t}\sum_{j\in[t]}U_{ij} +\frac{1}{t}\sum_{j\in[t]}\sum_{k\in\cS}\E{\lim_{\delta\to 0} F_{kj}^{(\delta)}(X_{ij}^{(\delta)})} - 2\sqrt{\frac{n\log(n)}{t}} -1 \right\} \ .
\]
Letting $x \in \bbR$, the event $M_i$ above comes into play as follows:
\begin{align*}
\P{N_q(\bR) \geq x} &\geq \P{N_q(\bR) \geq x \given \bigcap_{i\in[n]} M_i} \P{\bigcap_{i\in[n]} M_i} \\
&\geq \P{\tilde N_q(\bX) \geq x \given \bigcap_{i\in[n]} M_i} \P{\bigcap_{i\in[n]} M_i} \\
&= \P{\tilde N_q(\bX) \geq x} - \P{\tilde N_q(\bX) \geq x \given \bigcup_{i\in[n]} M_i^c} \P{\bigcup_{i\in[n]} M_i^c} \\
&\geq  \P{\tilde N_q(\bX) \geq x} - \P{\bigcup_{i\in[n]} M_i^c} \\
&\geq \P{\tilde N_q(\bX) \geq x} - \sum_{i\in[n]}\P{M_i^c} \ , \numberthis\label{eq:bound-Nq}
\end{align*}
where the last line is due to a union bound. The bound above implies the statement of the lemma when the second term converges to zero. To show this, we need to lower bound the ranks. Note that we can decompose the value of individual ranks as:
\begin{align*}
R_{ij} &= \lim_{\delta \to 0} \sum_{k\in[n]}\ind{X_{kj}^{(\delta)} \leq X_{ij}^{(\delta)}} \\
&= (n-s)U_{ij} + \lim_{\delta \to 0}  \underbrace{\sum_{k\in\cS} F_{kj}^{(\delta)}(X_{ij}^{(\delta)})}_{\equiv V_{ij}^{(\delta)}} + \sum_{k\in[n]}\left( \ind{X_{kj}^{(\delta)} \leq X_{ij}^{(\delta)}} - F_{kj}^{(\delta)}(X_{ij}^{(\delta)}) \right) \ .
\end{align*}
This decomposition implies the following lower bound for our rank means $Y_i(\bR)$:
\begin{align*}
Y_i(\bR) &= \frac{n-s}{t}\sum_{j\in[t]} U_{ij} + \lim_{\delta\to 0} \frac{1}{t}\sum_{j\in[t]}V_{ij}^{(\delta)} + \frac{1}{t}\sum_{j\in[t]}\sum_{k\in[n]}\ind{X_{kj}^{(\delta)} \leq X_{ij}^{(\delta)}} -  F_{kj}^{(\delta)}(X_{ij}^{(\delta)}) \\
&\geq \frac{n-s}{t}\sum_{j\in[t]} U_{ij} +  \lim_{\delta\to 0} \frac{1}{t}\sum_{j\in[t]}V_{ij}^{(\delta)} - \underbrace{\abs{\frac{1}{t}\sum_{j\in[t]}\sum_{k\in[n]}\ind{X_{kj}^{(\delta)} \leq X_{ij}^{(\delta)}} - F_{kj}^{(\delta)}(X_{ij}^{(\delta)}) }}_{\equiv S_i^{(\delta)}} \ .
\end{align*}
To continue with our bound for $\P{M_i^c}$, we construct two events 
$A_i$ and $B_i$ which (due to the lower bound on our rank means $Y_i(\bR)$ above) jointly imply $M_i^c$, thereby reducing the bound on $\P{M_i^c}$ to bounding the probabilities of those events. Specifically, if we define the two events:
\begin{align*}
A_i &\equiv \left\{\lim_{\delta\to 0} S_i^{(\delta)} \leq \sqrt{\frac{n\log(n)}{t}} + 1 \right\} \ , \\ 
B_i &\equiv \left\{\frac{1}{t}\sum_{j\in[t]}V_{ij}^{(\delta)} \geq \E{\frac{1}{t}\sum_{j\in[t]}V_{ij}^{(\delta)}} - \sqrt{\frac{n\log(n)}{t}} \right\} \ ,
\end{align*}
then $\P{M_i^c}$ can be upper bounded as follows:
\begin{align*}
\P{M_i^c} = 1-\P{M_i} &\leq 1-\P{M_i\given A_i \cap B_i}\P{A_i \cap B_i} \\
&= 1-\P{A_i \cap B_i} \\
&= \P{A_i^c \cup B_i^c} \\
&\leq \P{A_i^c} + \P{B_i^c} \ ,
\end{align*}
where the final line follows from the union bound. To conclude the proof, we must therefore find a sufficiently sharp upper bound for $\P{A_i^c}$ and $\P{B_i^c}$. 

We start with event $A_i^c$. It is critical to observe that, conditional on the observations $\bX_i^{(\delta)} \equiv (X_{i1}^{(\delta)},X_{i2}^{(\delta)},\dots,X_{it}^{(\delta)})$, the (bounded) indicator summands of $S_i^{(\delta)}$ are independent, and their (conditional) expectation is given by $F_{kj}^{(\delta)}(X_{ij}^{(\delta)})$. This suggests that the conditional probability of $A_i^c$ can be upper bounded using Hoeffding's inequality \citep{hoeffding1963}.

Before proceeding with the bound, however, it is important to remark that the conditional sum $S_i^{(\delta)} \mid \bX_i^{(\delta)}$ has $t$ deterministic observations of size bounded by 1, such that we can initially upper bound using the triangle inequality as follows:
\begin{align*}
\big( S_i^{(\delta)} \mid \bX_i^{(\delta)} \big) &\leq \abs{\frac{1}{t}\sum_{j\in[t]}\sum_{k\in[n], k\neq i}\Big(\ind{X_{kj}^{(\delta)} \leq X^{(\delta)}_{ij}} - F_{kj}^{(\delta)}(X^{(\delta)}_{ij}) \Big)} + 1\ .
\end{align*}
Since the indicators are bounded between $0$ and $1$, the bound above and Hoeffding's inequality implies:
\begin{align*}
&\P{S_i^{(\delta)} \leq \sqrt{\frac{n\log(n)}{t}} + 1 \given \bX_i^{(\delta)}} \\
&\leq \P{\abs{\frac{1}{t}\sum_{j\in[t]}\sum_{k\in[n],k\neq i}\ind{X_{kj}^{(\delta)} \leq X_{ij}^{(\delta)}} - F_{kj}^{(\delta)}(X_{ij}^{(\delta)}) } \geq \sqrt{\frac{n\log(n)}{t}}\given \bX_i^{(\delta)}} \\
&\leq 2\exp{-2\cdot\frac{nt\log(n)}{(n-s-1)t}} \\
&= 2 \exp{-2\log(n)\big(1+\bigO\big(\tfrac{s}{n}\big)\big)} = n^{-2 + o(1)} \ ,
\end{align*}
where we have used that $s/n = o(1)$. Since the bound above does not depend on the value of $\bX_i^{(\delta)}$ or $\delta$, this means we can bound the unconditional probability of the event $A_i^c$ by the same quantity as well.

We now move on to the event $B_i^c$. Note that $V_{ij_1}^{(\delta)}$ and $V_{ij_2}^{(\delta)}$ are independent whenever $j_1 \neq j_2$, and have bounded support on $[0,s]$, such that Hoeffding's inequality results in:
\begin{align*}
\P{\sum_{j\in[t]} V_{ij}^{(\delta)} - \E{V_{ij}^{(\delta)}} \leq -\sqrt{nt\log(n)}} &\leq \P{\abs{\sum_{j\in[t]} V_{ij}^{(\delta)} - \E{V_{ij}^{(\delta)}}} \geq \sqrt{nt\log(n)}} \\
&\leq 2\exp{\frac{-2nt\log(n)}{s^2t}} \\
&= n^{-2\frac{n}{s^2}+o(1)} \leq n^{-2 + o(1)} \ ,
\end{align*}
where the final inequality is due to $s \leq \sqrt{n}$ and thus $n/s^2 \geq 1$. Note that the bound does not depend on $\delta$. Summarizing, we obtain that
\[
\sum_{i\in[n]}\P{M_i^c} \leq \sum_{i\in[n]}\P{A_i^c} + \P{B_i^c} \leq \sum_{i\in[n]} n^{-2 + o(1)} = o(1) \ ,
\]
which together with~\eqref{eq:bound-Nq} completes the statement.
\qed

\subsection{Proof of Lemma~\ref{lem:prob-null-char}}\label{sec:prob-null-char}
Note that if $i\not\in\cS$, then $U_{ij}$ are i.i.d. uniformly distributed on $[0,1]$. Furthermore, $\abs{2\sqrt{3}(U_{ij} - \tfrac{1}{2})} \leq \sqrt{3}$ and has mean zero and variance one, such that our desired upper bound can be obtained using the Bernstein inequality of Lemma~\ref{lem:bernstein-bounded} as: 
\begin{align*}
\tilde p_{q} &= \Pha{ \sum_{j\in[t]} 2\sqrt{3}\left(U_{ij} - \frac{1}{2}\right) \geq \sqrt{2qt\log(n)}(1+\delta_{n,i,q})} \\
&\leq \exp{-\frac{q(1+\delta_{n,i,q})\log(n)}{1 + \frac{\sqrt{6}}{3}(1+\delta_{n,i,q})\sqrt{q\log(n)/t}}} \\
&\leq \exp{-q(1+\delta_{n,i,q})\left(1- (1+\delta_{n,i,q})\frac{\sqrt{6}}{3}\sqrt{\frac{q\log(n)}{t}} \right)\log(n)} \\
&= n^{-q(1+\delta_{n,i,q}) + (1+\delta_{n,i,q})^2\frac{\sqrt{6}}{3}\sqrt{\frac{q^3\log(n)}{t}}} \ ,
\end{align*}
where the first inequality is due to Lemma~\ref{lem:bernstein-bounded} and the second inequality is due to $\frac{1}{1+x} \geq 1-x$ if $x > -1$. \qed

\subsection{Proof of Lemma~\ref{lem:prob-norm-null-alt-approx}}\label{sec:prob-norm-null-alt-approx}

We start by upper bounding $p_q$, which subsequently requires us to upper bound $Y_1(\bR) \lvert H_0$. Let $D_j(n)$ be i.i.d. random variables with discrete uniform distribution on $\{1,\dots,n\}$ and let $A_j([a,b])$ be i.i.d. random variables with continuous uniform distribution supported on $[a,b]$. The latter notation explicitly includes the support, which may change at each appearance. Under the null, we can bound:
\begin{align*}
Y_1(\bR) \overset{d}{=} \frac{1}{t}\sum_{j\in[t]} D_j(n) &\leq  \frac{1}{t}\sum_{j\in[t]} D_j(n) + A_j([0,1]) \\
&= \frac{1}{t}\sum_{j\in[t]}A_j([1,n+1]) = 1 + \frac{n}{t}\sum_{j\in[t]} A_j([0,1]) \ .
\end{align*}
The upper bound above subsequently leads to an upper bound for $p_{q_n}$ as:
\begin{align*}
p_{q_n} &= \P{\frac{\frac{1}{t}\sum_{j\in[t]} D_{j}(n) - \overline R}{\sigma_R} \geq \sqrt{\frac{2q_n\log(n)}{t}}} \\
&\leq \P{\frac{1 + \frac{n}{t}\sum_{j=1}^t A_j([0,1]) - \overline R}{\sigma_R} \geq \sqrt{\frac{2q_n\log(n)}{t}}} \\
&= \P{\frac{1}{\sqrt{t}}\sum_{j\in[t]} A_j([-\tfrac{1}{2},\tfrac{1}{2}]) \geq \underbrace{\frac{-\sqrt{t}}{2n} + \frac{\sigma_R}{n}\sqrt{2q_n\log(n)} }_{\equiv g_n}} \ .
\end{align*}

We now lower bound $\tilde p_{q_n}$. Unlike other places within the proof of Theorem~\ref{th:rank_hc_test}, to prove the statement of this lemma we need to carefully consider the higher-order terms abstracted inside $\delta_{n,i,q}$ in the definition~\eqref{eq:def-vqi} to obtain our result. Note that if $i\not\in\cS$, then $U_{ij}$ are i.i.d. with continuous uniform distribution on $[0,1]$. For convenience, denote $S \equiv \frac{1}{\sqrt{t}}\sum_{j=1}^t A_j([-\tfrac{1}{2},\tfrac{1}{2}])$. Recalling the detailed statement from~\eqref{eq:nq-simplified} and letting $i\not\in\cS$, we can restate $\tilde p_{q_n}$ as follows:
\begin{align*}
\tilde p_{q_n} &= \P{S \geq  \frac{\sigma_R}{n-s} \sqrt{2q_n\log(n)} + \sqrt{t}\Big(\frac{\varepsilon_{n,i}+\overline R}{n-s} - \frac{1}{2}\Big)} \\
&= \P{S \geq  g_n + \frac{\sqrt{t}}{2n} + \frac{s\sigma_R}{n(n-s)}\sqrt{2q_n\log(n)} + \sqrt{t}\Big(\frac{\varepsilon_{n,i}+\overline R}{n-s} - \frac{1}{2}\Big)} \ . \numberthis\label{eq:pqtilde-intermediate}
\end{align*}
To continue, we must characterize the final terms inside the probability above. Recalling the definition of $\varepsilon_{n,i}$ in~\eqref{eq:def-varepsilon-1}, to characterize this term the following claim will be important:
\begin{clm}\label{clm:u-inverse}
Consider the definition in~\eqref{eq:def_Uij}. Let $k,i \in [n]$ and $j\in [t]$. Then
\[
\E{ \lim_{\delta\to 0} F_{kj}^{(\delta)}(X_{ij}^{(\delta)})} = 1- \E{\lim_{\delta\to 0} F_{ij}^{(\delta)}(X_{kj}^{(\delta)})} \ .
\]
\end{clm}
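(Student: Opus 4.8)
The plan is to first establish the identity for each fixed $\delta>0$ and then pass to the limit $\delta\to0$. The key structural observation is that for $\delta>0$ the perturbed variables $X_{ij}^{(\delta)} = X_{ij} + \delta W_{ij}$ have absolutely continuous distributions, since convolving any law with the (dilated) uniform law of $W_{ij}$ produces a density; consequently both $F_{ij}^{(\delta)}$ and $F_{kj}^{(\delta)}$ are continuous distribution functions. This continuity is exactly what will let me turn the two expectations into complementary probabilities.

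First I would reduce each expectation to a probability by introducing an auxiliary independent pair. Since $\E{F_{kj}^{(\delta)}(X_{ij}^{(\delta)})}$ depends only on the marginal law of $X_{ij}^{(\delta)}$ and on the fixed function $F_{kj}^{(\delta)}$, I may take independent random variables $A$ with law $F_{ij}^{(\delta)}$ and $B$ with law $F_{kj}^{(\delta)}$ (freshly constructed, so that this step remains valid even when $k=i$, where $X_{ij}^{(\delta)}$ and $X_{kj}^{(\delta)}$ are not independent). Conditioning on $A$ and using independence gives
\[
\E{F_{kj}^{(\delta)}(X_{ij}^{(\delta)})} = \E{F_{kj}^{(\delta)}(A)} = \E{\P{B \leq A \given A}} = \P{B \leq A}\ ,
\]
and, symmetrically, $\E{F_{ij}^{(\delta)}(X_{kj}^{(\delta)})} = \P{A \leq B}$. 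Because $A$ and $B$ are independent with continuous laws, $\P{A = B} = 0$, so $\P{B \leq A} + \P{A \leq B} = 1$. This yields the finite-$\delta$ identity $\E{F_{kj}^{(\delta)}(X_{ij}^{(\delta)})} = 1 - \E{F_{ij}^{(\delta)}(X_{kj}^{(\delta)})}$.

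It then remains to send $\delta\to0$ on both sides. The integrands $F_{kj}^{(\delta)}(X_{ij}^{(\delta)})$ and $F_{ij}^{(\delta)}(X_{kj}^{(\delta)})$ take values in $[0,1]$, and they converge almost surely as $\delta\to0$ — this is precisely the limit already used to define the transformed variables in~\eqref{eq:def_Uij}. I would therefore invoke the bounded convergence theorem to interchange limit and expectation, obtaining $\lim_{\delta\to0}\E{F_{kj}^{(\delta)}(X_{ij}^{(\delta)})} = \E{\lim_{\delta\to0}F_{kj}^{(\delta)}(X_{ij}^{(\delta)})}$ and likewise for the other term; combining this with the finite-$\delta$ identity gives the claim.

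The only genuinely delicate point is the almost-sure convergence of $F_{kj}^{(\delta)}(X_{ij}^{(\delta)})$ as $\delta\to0$, which is needed to justify the bounded-convergence step; the same care underlies the well-posedness of the limit appearing in~\eqref{eq:def_Uij}, so no new work is required there. Once that pointwise limit is in hand, the uniform boundedness of the integrands makes the interchange immediate, and the substantive content of the claim reduces to the elementary symmetry $\P{A \leq B} + \P{B \leq A} = 1$ for independent continuous random variables.
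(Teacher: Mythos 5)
Your proof is correct. The paper proves the claim in the opposite order: it first evaluates the pointwise limit explicitly (via Lemma~\ref{lem:moments-u}, conditioning on the three events $\{X_{ij}<X_{kj}\}$, $\{X_{ij}=X_{kj}\}$, $\{X_{ij}>X_{kj}\}$ and using the uniform tie-break variables) to obtain $\E{\lim_{\delta\to 0}F_{kj}^{(\delta)}(X_{ij}^{(\delta)})}=\tfrac12\P{X_{ij}=X_{kj}}+\P{X_{kj}<X_{ij}}$, and only then applies the complement identity to this closed form. You instead prove the symmetry at each fixed $\delta>0$, where absolute continuity of the convolved laws makes ties a null event and reduces everything to $\P{A\le B}+\P{B\le A}=1$ for independent continuous variables, and then pass to the limit by bounded convergence. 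Your route is arguably cleaner for this particular claim, since it never needs the explicit form of the pointwise limit, only its almost-sure existence (which, as you note, is the same fact underlying the well-posedness of~\eqref{eq:def_Uij} and of the quantity $\lim_{\delta\to 0}F_{kj}^{(\delta)}(X_{ij}^{(\delta)})$ used in Lemma~\ref{lem:bound-Nq}; strictly speaking~\eqref{eq:def_Uij} is stated for $F_{0j}$, but the identical computation applies to any $F_{kj}$). The paper's route has the side benefit of producing the explicit formula $\tfrac12\P{X_{ij}=X_{kj}}+\P{X_{kj}<X_{ij}}$, which it reuses elsewhere. Your observation that introducing fresh independent copies $A$ and $B$ keeps the argument valid even when $k=i$ is a nice touch, covering an edge case the claim's statement formally permits but the paper never needs.
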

We defer the proof until the end of this subsection. The claim is obvious if all observations are continuous; otherwise the validity of comes from the tie-breaking procedure described in Section~\ref{sec:ranking}, which retains the ordering among unique observations and breaks ties uniformly at random. Using Claim~\ref{clm:u-inverse} we find that, if $i\not\in\cS$:
\begin{align*}
\E{\frac{1}{t}\sum_{j\in[t]}\sum_{k\in\cS} \lim_{\delta\to 0} F_{kj}^{(\delta)}(X_{ij}^{(\delta)})} &= \sum_{k\in\cS}\Bigg(1-  \frac{1}{t}\sum_{j\in[t]} \E{\lim_{\delta\to 0} F_{ij}^{(\delta)}(X_{kj}^{(\delta)})}\Bigg) \\ 
&= \sum_{k\in\cS}\Bigg(1-  \frac{1}{t}\sum_{j\in[t]} \E{U_{kj}}\Bigg) \\ 
&= \sum_{k\in\cS}\Bigg(1-  \Bigg(\frac{1}{2} + \frac{\mu_{n,k}}{2\sqrt{3}} \Bigg)\Bigg) \\
&= \frac{s}{2} - \frac{1}{2\sqrt{3}}\sum_{k\in\cS}\mu_{n,k} \ .
\end{align*}
Then, we can simplify for $i\not\in\cS$:
\begin{align*}
\frac{\varepsilon_{n,i}+\overline R}{n-s} - \frac{1}{2} &=\frac{1}{n-s}\Bigg(2\sqrt{\frac{n\log(n)}{t}} +1 - \frac{s}{2} + \frac{1}{2\sqrt{3}}\sum_{k\in\cS}\mu_{n,k} + \frac{n+1}{2} - \frac{n-s}{2}\Bigg) \\
&= \frac{1}{n-s}\left(2\sqrt{\frac{n\log(n)}{t}} + \frac{1}{2\sqrt{3}}\sum_{k\in\cS}\mu_{n,k} + \frac{3}{2}\right) \ .
\end{align*}
Now, following the threshold in Equation~\eqref{eq:pqtilde-intermediate} and the derivation above, define for convenience:
\begin{align*}
h_n \equiv \frac{\sqrt{t}}{2n} + \frac{s\sigma_R}{n(n-s)}\sqrt{2q_n\log(n)} + \frac{\sqrt{t}}{n-s}\left(2\sqrt{\frac{n\log(n)}{t}} +\frac{1}{2\sqrt{3}}\sum_{k\in\cS}\mu_{n,k} + \frac{3}{2}\right) \ .
\end{align*}
Note that 
\begin{equation} \label{eq:hn_bigO}
h_n = \bigO\left(\frac{\sqrt{t}}{n} + \sqrt{\frac{\log(n)}{n}} + \frac{\sqrt{t}}{n}\sum_{k\in\cS}\mu_{n,k}\right),
\end{equation} 
when $s \leq \sqrt{n}$ and $q_n \to q$. Using the above definition, we can characterize $\tilde p_{q_n}$ by:
\begin{align*}
\tilde p_{q_n} &\geq \P{S \geq g_n + h_n} \ .
\end{align*}
Denote with $F_S$ the cumulative distribution function of the Irwin-Hall distribution, i.e.~the distribution of the normalized mean of independent standard uniform random variables $S$. Using all of the above, we can thus upper-bound our quantity of interest as:
\begin{align*}
p_{q_n} - \tilde p_{q_n} &\leq F_S(g_n + h_n) - F_S(g_n) \ .
\end{align*}
Note that $F_S$ is continuous, and denote the probability density function by $f_S(x)$. We can therefore use the mean-value theorem to characterize:
\[
F_S(g_n + h_n) - F_S(g_n) = f_S(c)h_n \ ,
\]
for some $c \in [g_n, g_n+h_n]$. Furthermore, since $A_j([-\tfrac{1}{2},\frac{1}{2}])$ is unimodal and symmetric, the unimodality is preserved under convolution, and as such $F_S$ is unimodal with mode at $0$. Therefore, $f_S(x)$ is nonincreasing for all $x > 0$, leading to the bound:
\[
F_S(g_n + h_n) - F_S(g_n) \leq f_S(g_n)h_n \ .
\]
To continue, we must therefore bound $f_S(h_n)$. We have that:
\begin{align*}
1-F_S(g_n-1) &= \int_{g_n-1}^{g_n} f_S(x)dx + \int_{g_n}^\infty f_S(x)dx \\
&\geq f_S(g_n)\int_{g_n-1}^{g_n}dx + (1-F_S(g_n)) \\
&= f_S(g_n) + (1-F_S(g_n)) \\
&\geq f_S(g_n) \ ,
\end{align*}
where the first inequality follows from $f_S(x)$ unimodal and thus nonincreasing for $x > 0$. Now that we have $f_S(g_n) \leq 1-F_S(g_n-1)$, we can continue our bound for $f_S(g_n)$. First note that for $n$ sufficiently large and $q_n \to q$ we have that:
\begin{align*}
(g_n - 1) &= -\frac{\sqrt{t}}{2n} + \frac{\sigma_R}{n}\sqrt{2q_n\log(n)} - 1 \\
&= \sqrt{\frac{2q_n\log(n)}{12}}\left(\frac{\sqrt{n^2-1}}{n} - \sqrt{\frac{6}{q_n\log(n)}} - \frac{\sqrt{3t}}{n\sqrt{2q_n\log(n)}} \right) \\
&\geq \sqrt{\frac{2q_n\log(n)}{12}}\left(1 + o(1) - \sqrt{\frac{3t}{2n^2q\log(n)}} \right) \ .
\end{align*} 
Furthermore, if $t = \omega(\log(n))$ then $(g_n-1)/\sqrt{t} \to 0$.
We can now use Lemma~\ref{lem:bernstein-bounded} to bound $1-F_S(g_n-1)$ as follows:
\begin{align*}
1-F_S(g_n-1) &= \P{\sum_{j=1}^t A_j([-1/2,1/2]) \geq \sqrt{t}(g_n-1)} \\
&\leq \exp{\frac{-\frac{1}{2}t(g_n-1)^2}{\frac{t}{12} + \bigO(\sqrt{t}(g_n - 1))}} \\
&=  \exp{\frac{-\frac{12}{2}(g_n-1)^2}{1 + \bigO((g_n - 1)/\sqrt{t})}} \\
&= n^{-\frac{6}{\log(n)}(g_n-1)^2(1+o(1))} \\
&\leq n^{-q_n\left(1 +o(1) - \sqrt{3t/(2n^2q\log(n))} \right)(1+o(1))} \\
&= n^{-q + \bigO(\sqrt{t/(n^2\log(n))}) + o(1)} \ .
\end{align*}
Combining all of the above, we have that:
\begin{align*}
p_{q_n} - \tilde p_{q_n} &\leq F_S(g_n + k_n) - F_S(g_n) \\
&\leq f_S(g_n)h_n \\
&\leq (1-F_S(g_n-1))h_n \\
&\leq n^{-q + \bigO(\sqrt{t/(n^2\log(n))}) + o(1)} h_n \ ,
\end{align*}
and we conclude using~\eqref{eq:hn_bigO}.
\qed

\myparagraph{Proof of Claim~\ref{clm:u-inverse}} It can be easily seen (see Lemma~\ref{lem:moments-u} in~\ref{app:moments-u}) that
\[
\E{ \lim_{\delta\to 0} F_{kj}^{(\delta)}(X_{ij}^{(\delta)})} = \frac{1}{2}\P{X_{ij} = X_{kj}} + \P{X_{kj} < X_{ij}} \ .
\]
Using this fact
\begin{align*}
\frac{1}{2}\P{X_{ij} = X_{kj}} + \P{X_{kj} < X_{ij}}
&= \frac{1}{2}\P{X_{ij} = X_{kj}} + 1 - \P{X_{kj} \geq X_{ij}} \\
&= -\frac{1}{2}\P{X_{ij} = X_{kj}} + 1 - \P{X_{kj} > X_{ij}} \\
&= 1 - \E{ \lim_{\delta\to 0} F_{ij}^{(\delta)}(X_{kj}^{(\delta)})} \ .
\end{align*} \qed

\subsection{Proof of Lemma~\ref{lem:prob-strong-char}}
We suppress the subscript $i$ for readability. For brevity throughout the remainder define $\tau_n \equiv \sqrt{\frac{2q_n\log(n)}{t}}(1+\delta_{n,q_n})$ and $\overline Z \equiv \frac{2\sqrt{3}}{t}\sum_{j\in[t]}U_{ij} - \tfrac{1}{2}$. 

To show this lower bound we use Chebyshev's inequality. We start with the first condition. Our probability of interest can be bounded as:
\begin{align*}
1- v_{q_n} &= \P{\overline Z - \mu_n \leq \tau_n - \mu_n} \\
&= \P{- \left(\overline Z - \mu_n\right) \geq \mu_n - \tau_n} \\
&\leq \P{\abs{\overline Z - \mu_n} \geq \mu_n - \tau_n} \ . \numberthis\label{eq:vqlem-cheb1}
\end{align*}
To apply Chebyshev's inequality we must ensure $\mu_n - \tau_n$ is positive. Note that:
\[
\mu_{n} - \tau_n \geq \sqrt{\frac{2\log(n)}{t}}\left(\sqrt{r} - \sqrt{q} + o(1)\right) \ ,
\]
such that for $n$ sufficiently large there exists a constant $c > 0$ such that $\mu_n - \tau_n > c\sqrt{\frac{2\log(n)}{t}}$. Note that Popoviciu's inequality on variances implies $\Var{\overline Z} \leq 3/t$. Then continuing from~\eqref{eq:vqlem-cheb1} we find that for $n$ sufficiently large:
\[
1- v_{q_n} \leq \frac{3/t}{2c\log(n)/t} \to 0 \ ,
\]
which implies our result.

Next, assume $q_n \leq r - \xi_n$ with $\xi_n > 0$ and $\xi_n = \omega\left(\sqrt{\frac{1}{\log(n)}}\right)$. Then:
\begin{align*}
\mu_n - \tau_n &\geq \sqrt{\frac{2\log(n)}{t}}\left(\sqrt{r} - \sqrt{r-\xi_n}(1+\delta_{n,q_n})\right) \\
&\geq \sqrt{\frac{2\log(n)}{t}}\left(\sqrt{r} - \sqrt{r}(1-\frac{\xi_n}{2r})(1+\delta_{n,q_n})\right) \\
&\geq \sqrt{\frac{2\log(n)}{t}}\xi_n\left(\frac{1}{2\sqrt{r}} - \frac{\abs{\delta_{n,q_n}}}{2\sqrt{r}} - \frac{\abs{\delta_{n,q_n}}\sqrt{r}}{\xi_n} \right) \ \\
&\geq \sqrt{\frac{2\log(n)}{t}}\xi_n\left(\frac{1}{2\sqrt{r}}+o(1)\right) \ ,
\end{align*}
where the final order statement holds due to the assumption $\xi_n = \omega\left(\sqrt{\frac{1}{\log(n)}}\right)$ and~\eqref{eq:bound-delta}. Then, analogous to the arguments for the first setting we have
\[
1-v_q \leq \frac{3/t}{2\log(n)\xi_n^2(1+o(1))/t} \to 0 \ ,
\]
where the convergence is implied by the assumption on $\xi_n$. \qed

\subsection{Proof of Lemma~\ref{lem:prob-alt-char}}\label{sec:prob-alt-char}

Throughout the proof, we assume $i\in\cS$ and suppress the subscript $i$ for readability. For convenience, define $\tau_n \equiv \sqrt{2q_n(1+\delta_{n,q_n})^2\log(n)/t} = \sqrt{2q(1+o(1))\log(n)/t}$ by assumption on $q_n$ and $\delta_{n,q_n}$, and define:
\begin{equation}\label{eq:def_Z}
Z_{j} \equiv 2\sqrt{3}\left(U_{j} -\frac{1}{2}\right)  \ .
\end{equation}
Denote by $F_{Z_j}$ the distribution of $Z_j$. Note that $\abs{Z_{j}} \leq \sqrt{3}$. Denote $\overline Z \equiv \frac{1}{t}\sum_{j\in[t]} Z_{j}$. Using these definitions, we can restate our probability of interest $v_{q_n}$ as
\[
v_{q_n} = \P{\frac{1}{t}\sum_{j\in[t]}Z_{j} \geq \tau_n } = \P{\overline Z \geq \tau_n} \ .
\]
Denote $\mu_{n,j} \equiv \E{Z_j}$ and $\sigma_{n,j}^2 \equiv \Var{Z_j}$. We have that $\mu_n = \sqrt{2r(1+o(1))\log(n)/t}$ for some value $r < q$. Then, using the definition of $\mu_n$ and $\sigma_n$ in~\eqref{eq:def_mu} and~\eqref{eq:def_var} we have that:
\[
\E{\overline Z} = \frac{1}{t}\sum_{j\in[t]} \mu_{n,j} = \mu_n \text{ , and }
\Var{\overline Z} = \frac{1}{t^2}\sum_{j\in[t]} \sigma_{n,j}^2 = \frac{\sigma_n^2}{t} \ .
\]
We show a lower bound for $v_q$ using a tilting argument. To that end, first note that since $Z_{j}$ is bounded, all moments are finite, and subsequently define the moment generating function of $Z_j$ as
\[
\varphi_j(x) \equiv \E{\exp{xZ_j}} = 1 + \mu_{n,j}x + (\sigma_{n,j}^2+\mu_{n,j}^2)\frac{x^2}{2} + \bigO(x^3) \ .
\]
Now, define $\tilde Z_j$ as an exponentially tilted random variable with respect to $Z_j$ with parameter~$\xi_n$ potentially depending on $n$. In detail, the density of $\tilde Z_j$ is given by
\[
f_{\tilde Z_j}(x) = \exp{\xi_n x - \log{\varphi_j(\xi_n)}}
\]
with respect to the (reference measure) distribution $F_{Z_j}$ for some deterministic $\xi_n > 0$. Now, we aim to choose $\xi_n$ such that the average of our tilted random variables is equal to $\tau_n$:
\[
\E{\frac{1}{t}\sum_{j\in[t]} \tilde Z_j} = \tau_n \ .
\]
We now show such a choice of $\xi_n$ exists and provide its characterization. We first note that the moment generating function of $\tilde Z_j$ can be expressed as:
\begin{align*}
\tilde\varphi_j(x) \equiv \E{\exp{x\tilde Z_j}} &= \int \exp{xz}f_{\tilde Z_j}(z){\rm d} F_{Z_j}(z) \\
&= \int \exp{xz}\exp{\xi_n z - \log(\varphi_j(\xi_n))}{\rm d}F_{Z_j}(z) \\
&= \frac{\varphi_j(x + \xi_n)}{\varphi_j(\xi_n)} \ .
\end{align*}
The mean of $\tilde Z_j$ as a function of $\xi_n$ is therefore given by:
\begin{align*}
\E{\tilde Z_j} &= \frac{\frac{d}{dx}\varphi_j(x+\xi_n)\Bigr|_{x=0}}{\varphi_j(\xi_n)} = \frac{\mu_{n,j} + (\sigma_{n,j}^2 + \mu_{n,j}^2)\xi_n + \bigO(\xi_n^2)}{1 + \mu_{n,j}\xi_n + \bigO(\xi_n^2)} \\
&= \bigg(\mu_{n,j} + (\sigma_{n,j}^2 + \mu_{n,j}^2)\xi_n + \bigO(\xi_n^2)\bigg)\bigg(1-\mu_{n,j}\xi_n + \bigO(\xi_n^2)\bigg) \\ &= \mu_{n,j} + \sigma_{n,j}^2\xi_n + \bigO(\xi_n^2) \ .
\end{align*}
Therefore, the expected value of the average of our tilted random variables as a function of $\xi_n$ is given by
\begin{align*}
\E{\frac{1}{t}\sum_{j\in[t]} \tilde Z_j} &= \frac{1}{t}\sum_{j\in[t]}\mu_{n,j} + \frac{1}{t}\sum_{j\in[t]}\sigma_{n,j}^2\xi_n + \bigO(\xi_n^2) \\
&= \mu_n + \sigma^2_n\xi_n + \bigO\left(\xi_n^2\right) \ .
\end{align*}
Note that $\sigma_n^2$ is bounded away from zero by assumption, such that $\frac{1}{\sigma_n^2} = \bigO(1)$. Now, setting the expectation above to $\tau_n$, and solving for $\xi_n$ yields:
\begin{align*}
& \mu_n + \sigma_n^2\xi_n + \bigO\left(\xi_n^2\right) = \tau_n \\
\Longrightarrow \hspace{0.5cm} & \xi_n = \frac{\tau_n - \mu_n}{\sigma_n^2} + \bigO\left(\xi_n^2\right) \ .
\end{align*} 
At this point it is useful to remark on the asymptotic order of the terms involved. Note that since $r < q$ we have for $n$ sufficiently large $\mu_n < \tau_n$, and $\bigO(\mu_n) = \bigO(\tau_n)$. Since $\sigma_n = \bigO(1)$, the expression above subsequently implies that $\bigO(\xi_n) = \bigO(\tau_n) = \bigO(\mu_n)$.

We now move on to characterize the average variance of our tilted random variables. We have:
\[
\E{\tilde Z_j^2} = \frac{\frac{d^2}{dx^2}\varphi_j(x+\xi_n)\Bigr|_{x=0}}{\varphi_j(\xi_n)} = \frac{\sigma_{n,j}^2 + \mu_{n,j}^2 + \bigO(\xi_n)}{1 +\bigO(\xi_n)} = \sigma_{n,j}^2 + \mu_{n,j}^2 + \bigO(\xi_n) \ .
\]
Since $\E{\tilde Z_j}^2 = \mu_{n,j}^2 + \bigO(\mu_{n,j}\sigma_{n,j}^2\xi_n +\xi_n^2)$ we obtain:
\[
\Var{\tilde Z_j^2} = \E{\tilde Z_j^2} - \E{\tilde Z_j}^2 = \sigma_{n,j}^2 + \bigO(\xi_n + \mu_{n,j}\sigma^2_{n,j}\xi_n) \ .
\]
We subsequently obtain:
\[
\sigma_{\xi_n}^2 \equiv \frac{1}{t}\sum_{j\in[t]}\Var{\tilde Z_j} = \frac{1}{t}\sum_{j\in[t]} \big(\sigma_{n,j}^2 + \bigO(\xi_n + \mu_{n,j}\sigma_{n,j}^2\xi_n) \big) = \sigma_n^2 + \bigO\left(\xi_n\right) \ .
\]
We are now ready to use a tilting argument. Denote by $\tilde F_{\tilde Z_j}$ the distributions of $\tilde Z_j$. Now, one can use the tilted random variables $\tilde Z_j$ to lower-bound $v_q$ as follows:
\begin{align*}
v_{q} &= \P{\overline Z \geq \tau_n} \\
&= \int \ind{\frac{1}{t}\sum_{j\in[t]}z_{j} \geq \tau_n} {\rm d}F_{Z_1}(z_{1})\dotsb{\rm d}F_{Z_t}(z_{t})\\
&= \left(\prod_{j\in[t]} \varphi_j(\xi_n)\right) \int \ind{\frac{1}{t}\sum_{j\in[t]}z_{j} \geq \tau_n} \exp{-\xi_n\sum_{j\in[t]}z_{j}} {\rm d} F_{\tilde Z_1}(z_{1})\dotsb{\rm d} F_{\tilde Z_t}(z_{t}) \\
&= \left(\prod_{j\in[t]} \varphi_j(\xi_n)\right) \E{\ind{\frac{1}{t}\sum_{j\in[t]}\tilde Z_{j} \geq \tau_n}\exp{-\xi_n\sum_{j\in[t]}\tilde Z_{j}}} \\
&\geq \left(\prod_{j\in[t]} \varphi_j(\xi_n)\right) \E{\ind{0 \leq \frac{\frac{1}{t}\sum_{j\in[t]} \tilde Z_j - \tau_n}{\sqrt{\sigma^2_{\xi_n}/t}} \leq 1} \exp{-\xi_n \sum_{j\in[t]}\tilde Z_j}} \\
&\geq \left(\prod_{j\in[t]} \varphi_j(\xi_n)\right) \P{0 \leq \frac{\frac{1}{t}\sum_{j\in[t]} \tilde Z_j - \tau_n}{\sqrt{\sigma^2_{\xi_n}/t}} \leq 1} \exp{-\xi_n t(\tau_n+ \sqrt{\sigma^2_{\xi_n}/t})} \ . \numberthis\label{eq:vq-lb-1}
\end{align*}
To continue, we first find suitable characterizations for the first two terms in the expression above. We start with the first term in~\eqref{eq:vq-lb-1}. Using a Taylor expansion to approximate the logarithm as $\log(1+x) = x - x^2/2 + \bigO(x^3)$ for $x \geq 0$, we have:
\begin{align*}
\prod_{j\in[t]}  \varphi_{j}(\xi_n) &= \exp{\sum_{j\in[t]}\log{\varphi_j(\xi_n)}} \\
&= \exp{\sum_{j\in[t]} \mu_{n,j} \xi_n + (\sigma_{n,j}^2+\mu_{n,j}^2)\frac{\xi_n^2}{2} - \frac{\mu_{n,j}^2\xi_n^2}{2} + \bigO(\xi_n^3)} \\
&= \exp{ t\mu_n\xi_n + \frac{t\sigma_n^2\xi_n^2}{2} + \bigO\left(t\xi_n^3\right)}
\ . \numberthis\label{eq:vq-lb-2}
\end{align*}

We now turn our attention to the second term in~\eqref{eq:vq-lb-1}. To show our results, it suffices to lower-bound this probability term by a constant. This can be done conveniently through Lindeberg's central limit theorem (e.g., see \citet[Chapter VIII, Section 4, Theorem 3]{Feller1991} for example):

\begin{theorem}[Lindeberg's Central Limit Theorem]\label{th:clt-lindeberg}
Let $\{X_1,\dots,X_n\}$ be a sequence of independent random variables with distribution $F_i$ with zero mean and finite variance $\sigma_i^2$. Define $s_n^2 = \sum_{i\in[n]} \sigma_i^2$. Now, if Lindeberg's condition holds:
\begin{equation}\label{eq:lindeberg-condition}
\frac{1}{s_n^2} \sum_{i\in[n]} \int_{\abs{x} \geq ts_n} x^2 \d F_i(x) \to 0 \ ,
\end{equation}
then the sum $S_n = \frac{1}{s_n}\sum_{i\in[n]} X_i$ weakly converges to the normal distribution with zero mean and unit variance.
\end{theorem}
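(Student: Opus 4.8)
The statement to be established is Lindeberg's central limit theorem itself, a classical result; the plan is to prove it via the \emph{Lindeberg replacement method}, comparing the normalized sum against a Gaussian surrogate one summand at a time. First I would reduce the weak-convergence claim to showing that $\E{f(S_n/s_n)} \to \E{f(Z)}$ for every $f \in C_b^3(\bbR)$ (three times continuously differentiable with bounded derivatives) and $Z \sim \N(0,1)$; since $C_c^\infty \subset C_b^3$ is convergence-determining, and since vague convergence to a genuine probability measure such as $\N(0,1)$ automatically upgrades to weak convergence, this reduction is legitimate. Next, introduce an independent family $Y_1,\dots,Y_n$ with $Y_i \sim \N(0,\sigma_i^2)$, independent of the $X_i$, so that $V_n \equiv \tfrac{1}{s_n}\sum_i Y_i \sim \N(0,1)$ \emph{exactly}. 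Writing $W_n \equiv S_n/s_n$, it then suffices to bound $\abs{\E{f(W_n)} - \E{f(V_n)}}$. Defining the hybrid sums $T_k \equiv \tfrac{1}{s_n}\big(\sum_{i \le k} Y_i + \sum_{i > k} X_i\big)$ so that $T_0 = W_n$ and $T_n = V_n$, I telescope:
\[
\E{f(V_n)} - \E{f(W_n)} = \sum_{k=1}^n \big(\E{f(T_k)} - \E{f(T_{k-1})}\big).
\]
Consecutive terms differ only in their $k$-th coordinate: with $Z_k \equiv \tfrac{1}{s_n}\big(\sum_{i<k} Y_i + \sum_{i>k} X_i\big)$, which is independent of both $X_k$ and $Y_k$, one has $T_{k-1} = Z_k + X_k/s_n$ and $T_k = Z_k + Y_k/s_n$.

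The crux is a second-order Taylor expansion $f(Z_k + u) = f(Z_k) + f'(Z_k)\,u + \tfrac12 f''(Z_k)\,u^2 + R(Z_k,u)$, applied at $u = X_k/s_n$ and at $u = Y_k/s_n$. Taking expectations and using the independence of $Z_k$ from the swapped variable, the zeroth-order terms agree, the first-order terms vanish (both $X_k$ and $Y_k$ are mean zero), and the second-order terms agree (both have variance $\sigma_k^2$), leaving $\E{f(T_k)} - \E{f(T_{k-1})} = \E{R(Z_k, Y_k/s_n)} - \E{R(Z_k, X_k/s_n)}$. Everything thus reduces to the remainders, for which I exploit the two-sided bound $\abs{R(Z_k,u)} \le \min\{\|f''\|_\infty u^2,\ \tfrac16\|f'''\|_\infty |u|^3\}$, giving
\[
\abs{\E{f(W_n)} - \E{f(V_n)}} \le \sum_{k=1}^n \Big(\E{\abs{R(Z_k, X_k/s_n)}} + \E{\abs{R(Z_k, Y_k/s_n)}}\Big).
\]

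For the $X$-remainders I split at the Lindeberg threshold $\varepsilon s_n$: on $\{\abs{X_k} \le \varepsilon s_n\}$ the cubic bound gives at most $\tfrac{\varepsilon}{6}\|f'''\|_\infty\,\sigma_k^2/s_n^2$, while on $\{\abs{X_k} > \varepsilon s_n\}$ the quadratic bound gives $\|f''\|_\infty\,\E{X_k^2\,\ind{\abs{X_k}>\varepsilon s_n}}/s_n^2$. Summing over $k$ and using $\sum_k \sigma_k^2 = s_n^2$, the bulk contributes $\tfrac{\varepsilon}{6}\|f'''\|_\infty$ and the tail contributes $\|f''\|_\infty\,L_n(\varepsilon)$, where $L_n(\varepsilon) \equiv s_n^{-2}\sum_k \E{X_k^2\,\ind{\abs{X_k}>\varepsilon s_n}}$ is precisely the quantity in~\eqref{eq:lindeberg-condition} (with threshold $\varepsilon$) and hence $\to 0$. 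For the Gaussian remainders the cubic bound alone yields $\sum_k \E{\abs{R(Z_k, Y_k/s_n)}} \le \tfrac16\|f'''\|_\infty \sum_k \E{|Y_k|^3}/s_n^3 = \bigO\big(\max_k \sigma_k/s_n\big)$, and the negligibility $\max_k \sigma_k^2/s_n^2 \to 0$ follows from Lindeberg via $\sigma_k^2/s_n^2 \le \varepsilon^2 + L_n(\varepsilon)$. Combining, $\limsup_n \abs{\E{f(W_n)} - \E{f(V_n)}} \le \tfrac{\varepsilon}{6}\|f'''\|_\infty$ for every $\varepsilon > 0$; letting $\varepsilon \downarrow 0$ gives $\E{f(W_n)} \to \E{f(Z)}$, which by the reduction of the first paragraph yields $W_n \Rightarrow \N(0,1)$.

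\textbf{Main obstacle.} The delicate point is the remainder control: one must use the \emph{minimum} of the quadratic and cubic Taylor bounds and split exactly at $\varepsilon s_n$, so that the bulk contributes an arbitrarily small constant while the tail reproduces verbatim the hypothesis~\eqref{eq:lindeberg-condition}. Extracting Feller's negligibility condition from the Lindeberg condition, needed to dispatch the Gaussian side through third absolute moments, is the secondary technical step. Everything else --- the exact cancellation of the low-order Taylor terms and the telescoping --- is bookkeeping enabled by the independence of the background $Z_k$ from the swapped pair $(X_k, Y_k)$.
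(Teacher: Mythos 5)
Your proposal is mathematically sound, but there is essentially nothing in the paper to compare it against: Theorem~\ref{th:clt-lindeberg} is not a result proved in the paper at all --- it is a classical theorem quoted verbatim as an auxiliary tool, with the proof delegated by citation to \cite[Chapter VIII, Section 4, Theorem 3]{Feller1991}, and it is then applied in the proof of Lemma~\ref{lem:prob-alt-char} to the bounded, exponentially tilted variables $\tilde Z_j - \E{\tilde Z_j}$ (where the paper even uses Feller's accompanying remark that boundedness plus $s_n \to \infty$ implies the Lindeberg condition). That said, your replacement argument is a complete and correct outline of the standard proof: the reduction to $C_b^3$ test functions is legitimate, since convergence against $C_c^\infty$ functions gives vague convergence, which upgrades to weak convergence because the limit $\N(0,1)$ is a genuine probability measure; the telescoping with independent Gaussian surrogates of matched variance makes the zeroth-, first-, and second-order Taylor terms cancel exactly; the two-sided remainder bound $\abs{R(z,u)} \le \min\{\|f''\|_\infty u^2, \tfrac16 \|f'''\|_\infty \abs{u}^3\}$ with the split at $\varepsilon s_n$ reproduces the Lindeberg quantity $L_n(\varepsilon)$ verbatim on the tail and an $\tfrac{\varepsilon}{6}\|f'''\|_\infty$ contribution on the bulk; and your extraction of Feller negligibility, $\max_k \sigma_k^2/s_n^2 \le \varepsilon^2 + L_n(\varepsilon)$, correctly dispatches the Gaussian third-moment side via $\sum_k \E{\abs{Y_k}^3}/s_n^3 = \bigO(\max_k \sigma_k/s_n)$. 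It is worth noting that the proof in the cited source is in substance the same idea: Feller's Chapter VIII.4 argument also swaps each summand's distribution for a Gaussian of matched variance one at a time, phrased in terms of convolution operators acting on smooth test functions, so your route coincides in spirit with the proof the paper points to; a genuinely different classical route would be the characteristic-function proof with logarithmic expansion control, which your argument avoids entirely.
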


We apply the above theorem to the random variables $(\tilde Z_j - \E{\tilde Z_j})$. In the notation of the above theorem, we have that $s_n^2 = t\sigma_{\xi_n}^2$. In \citet[Chapter VIII, Section 4, Example (e)]{Feller1991}, it is remarked that condition~\eqref{eq:lindeberg-condition} holds provided that $s_n \to \infty$ and the observations are bounded. First, since $\sigma_n$ is assumed to be bounded away from zero and $t=\omega(\log(n))$, we easily have
\[
s_n = \sqrt{t}\sigma_{\xi_n} = \sqrt{t}\sqrt{\sigma_n^2 + \bigO\left(\xi_n\right)} \to \infty \ .
\]
Since $\P{|Z_j|\leq\sqrt{3}}=1$ then naturally the tilted version $\tilde Z_j$ is equally bounded almost surely, and therefore $\tilde Z_j - \E{\tilde Z_j}$ is bounded as well. Subsequently,~\eqref{eq:lindeberg-condition} holds such that Theorem~\ref{th:clt-lindeberg} implies
\begin{align*}
\P{0 \leq \frac{\frac{1}{t}\sum_{j\in[t]} \tilde Z_j - \tau_n}{\sqrt{\sigma^2_{\xi_n}/t}} \leq 1} &= \P{0 \leq \frac{\sum_{j\in[t]} \tilde Z_j - \tau_n}{\sqrt{t\sigma^2_{\xi_n}}} \leq 1} \\ &\to \Phi(1) - \Phi(0) \geq \frac{1}{4} \ , \numberthis\label{eq:vq-lb-3}
\end{align*}
and thus for sufficiently large $n$, the probability term in~\eqref{eq:vq-lb-1} is larger than $1/4$.

Note that $\sigma_n^2 \leq 12$ and thus $\sigma_{\xi_n}^2 = \bigO(\sigma_n^2) = \bigO(1)$. Also note that 
\[
(\tau_n-\mu_n)\xi_n = ((\tau_n-\mu_n)/\sigma_n)^2 + \bigO(\xi_n^3) \ ,
\]
and $\sigma_n^2\xi_n^2 = (\tau_n-\mu_n)^2/\sigma_n^2 + \bigO(\xi_n^3)$. Finally, $\tau_n - \mu_n = \sqrt{2\log(n)/t}(\sqrt{q} - \sqrt{r} + o(1))$. Now, we can combine the results~\eqref{eq:vq-lb-1},~\eqref{eq:vq-lb-2} and~\eqref{eq:vq-lb-3} and obtain that:
\begin{align*}
v_{q} &\geq \frac{1}{4}\exp{t\mu_n\xi_n + \frac{t\sigma_n^2\xi_n^2}{2} -t\tau_n\xi_n - \sqrt{t}\xi_n\sigma_{\xi_n} + \bigO\left(t\xi_n^3\right)} \\
&= \frac{1}{4}\exp{-t(\tau_n-\mu_n)\xi_n + \frac{t\sigma_n^2\xi_n^2}{2} + \bigO\left(\sqrt{t}\xi_n + t\xi_n^3\right)} \\
&= \frac{1}{4}\exp{-\frac{t}{2}\frac{(\tau_n-\mu_n)^2}{\sigma_n^2} + \bigO\left(\sqrt{t}\xi_n + t\xi_n^3\right)} \\
&= \frac{1}{4}\exp{-\log(n)\left(\frac{\sqrt{q}-\sqrt{r} + o(1)}{\sigma_n}\right)^2 + \bigO\left(\sqrt{t}\xi_n + t\xi_n^3 \right)} \\
&= \frac{1}{4}\exp{-\log(n)\left(\left(\frac{\sqrt{q}-\sqrt{r}}{\sigma_n}\right)^2 + \bigO\left(\tfrac{1}{\sqrt{\log(n)}} + \sqrt{\tfrac{\log(n)}{t}}\right) + o(1) \right)} \\
&= \frac{1}{4}\exp{-\log(n)\left(\left(\frac{\sqrt{q}-\sqrt{r}}{\sigma_n}\right)^2 + o(1) \right)} \\
&= n^{-\frac{1}{\sigma_n^2}(\sqrt{q}-\sqrt{r})^2 + o(1)} \ ,
\end{align*}
where we have used that $t = \omega(\log(n))$. \qed

\section{Proof of results for particular distributional families}\label{sec:particularizations}

\subsection{Proof of Proposition~\ref{prp:rank-perf-exp}}\label{sec:rank-perf-exp}

The proof follows from particularizing Theorem~\ref{th:rank_hc_test} to the setting of the exponential family. This require the characterization of $\mu_{n,i}$ and $\sigma_{n,i}$ from Definition~\ref{def:rank_trans_char}, which in turn hinges on the characterization of the first two moments of $U_{ij}$. Specifically
\[
\E{U_{ij}} = \frac{1}{2} + \frac{\theta\sigma_0}{2\sqrt{3}\Upsilon_0} + \bigO(\theta^2) \ ,
\]
as $\theta\to 0$, as shown in in the proof of Proposition~2 of \cite{arias-castro2018a}. Furthermore
\[
\E{U_{ij}^2} = \frac{1}{3} + \bigO(\theta) \ ,
\]
as $\theta\to 0$. Both statements follow a Taylor expansion of the above expectations around $\theta=0$, where the first-order term corresponds naturally to the first two moments of the standard uniform distribution. Since $t = \omega(\log(n))$ and thus $\theta \to 0$ we have that
\[
\mu_{n,i} = \frac{\theta\sigma_0}{\Upsilon_0} + \bigO(\theta^2) = \sqrt{\frac{2(\tau^2\rho(\beta,1)/\Upsilon_0^2)(1+o(1)))\log(n)}{t}} \ ,
\]
and
\[
\sigma_{n,i}^2 = 1 + o(1) \ .
\]
Based on this, a direct application of Theorem~\ref{th:rank_hc_test} implies the result in the proposition.
\qed

\subsection{Proof of Proposition~\ref{prp:means-perf-conv}}\label{sec:prp:means-perf-conv_proof}

Denote the standardized sums by
\[
Z_i \equiv \frac{1}{\sqrt{t}}\sum_{j=1}^t X_{ij} \ .
\]
Then the hypothesis test~\eqref{hyp:conv} collapses to:
\begin{align*}
H_0:\qquad &\forall_{i\in[n]} \quad Z_{i} \overset{\text{i.i.d.}}{\sim} \cN(0,1) \ , \numberthis \label{hyp:conv-1dim}\\
H_1:\qquad &\exists_{\cS \subset [n]} : \forall_{i\in\cS} \quad Z_{i} = W_{i} + \theta\sqrt{t} \bar Q_{i} \ \text{ with } W_{i} \overset{\text{i.i.d.}}{\sim} \cN(0,\sigma^2) \text{ and } \bar Q_{i} \overset{\text{i.i.d.}}{\sim} \bar G \ ,
\\ &\text{ and } \forall_{i\notin\cS} \quad  Z_{i} \overset{\text{i.i.d.}}{\sim} \cN(0,1) \ .
\end{align*}
where $\bar G$ is the distribution of the average of $t$ i.i.d. random variables from $G$. For simplicity, denote by $F_1$ the distribution of $Z_i$ if $i\in\cS$ under the alternative. Let $\Phi$ denote the cumulative distribution function of the standard normal distribution and $\phi$ its density.

Our aim is to use the results from \cite{Cai2014} to prove our statements, by reasoning in an analogous way as in examples B and C in Section~5 of that paper. To do so, we start by characterizing the log-likelihood ratio:
\[
\ell(y) \equiv \log\left(\frac{ {\rm d}F_1(y)}{ \phi(y)}\right) \ .
\]
To start, letting $i\in\cS$, note that the distribution $F_1$ can be written as:
\begin{align*}
F_1(y) &= \P{\cN(\theta\sqrt{t}\bar Q_i, \sigma^2) \leq y} \\
&= \int_0^1\P{\cN(\theta\sqrt{t}q, \sigma^2) \leq y \given \bar Q_i = q} {\rm d} \overline G(q) \\
&= \E{\Phi\left(\frac{y-\theta\sqrt{t}\bar Q_i}{\sigma}\right)} \ .
\end{align*}
Since $\bar Q_i$ is bounded the above implies
\[
\frac{{\rm d}F_1(y)}{{\rm d} y} = \E{\frac{1}{\sigma}\phi\left(\frac{y-\theta\sqrt{t}\bar Q_i}{\sigma}\right)} \ ,
\]
such that the log-likelihood ratio can be written as:
\begin{align*}
\ell(y) &= \E{\frac{1}{\sigma}\exp{-\frac{1}{2}\left(\frac{y-\theta\sqrt{t}\bar Q_i}{\sigma}\right)^2 + \frac{y^2}{2}}} \ .
\end{align*}
Following Theorem 1 from \cite{Cai2014}, we characterize the log-likelihood at the point $u\sqrt{2\log(n)}$. For convenience, define:
\[
H(q,m) \equiv \exp{m\left[u^2-\left(\frac{u-s(G)^{-1}\sqrt{r}q}{\sigma}\right)^2\right]} \ .
\]
Using the parameterization of $\theta$ and function $H$ above, we can rewrite the log-likelihood as:
\begin{align*}
\ell(u\sqrt{2\log(n)}) &= \log\left(\E{\frac{1}{\sigma}\exp{-\left(\frac{u-s(G)^{-1}\sqrt{r}\bar Q_i}{\sigma}\right)^2\log(n) + u^2\log(n)}}\right) \\
&= \log\left(\E{n^{u^2-\left(\frac{u-s(G)^{-1}\sqrt{r}\bar Q_i}{\sigma}\right)^2}}\right) - \log(\sigma) \\
&= \log\bigg(\E{H(\bar Q_i, \log(n) )}\bigg) - \log(\sigma)\ . \numberthis\label{eq:prp3-int}
\end{align*}
To characterize the term above, we apply Lemma 3 from \cite{Cai2014}. Note that
\[
\frac{\log(H(q,m))}{m} = u^2-\left(\frac{u-s(G)^{-1}\sqrt{r}q}{\sigma}\right)^2 \equiv f(q) \ ,
\]
and since $\bar Q_i$ is bounded there exists an $m_0$ such that $\int_0^1 \exp{m_0f(q)} {\rm d}\bar G(q)) < \infty$, Lemma~3 from \cite{Cai2014} then implies that
\[
\lim_{m\to\infty} \frac{1}{m}\log\bigg(\E{H(\overline Q_i, m )}\bigg) = \text{ess sup}_{\bar Q_i}\left\{u^2-\left(\frac{u-s(G)^{-1}\sqrt{r}\bar Q_i}{\sigma}\right)^2 \right\} \ .
\]
Note that the function within the essential supremum above is increasing in the argument $\bar Q_i$ and $\text{ess sup}_{\bar Q_i}\{\bar Q_i\} = s(G)$. Now, noting that the final term in~\eqref{eq:prp3-int} is constant and letting $m = \log(n)$, we find:
\[
\ell(u\sqrt{2\log(n)}) = \log(n)(1+o(1))\left(u^2-\left(\frac{u-\sqrt{r}}{\sigma}\right)^2 \right) \ .
\]
Since we have now arrived at the same expression for the log-likelihood ratio as in (62) in \cite{Cai2014} (apart from the extra vanishing term $1+o(1)$), we can continue analogously to ultimately conclude that any test is asymptotically powerless when
\[
r < \rho(\beta,\sigma) \ .
\]
\qed

\subsection{Proof of Proposition~\ref{prp:rank-perf-conv}}\label{sec:rank-perf-conv}

Like in the proof of Proposition~\ref{prp:rank-perf-exp}, we must characterize the two quantities of interest for asymptotic power, $\mu_{n,i}$ and $\sigma_{n,i}$ from Definition~\ref{def:rank_trans_char}, in terms of the parameters $\theta$ and $\sigma$. This requires us to characterize (for $i\in\cS$) the quantities $\E{U_{ij}}$ and $\E{U_{ij}^2}$ in terms of $\theta$ and $\sigma$.

To start, letting $i\in\cS$ and $k\not\in\cS$, we have that:
\begin{align*}
\E{U_{ij}} = \P{X_{ij} > X_{kj}} &= \P{\cN(0,1) > -\frac{\theta Q_{ij}}{\sqrt{\sigma^2+1}}} \\
&= \E{\Phi\left( \frac{\theta Q_{ij}}{\sqrt{\sigma^2+1}} \right)} \\
&= \frac{1}{2} + \frac{\theta \mu(G)}{\sqrt{2\pi(\sigma^2+1)}} + \bigO(\theta^2)  \ ,
\end{align*}
where in the final equality we used a Taylor expansion around $\theta = 0$ and used the fact that $Q_{ij}$ has bounded support. Noting that $t = \omega(\log(n))$ and thus $\theta = o(1)$, we can use the result above to characterize $\mu_{n,i}$ as
\[
\mu_{n,i} = 2\sqrt{3}\left(\frac{\theta \mu(G)}{\sqrt{2\pi(\sigma^2+1)}} + \bigO(\theta^2)\right) = \sqrt{\left(\frac{6\mu(G)^2r}{s(G)^2\pi(\sigma^2+1)}\right)(1+o(1))\frac{2\log(n)}{t}} \ .
\]
We now turn our attention to $\E{U_{ij}^2}$. Let $i\in\cS$ and $k_1,k_2\not\in\cS$. Since $G$ has support bounded on $[0,1]$, we can lower bound as
\begin{align*}
\E{U_{ij}^2} &= \PP\bigg(X_{ij} > \max\{X_{k_1j},X_{k_2j}\}\bigg) \geq \PP\bigg(\cN(0,\sigma^2) > \max\{X_{k_1j},X_{k_2j}\} \bigg) \\
&= \E{\Phi(\cN(0,\sigma^2))^2} \ ,
\end{align*}
and upper bound as:
\begin{align*}
\E{U_{ij}^2} &= \PP\bigg(X_{ij} > \max\{X_{k_1j},X_{k_2j}\}\bigg) \leq \PP\bigg(\cN(\theta,\sigma^2) > \max\{X_{k_1j},X_{k_2j}\} \bigg) \\
&= \E{\Phi(\cN(\theta,\sigma^2))^2} \ ,
\end{align*}
Noting that the upper bound above converges to the value of the lower bound as $\theta \to 0$, the results above thus imply that 
\[
\sigma_{n,i}^2 \to 12\left(\E{\Phi(\cN(0,\sigma^2))^2} - \frac{1}{4}\right) = \xi_\sigma^2 \ .
\]
Therefore, asymptotic power is implied by Theorem~\ref{th:rank_hc_test} provided
\[
\frac{6\mu(G)^2r}{s(G)^2\pi(\sigma^2+1)} > \rho(\beta, \xi_\sigma) \ ,
\]
which implies the result.
\qed

%%%%%%%%%%%%%%%%%%%%%%%%%%%%%%%%%%%%%%%%%%%%%%
%% Multiple Appendixes:                     %%
%%%%%%%%%%%%%%%%%%%%%%%%%%%%%%%%%%%%%%%%%%%%%%

\appendix
\section{Auxiliary results and proofs}\label{app:math}

\subsection{Expressing the moments of \texorpdfstring{$U_{ij}$}{Uij}}\label{app:moments-u}
In this section relate the parameters in Definition~\ref{def:rank_trans_char} to simple probabilities of ``correctly ranking'' anomalous observations with higher ranks than nominal observations, leading to the following lemma:

\begin{lem}[Characterization of the rank signal moments]\label{lem:moments-u}
Let $k_1, k_2 \not\in\cS$. Then
\begin{equation}\label{eq:nuj}
\E{U_{ij}} = \P{X_{ij} > X_{k_1j}} + \tfrac{1}{2}\P{X_{ij} = X_{k_1j}}\ ,
\end{equation}
and
\begin{align*}
\E{U_{ij}^2} &= \P{X_{ij} > \max\{X_{k_1j}, X_{k_2j}\}}\\
&\qquad+ \P{X_{ij} = X_{k_2j} > X_{k_1j}} + \tfrac{1}{3}\P{X_{ij} = X_{k_1j} = X_{k_2j}} \numberthis\label{eq:lamj}\ .
\end{align*}
\end{lem}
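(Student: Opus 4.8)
The plan is to exploit the fact that for nominal indices $k_1, k_2 \notin \cS$ the observations $X_{k_1 j}, X_{k_2 j}$ are independent draws from the reference distribution $F_{0j}$, so that the tie-broken reference CDF $F_{0j}^{(\delta)}$ admits a probabilistic realization. Writing $X_{k_\ell j}^{(\delta)} = X_{k_\ell j} + \delta W_{k_\ell j}$ as in~\eqref{eq:def_tiebreak}, each $X_{k_\ell j}^{(\delta)}$ has exactly the distribution $F_{0j}^{(\delta)}$, and hence, conditioning on the independent value $X_{ij}^{(\delta)}$,
\[
F_{0j}^{(\delta)}(X_{ij}^{(\delta)}) = \P{X_{k_1 j}^{(\delta)} \leq X_{ij}^{(\delta)} \given X_{ij}^{(\delta)}} \ .
\]
Taking an outer expectation over $X_{ij}^{(\delta)}$ yields $\E{F_{0j}^{(\delta)}(X_{ij}^{(\delta)})} = \P{X_{k_1 j}^{(\delta)} \leq X_{ij}^{(\delta)}}$. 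For the second moment I would introduce a second, independent nominal copy $X_{k_2 j}^{(\delta)}$ and use conditional independence to rewrite the square of a conditional probability as a joint probability, obtaining $\E{(F_{0j}^{(\delta)}(X_{ij}^{(\delta)}))^2} = \P{\max\{X_{k_1 j}^{(\delta)}, X_{k_2 j}^{(\delta)}\} \leq X_{ij}^{(\delta)}}$.

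The second step is to pass to the limit $\delta \to 0$. Since $U_{ij} = \lim_{\delta\to 0} F_{0j}^{(\delta)}(X_{ij}^{(\delta)})$ and the integrands are bounded by $1$, dominated convergence lets me move the limit inside the expectation, so that $\E{U_{ij}} = \lim_{\delta\to 0}\P{X_{k_1 j}^{(\delta)} \leq X_{ij}^{(\delta)}}$ and likewise for $\E{U_{ij}^2}$. It then remains to compute the almost-sure limit of the relevant indicators. Writing the first event as $\{X_{k_1 j} - X_{ij} \leq \delta(W_{ij} - W_{k_1 j})\}$, its indicator converges to $1$ on $\{X_{ij} > X_{k_1 j}\}$, to $0$ on $\{X_{ij} < X_{k_1 j}\}$, and on the tie event $\{X_{ij} = X_{k_1 j}\}$ it converges to $\ind{W_{ij} \geq W_{k_1 j}}$, which has conditional probability $\tfrac12$ by symmetry of the i.i.d.\ continuous perturbations; this gives~\eqref{eq:nuj}.

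For the second moment I would partition according to the tie pattern among $X_{ij}, X_{k_1 j}, X_{k_2 j}$. On the strict event $\{X_{ij} > \max\{X_{k_1 j}, X_{k_2 j}\}\}$ both indicators tend to $1$; on a single tie, say $\{X_{ij} = X_{k_2 j} > X_{k_1 j}\}$, the limit is $\ind{W_{ij} \geq W_{k_2 j}}$ with conditional probability $\tfrac12$; and on the triple tie $\{X_{ij} = X_{k_1 j} = X_{k_2 j}\}$ the limit is $\ind{W_{ij} = \max\{W_{ij}, W_{k_1 j}, W_{k_2 j}\}}$, whose probability is $\tfrac13$ by exchangeability of the three perturbations. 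Summing these contributions, and using the exchangeability of $X_{k_1 j}$ and $X_{k_2 j}$ (both $\sim F_{0j}$) to merge the two symmetric single-tie terms into $\P{X_{ij} = X_{k_2 j} > X_{k_1 j}}$, produces~\eqref{eq:lamj}.

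The main obstacle is the careful bookkeeping of the tie events and the correct evaluation of the perturbation probabilities, in particular recognizing that a triple coincidence contributes the factor $\tfrac13$ rather than $\tfrac14$; this is exactly where the uniform random tie-breaking enters decisively, since the factor is the probability that $X_{ij}$'s perturbation is the largest among three exchangeable ones. A minor technical point to record is the dominated-convergence justification for interchanging $\lim_{\delta\to 0}$ with the expectation, which is immediate from the uniform bound $0 \leq F_{0j}^{(\delta)} \leq 1$.
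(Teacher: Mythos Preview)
Your proof is correct and follows essentially the same approach as the paper: represent $F_{0j}^{(\delta)}(X_{ij}^{(\delta)})$ as a conditional probability against an independent perturbed nominal copy, partition according to the sign of $X_{ij}-X_{0j}$, and evaluate the limiting perturbation probabilities $\tfrac12$ and $\tfrac13$ on the tie events. The only cosmetic difference is the order of operations---you swap $\lim_{\delta\to 0}$ and expectation first via dominated convergence and then partition, whereas the paper computes the pointwise limit of the conditional probability directly---and you spell out the second-moment case in full, which the paper leaves as ``an analogous argument.''
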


\begin{proof}
Consider the notation in Section~\ref{sec:ranking} and leading to the statement of the lemma, and let $X_{0j} \sim F_{0j}$ and $W_{0j}\sim \text{Unif}([-1,1])$ independently. We must characterize
\begin{align*}
\E{U_{ij}} &= \E{\lim_{\delta\to 0} F_{0j}^{(\delta)} (X_{ij}^{(\delta)})}\\
&= \E{\lim_{\delta\to 0} \P{X_{0j}+\delta W_{0j}\leq X_{ij}+\delta W_{ij} \given X_{ij}^{(\delta)}}}\\
&= \E{\lim_{\delta\to 0} \P{W_{0j}\leq W_{ij}+\frac{X_{ij}-X_{0j}}{\delta} \given X_{ij},W_{ij}}}\ .
\end{align*}
Let $\delta>0$ be fixed. Clearly
\begin{align*}
\lefteqn{\P{W_{0j}\leq W_{ij}+\frac{X_{ij}-X_{0j}}{\delta} \given X_{ij},W_{ij}}}\\
&=\P{W_{0j}\leq W_{ij}+\frac{X_{ij}-X_{0j}}{\delta} \given X_{ij}=X_{0j},W_{ij}}\P{X_{ij}=X_{0j}}\\
&\qquad +\P{W_{0j}\leq W_{ij}+\frac{X_{ij}-X_{0j}}{\delta} \given X_{ij}>X_{0j},W_{ij}}\P{X_{ij}>X_{0j}}\\
&\qquad +\P{W_{0j}\leq W_{ij}+\frac{X_{ij}-X_{0j}}{\delta} \given X_{ij}<X_{0j},W_{ij}}\P{X_{ij}<X_{0j}}\\
&=\P{W_{0j}\leq W_{ij}\given X_{ij}=X_{0j},W_{ij}}\P{X_{ij}=X_{0j}}\\
&\qquad +\P{W_{0j}\leq W_{ij}+\frac{X_{ij}-X_{0j}}{\delta} \given X_{ij}>X_{0j},W_{ij}}\P{X_{ij}>X_{0j}}\\
&\qquad +\P{W_{0j}\leq W_{ij}+\frac{X_{ij}-X_{0j}}{\delta} \given X_{ij}<X_{0j},W_{ij}}\P{X_{ij}<X_{0j}}\ .
\end{align*}
Therefore
\begin{multline*}
\lim_{\delta\downarrow 0} \P{W_{0j}\leq W_{ij}+\frac{X_{ij}-X_{0j}}{\delta} \given X_{ij},W_{ij}} \\=\P{W_{0j}\leq W_{ij}\given W_{ij}}\P{X_{ij}=X_{0j}} +\P{X_{ij}>X_{0j}}\ .
\end{multline*}
Therefore
\begin{align*}
\E{U_{ij}}&=\P{W_{0j}\leq W_{ij}}\P{X_{ij}=X_{0j}}+\P{X_{ij}>X_{0j}}\\
&=\frac{1}{2}\P{X_{ij}=X_{0j}}+\P{X_{ij}>X_{0j}}\ .
\end{align*}
The second statement in the lemma follows by an analogous argument.
\end{proof}

\subsection{Proof of Lemma~\ref{lem:algebra-system}}\label{sec:algebra-system}
For convenience, we label the the two equations in the system as:
\begin{enumerate}[label=(\roman*)]
\item $1-\beta - \frac{1}{\gamma^2}(\sqrt{q} - \sqrt{r})^2 > \frac{1-q}{2}$\ ,
\item $1-\beta - \frac{1}{\gamma^2}(\sqrt{q} - \sqrt{r})^2 > 0$ \ .
\end{enumerate}
A visualization of the upcoming relevant functions is provided in Figure~\ref{fig:overview-system}. We treat three cases separately to cover all cases in the lemma.

\begin{figure}[htbp]
\centering
\begin{subfigure}{0.95\textwidth}
\hspace{1cm}\includegraphics[width=0.95\textwidth]{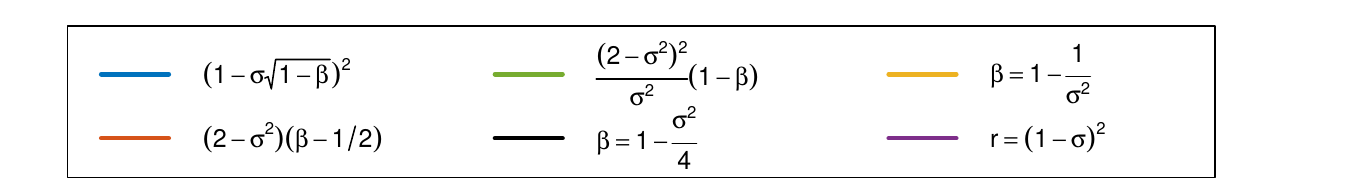}\vspace{0.5cm}
\end{subfigure}

\begin{subfigure}{0.48\textwidth}
\includegraphics[width=\textwidth]{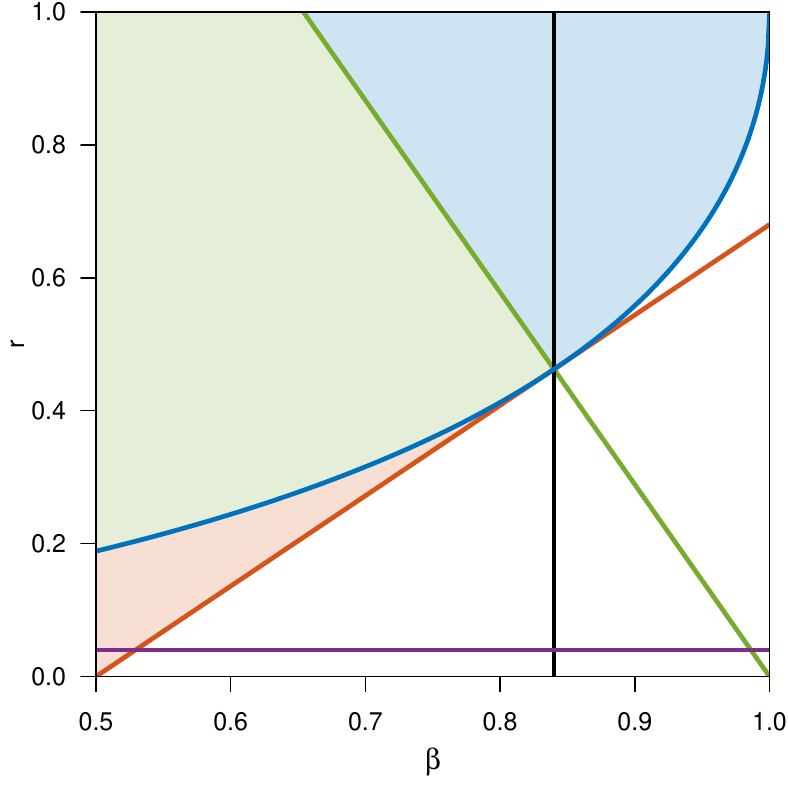}\caption{System when $\sigma=0.8$.}
\end{subfigure}
\begin{subfigure}{0.48\textwidth}
\includegraphics[width=\textwidth]{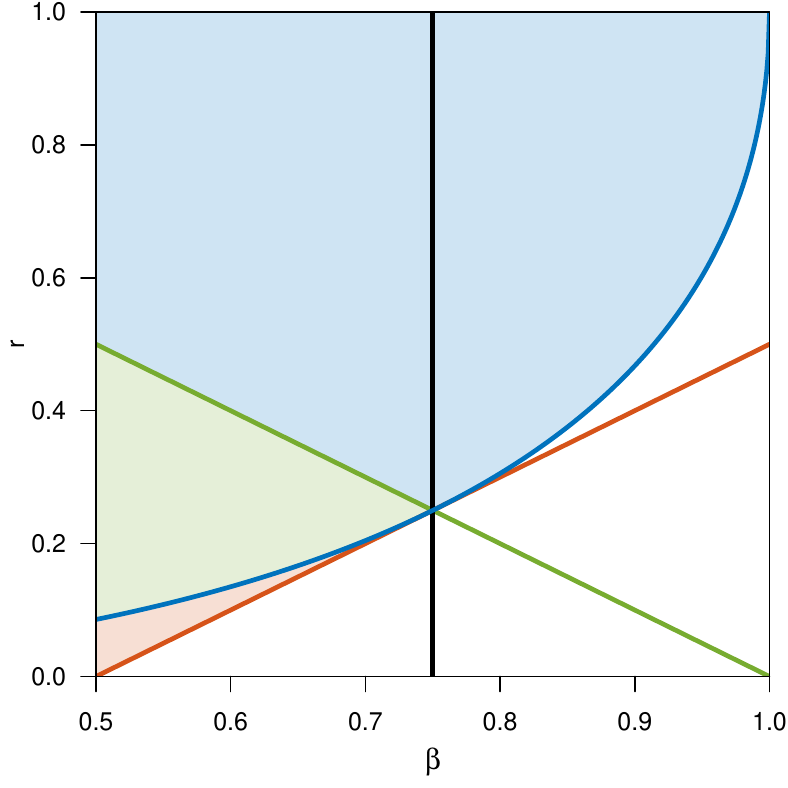}\caption{System when $\sigma=1$.}
\end{subfigure}
\begin{subfigure}{0.48\textwidth}
\includegraphics[width=\textwidth]{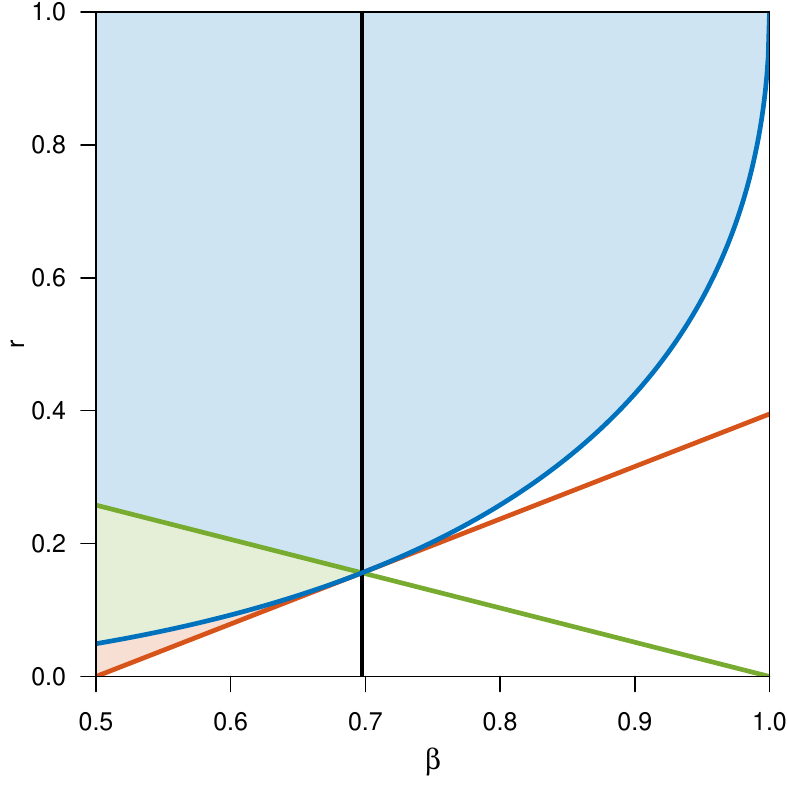}\caption{System when $\sigma=1.1$.}
\end{subfigure}
\begin{subfigure}{0.48\textwidth}
\includegraphics[width=\textwidth]{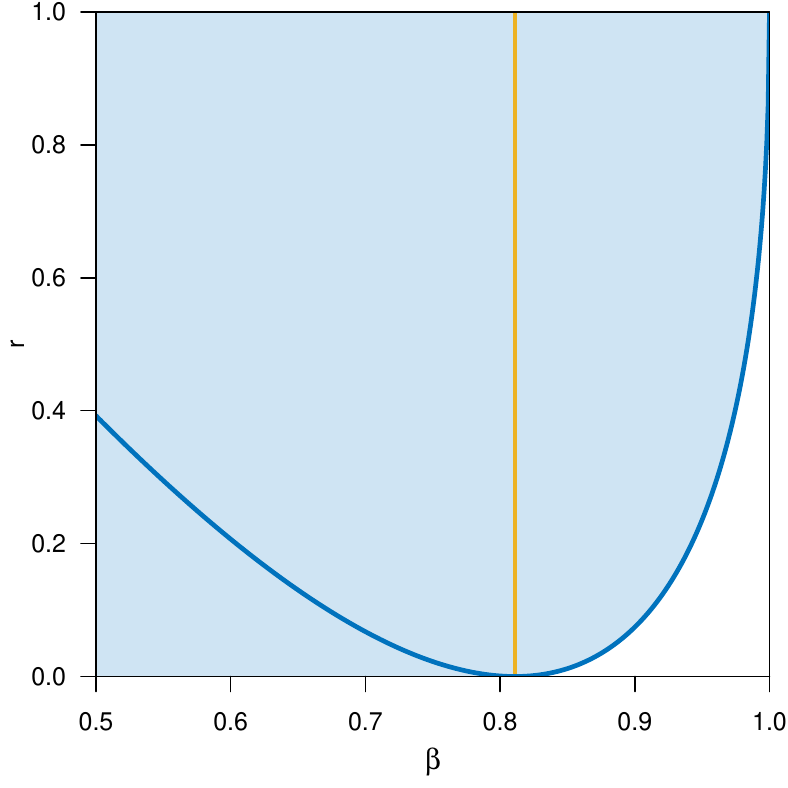}\caption{System when $\sigma=2.3$.}
\end{subfigure}
\caption{Visualization of the system of equations for different values of $\gamma$. Shaded areas correspond to $(\gamma,\beta,r)$ values for which the system is satisfied (relevant functions are derived in the proof of Lemma~\ref{lem:algebra-system}). In the blue shaded region, the system is satisfied provided $q=1$. In the red shaded region, the system is satisfied provided $q=\frac{4r}{(2-\gamma^2)^2}$, and for those $(\gamma,\beta,r)$ values, it holds that $q \leq 1$. In the green shaded region, both choices of $q$ suffice, but the latter choice results in $q \geq 1$.}\label{fig:overview-system}
\end{figure}

\myparagraph{Case $\gamma^2 < 2$ and $r < (1-\gamma\sqrt{1-\beta})^2$ and $\beta \in (\frac{1}{2}, 1-\frac{\gamma^2}{4})$} Plugging in 
\[
q = \min\{\frac{4r}{(\gamma^2-2)^2} , 1\} = \frac{4r}{(\gamma^2-2)^2}
\] 
results in constraints following equation (i)-(ii) as:
\begin{enumerate}[label=(\roman*)]
\item $r > (2-\gamma^2)(\beta - \frac{1}{2})$ \ ,
\item $r < \frac{(2-\gamma^2)^2}{\gamma^2}(1-\beta)$ \ .
\end{enumerate}
Constraint (i) is satisfied since $r > \rho_1^*(\beta,\gamma)$. Since $r < (1-\gamma\sqrt{1-\beta})^2 < \frac{(2-\gamma^2)^2}{\gamma^2}(1-\beta)$ for $\beta \in (\frac{1}{2}, 1-\frac{\gamma^2}{4})$, constraint (ii) is satisfied as well.

\myparagraph{Case $\gamma^2 < 2$ and $(1-\gamma\sqrt{1-\beta})^2 < r < 1$ and $\beta \in (\frac{1}{2}, 1)$} Plugging in 
\[
q = \min\{\frac{4r}{(\gamma^2-2)^2} , 1\} = 1
\] 
results in equations (i) and (ii) being equal, such that we obtain the single constraint: 
\begin{enumerate}[label=(\roman*)]
\item $1-\gamma\sqrt{1-\beta} < \sqrt{r}$ \ ,
\end{enumerate}
which is true by assumption under this case.

\filbreak
\myparagraph{Case $\gamma^2 \geq 2$ and $\beta \in (\frac{1}{2}, 1)$} Plugging in $q = 1$, equations (i) and (ii) are equal such that we again obtain the single constraint:
\begin{enumerate}[label=(\roman*)]
\item $1-\gamma\sqrt{1-\beta} < \sqrt{r}$ \ .
\end{enumerate}
If $\beta\in(1/2, 1-1/\gamma^2)$ then the left-hand side is negative, such that $r=0$ suffices. If $\beta > 1-1/\gamma^2$ then the constraint is satisfied since in that case we have $\sqrt{r} > \sqrt{\rho_2(\beta,\gamma)} = 1-\gamma\sqrt{1-\beta}$.

\qed

\subsection{Discussion regarding Remark~\ref{rem:variance-zero}}\label{app:variance-zero}
Let $F_{0j}$ be a mixture distribution between a continuous uniform distribution on $[0,1]$ with weight $p$ and a continuous uniform distribution on $[2,3]$ with weight $1-p$. Let $F_i$ for $i\in\cS$ be the continuous uniform distribution on $[1,2]$. Now, let $i\in\cS$ and $k,m\not\in\cS$. Then:
\[
\E{U_{ij}} = \P{X_{ij} > X_{kj}} = p \ ,
\]
and 
\[
\E{U_{ij}^2} = \P{X_{ij} \geq X_{kj}, X_i \geq X_{mj}} = \P{X_{ij} \geq X_{kj} \given X_{ij} \geq X_{mj}}\P{X_i \geq X_{mj}} = p \cdot p \ .
\]
Now, if $p = \frac{1}{2}+\frac{1}{2\sqrt{3}}\sqrt{\frac{2r\log(n)}{t}}$ for some constant $r$, then
\[
\mu_{n,i} = \sqrt{\frac{2r\log(n)}{t}} \text{ and } \sigma_{n,i} = 0 \ .
\]
Now, Theorem~\ref{th:rank_hc_test} implies the test has power converging to one if $r > \rho(\beta,0) = 2\beta - 1$.

\subsection{Discussion regarding Remark~\ref{rem:nonconverging_sigma}}\label{app:nonconverging-sigma}
We suppress the subscript $j$ for legibility. Let $F_0$ be a continuous uniform distribution on $[-1,1]$. Let $F_i$ for $i\in\cS$ be the continuous uniform distribution on $[-2-\sin(n),2+\sin(n)]$. By symmetry of the distributions, it is clear that $\mu_{n,i} = 0$. Now, let $i\in\cS$ and $k,m\not\in\cS$. Then:
\begin{align*}
\E{U_{ij}^2} &= \P{X_i > \max\{X_k, X_m\}} \\
&= \P{X_i > 1} + \P{X_i >\max\{X_k, X_m\} \given X_i\in[-1,1]}\P{X_i\in[-1,1]} \\
&= \frac{2+\sin(n)-1}{4+2\sin(n)} + \frac{1}{3}\cdot\frac{2}{4+2\sin(n)} \\
&= \frac{5+3\sin(n)}{12+6\sin(n)} \ .
\end{align*}
Clearly, $\E{U_{ij}^2}$ does not converge, since $\lim\inf \E{U_{ij}^2} = \frac{1}{3}$ and $\lim\sup \E{U_{ij}^2} = \frac{4}{9}$, and thus $\sigma_{n,i}$ does not converge either.

\subsection{Remark on covariance of \texorpdfstring{$N_q(\bR)$}{Nq(R)}}\label{app:cov-nq}
Note that the variance of $N_q(\bR)$ can be expanded as:
\begin{multline*}
\Varha{N_q(\bR)} = \sum_{i\in[n]} \Varha{\ind{Y_i(\bR) \geq z_q}} \\
+ \sum_{i\in[n]}\sum_{k\neq i} \Covha{\ind{Y_i(\bR) \geq z_q},\ind{Y_k(\bR) \geq z_q}} \ .
\end{multline*}
Under the null hypothesis, one can show the set of rank subject means $\{Y_i(\bR)\}_{i\in[n]}$ is negatively associated, which implies each covariance term in the sum above is negative.
Under the alternative, however, this negative association is not present in general. Nevertheless, at first glance it may still seem that it should at least be possible to show that:
\begin{equation}\label{eq:false-conjecture}
\Varha{N_q(\bR)} \leq d_n\sum_{i\in[n]} \Varha{\ind{Y_i(\bR) \geq z_q}} \ ,
\end{equation}
for some $d_n = n^{o(1)}$. However, one can construct a counterexample such that the inequality above is contradicted. Specifically, let $t = 2$ and $\cS = \{1,\dots,s\}$. For all of the alternative subjects $i \in \cS$, let $X_{i1} = i$. For all but the first of the alternative subjects $i \in \{2,\dots,s\}$, let $X_{i2} = s-i$. Finally, for the first anomalous subject, $X_{12}$ is either $s-1$ or $-1$, both with probability $1/2$. Finally, the null subjects $i  \geq s+1$ and $j = 1,2$, let the observations $X_{ij}$ be continuously uniformly distributed on $[-2,-1]$. A sketch of these densities is given in Figure~\ref{fig:sketch_densities}.

\begin{figure}[htb]
\centering
\begin{subfigure}{0.48\textwidth}
\includegraphics[width=\textwidth]{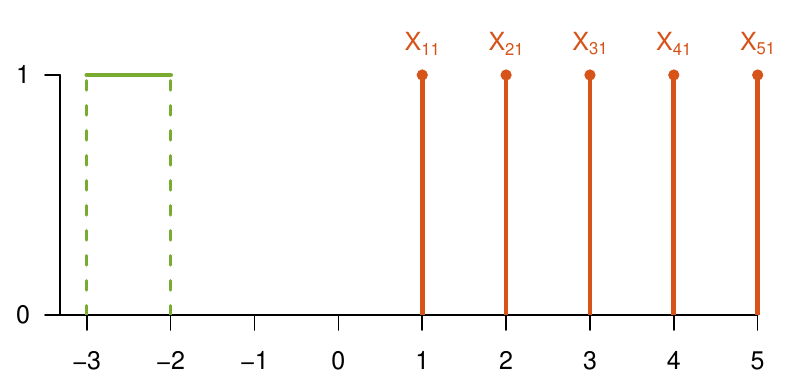}\caption{Density for $t=1$.}
\end{subfigure}
\hfill
\begin{subfigure}{0.48\textwidth}
\includegraphics[width=\textwidth]{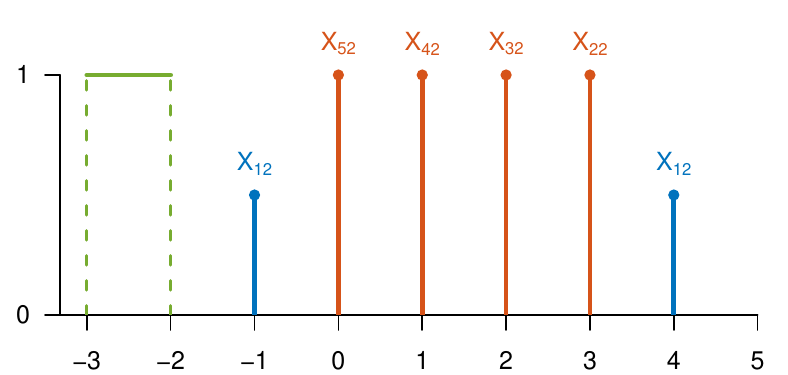}\caption{Density for $t=2$.}
\end{subfigure}
\caption{Sketch of the (degenerate) densities involved in the example discussed in Section~\ref{app:cov-nq}}\label{fig:sketch_densities}
\end{figure} 

Denote $\bX_{0j}$ the vector of observations $\{X_{ij}\}_{i\not\in\cS}$, and $\bR_{0j}$ their ranks, and $Y_0(\bR)$ their rank means. Note that due to the degeneracy of many of the anomalous observations, the resulting data $\bX$ takes two forms:
\[
\bX_1 = \begin{bmatrix}
1 & s-1\\ 
2 & s-2\\ 
3 & s-3 \\
\vdots & \vdots \\ 
s-1 & 1 \\
s &  0 \\
\bX_{01} & \bX_{02} \\
\end{bmatrix}  \ , \quad\text{ or }\quad
\bX_2 = \begin{bmatrix}
1 & -1\\ 
2 & s-2\\ 
3 & s-3 \\
\vdots & \vdots \\ 
s-1 & 1 \\
s &  0 \\
\bX_{01} & \bX_{02} \\
\end{bmatrix} \ ,
\]
each with probability $1/2$. Note that, since all null observations are smaller than $-1$ with probability 1, the set of the observation ranks takes two forms:
\[
\bR_1 = \begin{bmatrix}
n-s-1 & n\\ 
n-s & n-1\\ 
n-s+1 & n-2 \\
\vdots & \vdots \\ 
n-1 & n-s \\
n &  n-s-1 \\
\bR_{01} & \bR_{02} \\
\end{bmatrix} \ , \quad\text{ or }\quad
\bR_1 = \begin{bmatrix}
n-s-1 & n-s-1\\ 
n-s & n\\ 
n-s+1 & n-1 \\
\vdots & \vdots \\ 
n-1 & n-s+1 \\
n &  n-s \\
\bR_{01} & \bR_{02} \\
\end{bmatrix} \ ,
\]
each with probability $1/2$. The rank subject averages of these two sets of ranks correspond to:
\[
\overline \bR_1 = \frac{1}{2}\begin{bmatrix}
2n-s-1 \\
2n-s-1 \\
2n-s-1 \\
\vdots \\
2n-s-1 \\
2n-s-1 \\
Y_0(\bR)
\end{bmatrix}  \ , \quad\text{ or }\quad
\overline \bR_2 = \frac{1}{2}\begin{bmatrix}
2n-2s-2 \\
2n-s \\
2n-s \\
\vdots \\
2n-s \\
2n-s \\
Y_0(\bR)
\end{bmatrix} \ .
\]
It is critical to note that $Y_0(\bR) \leq n-s-2$, such that if $z_q = (2n-s)/2$ (corresponding to $q=3(n-s-1)^2/((n^2-1)\log(n))$) we have that if $i\not\in\cS$ or $i=1$, then $\ind{Y_i(\bR) \geq z_q} = 0$ with probability one. This means that the vector of indicators $\{ \ind{Y_i(\bR) \geq z_q} \}_{i\in[n]}$ is either all zero with probability $1/2$, or
\[
\{ \ind{Y_i(\bR) \geq z_q} \}_{i\in[n]} = \{0,\underbrace{1,\dots,1}_{s-1 \text{ entries}},0,\dots,0\} \ ,
\]
with probability $1/2$. Therefore
\[
\Var{\ind{Y_i(\bR) \geq z_q}} = \begin{cases} \frac{1}{4} \text{ if } i \in \cS \text{ and } i \geq 2 \ , \\
0 \text{ otherwise } \ , \end{cases}
\]
and
\[
\Cov{\ind{Y_i(\bR) \geq z_q}, \ind{Y_k(\bR) \geq z_q} } = \begin{cases} \frac{1}{4} \text{ if } i,k \in \cS \text{ and } i,k \geq 2 \ , \\
0 \text{ otherwise.}\end{cases} 
\]
This implies that
\[
\sum_{i\in[n]}\Var{\ind{Y_i(\bR) \geq z_q}} = \frac{s-1}{4} \ ,
\]
and subsequently
\begin{align*}
\sum_{i\in[n]}\sum_{k\neq i} \Cov{\ind{Y_i(\bR) \geq z_q},\ind{Y_k(\bR) \geq z_q}} &= \frac{(s-1)(s-2)}{4} \\
&= (s-2)\sum_{i\in[n]}\Var{\ind{Y_i(\bR) \geq z_q}} \ .
\end{align*}
The above shows that Equation~\eqref{eq:false-conjecture} cannot be true in general. 

\subsection{Derivation of \texorpdfstring{$\Upsilon_0$}{Y0} in Table~\ref{tb:sim-settings1}}\label{app:upsilon}
First recall that if $X,Y \sim F_0$ i.i.d. continuous with density $f_0$, then
\begin{align*}
\E{\max\left\{\frac{X_1-\mu_0}{\sigma_0}, \frac{X_2-\mu_0}{\sigma_0}\right\}} &= \frac{1}{\sigma_0}\bigg(\E{\max\{X,Y\}} - \mu_0\bigg) \\
&= \frac{1}{\sigma_0}\bigg(2\E{X\ind{X \geq Y}} -\mu_0 \bigg) \\
&= \frac{2}{\sigma_0} \int_{y=0}^\infty \int_{x=y}^\infty xf_0(x)f_0(y) \d x \d y - \frac{\mu_0}{\sigma_0} \ . \numberthis\label{eq:max-integral}
\end{align*}
\myparagraph{Uniform} Let $F_0$ be the continuous uniform distribution on $[0,1]$ and $f_0$ its corresponding density. We then directly obtain that
\[
\int_{y=0}^1 \int_{x=y}^1 xf_0(x)f_0(y) \d x \d y = \int_{y=0}^1 \int_{x=y}^1 x \d x \d y = \int_{y=0}^1 \frac{1}{2} - \frac{y^2}{2}\d y = \frac{1}{3} \ ,
\]
and using that $\sigma_0 = 1/\sqrt{12}$ and $\mu_0 = 1/2$, this implies
\[
\Upsilon_0 = \frac{1}{\sqrt{3}} \left(\frac{2}{3\sigma_0} - \frac{\mu_0}{\sigma_0} \right)^{-1} = 1 \ .
\]
\myparagraph{Exponential} Let $F_0$ be the exponential distribution with rate parameter $\lambda$ and $f_0$ its corresponding density. Using integration by parts we then have that
\begin{align*}
\int_{x=y}^\infty xf_0(x)\d x &=  \int_{x=y}^\infty x\lambda \exp{-\lambda x} \d x = \bigg[ -x\exp{-\lambda x} \bigg ]_{x=y}^\infty + \int_{x=y}^\infty \exp{-\lambda x} \d x \\
&= \bigg[ -\frac{1}{\lambda}\left(1 + x\lambda)\right)\exp{-\lambda x} \bigg ]_{x=y}^\infty = \frac{1}{\lambda}(1+y\lambda)\exp{-\lambda y} \ .
\end{align*}
Now, letting $Z$ an exponential random variable with rate parameter $2\lambda$, we can use the above to write the integral from~\eqref{eq:max-integral} as:
\begin{align*}
\int_{y=0}^\infty \int_{x=y}^\infty xf_0(x)f_0(y) \d x \d y &= \int_{y=0}^\infty (1+y\lambda)\exp{-2\lambda y} \d y \\
&= \bigg[\frac{-1}{2\lambda}\exp{-2\lambda y} \bigg]_{y=0}^\infty + \frac{1}{2}\E{Z} \\
&= \frac{1}{2\lambda} + \frac{1}{4\lambda} = \frac{3}{4\lambda} \ .
\end{align*}
Using $\mu_0 = \sigma_0 = \frac{1}{\lambda}$, we thus obtain that for the exponential model:
\[
\Upsilon_0 = \frac{1}{\sqrt{3}}\left(\frac{6}{4\lambda\sigma_0} - \frac{\mu_0}{\sigma_0} \right)^{-1} = \frac{2}{\sqrt{3}} \ .
\]

\myparagraph{Normal} The case of the standard normal distribution is identical to the derivations provided in the appendix of \cite{arias-castro2018a} (apart from the use of a different notation). We include it here for completeness. For $f_0$ corresponding to the standard normal distribution, we have that
\[
\int_{x=y}^\infty xf_0(x){\rm d}x = \frac{1}{\sqrt{2\pi}}\int_{x=y}^\infty x\exp{-\frac{x^2}{2}}{\rm d}x = \frac{1}{\sqrt{2\pi}}\left[-\exp{-\frac{x^2}{2}}\right]_y^\infty = f_0(y) \ .
\]
Therefore, the expression in~\eqref{eq:max-integral} with $\mu_0 = 0$ and $\sigma_0 = 1$ simplifies to
\[
2\int_{y=0}^\infty f_0(y)^2 {\rm d}y = \frac{1}{\pi}\int_{y=0}^\infty \exp{-y^2} {\rm d}y = \frac{1}{\sqrt{\pi}} \ ,
\]
such that $\Upsilon_0 = \sqrt{\pi/3}$.

\subsection{Derivations regarding normalization of Cauchy setting in Section~\ref{sec:sim-settings}}\label{app:sim-cauchy}
Let $X \sim F_1 = C(\theta,1)$ and $Y \sim F_0 = C(0,1)$. Then for $i\in\cS$:
\begin{align*}
\E{U_{ij}} &\equiv \P{X \geq Y_1} = \P{C(\theta,1) \geq C(0,1)} = \P{C(\theta,2) \geq 0} = \frac{1}{2} - \frac{1}{\pi}\arctan\left(\frac{-\theta}{2}\right) \ ,
\end{align*}
such that for $i\in\cS$:
\begin{align*}
\mu_{n,i} &= 2\sqrt{3}\left(\E{U_{ij}} - \tfrac{1}{2}\right) = -\frac{2\sqrt{3}}{\pi}\arctan\left(\tfrac{-\theta}{2}\right) \\
&=  \frac{2\sqrt{3}}{\pi}\arctan\left(\tau\pi\sqrt{\frac{\rho(\beta,1)\log(n)}{6t}}\right) \\
&= \tau\sqrt{\frac{2\rho(\beta,1)\log(n)}{t}} + \bigO\left(\sqrt{\frac{\log(n)}{t}}\right) \ ,
\end{align*}
where the final inequality follows by a Taylor series. If $t=\omega(\log(n)$ the above implies that
\[
\frac{\mu^2t}{2\log(n)} \to \tau^2\rho(\beta,1) \text{ and } \sigma^2 \to 1 \ ,
\]
such that Theorem~\ref{th:rank_hc_test} implies the rank test from Definition~\ref{def:rank_hc_stat} has power converging to one if $\tau > 1$. 

\section{Supplemental simulation results}\label{app:supp-sim}

\FloatBarrier
\subsection{Grid choice}\label{app:sim-grid}
This section contains supplemental results to the discussion in Section~\ref{sec:exp-grid}. All results in this section depict simulated power at $5\%$ significance for the rank test of Theorem~\ref{th:rank_hc_test} under different choices of $k_n$ as in~\eqref{eq:extended-grid}, as a function of the signal strength. The subcaptions refer to the settings of Table~\ref{tb:sim-settings1}, Table~\ref{tb:sim-settings2}, and the text of Section~\ref{sec:sim-settings} with relevant parameters further parameterized using $\tau$ as described in~\eqref{eq:theta-param-exp},~\eqref{eq:theta-param-conv}, and~\eqref{eq:theta-param-cauchy}. In the simulations below, when $n=10^3$ we have $\lvert\cS\rvert = \lceil n^{1-0.85}\rceil= 3$ and if $n=10^4$ then $\lvert\cS\rvert = \lceil n^{1-0.85}\rceil= 4$. Monte-Carlo simulation for the null statistics is based on $10^5$ samples. Each signal level was repeated $10^4$ times, leading to the $95\%$ confidence bars depicted.

\begin{figure}[htb]
\centering
\begin{subfigure}{0.65\textwidth}
    \includegraphics[width=\textwidth]{IMG/figure_grid_legend}\vspace{0.3cm}
\end{subfigure}

\centering
\begin{subfigure}{0.32\textwidth}
    \includegraphics[width=\textwidth]{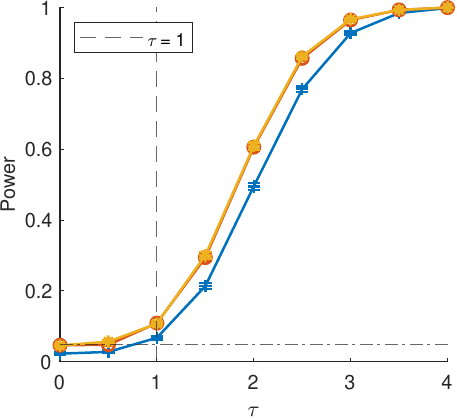}
    \caption{Uniform, $n=10^3$, $t=7$.}
\end{subfigure}
\begin{subfigure}{0.32\textwidth}
    \includegraphics[width=\textwidth]{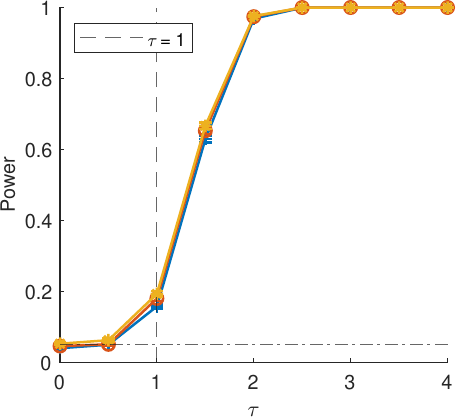}
    \caption{Uniform, $n=10^3$, $t=48$.}
\end{subfigure}
\begin{subfigure}{0.32\textwidth}
    \includegraphics[width=\textwidth]{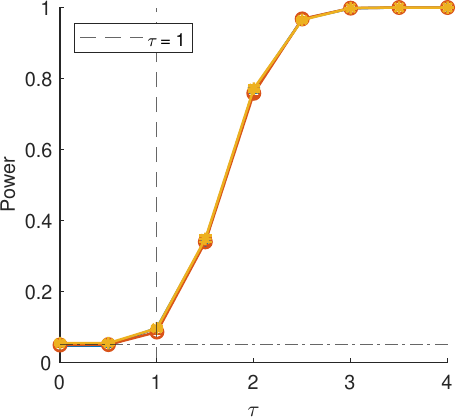}
    \caption{Uniform, $n=10^4$, $t=10$.}
\end{subfigure}
\caption{Supplemental simulation results to Section~\ref{sec:sim} for the uniform setting of Section~\ref{sec:sim-settings}. See Appendix~\ref{app:sim-grid} for details on this simulation study.}
\end{figure}

\begin{figure}[htb]
\centering
\begin{subfigure}{0.65\textwidth}
    \includegraphics[width=\textwidth]{IMG/figure_grid_legend}\vspace{0.3cm}
\end{subfigure}

\begin{subfigure}{0.32\textwidth}
    \includegraphics[width=\textwidth]{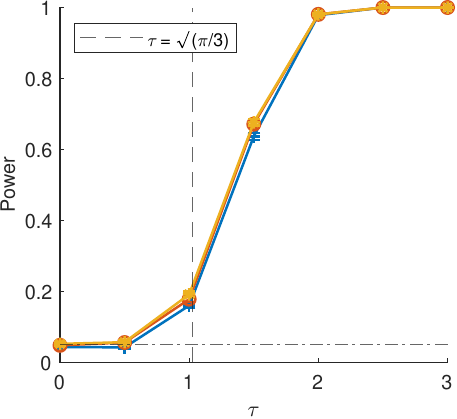}
    \caption{Normal, $n=10^3$, $t=48$.}
\end{subfigure}
\begin{subfigure}{0.32\textwidth}
    \includegraphics[width=\textwidth]{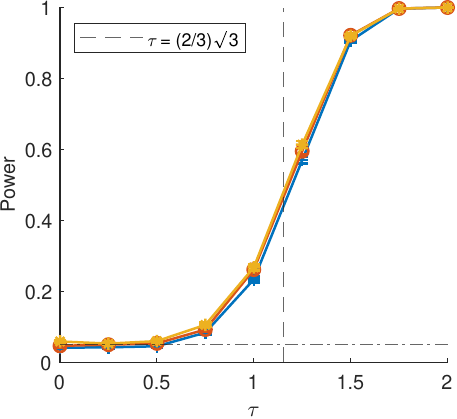}
    \caption{Exp, $n=10^3$, $t=48$.}
\end{subfigure}
\begin{subfigure}{0.32\textwidth}
    \includegraphics[width=\textwidth]{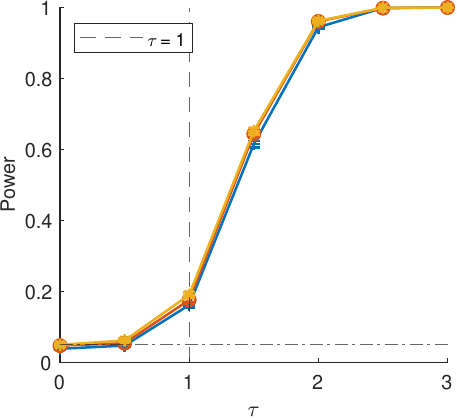}
    \caption{Cauchy, $n=10^3$, $t=48$.}
\end{subfigure}
\hfill
\begin{subfigure}{0.32\textwidth}
    \includegraphics[width=\textwidth]{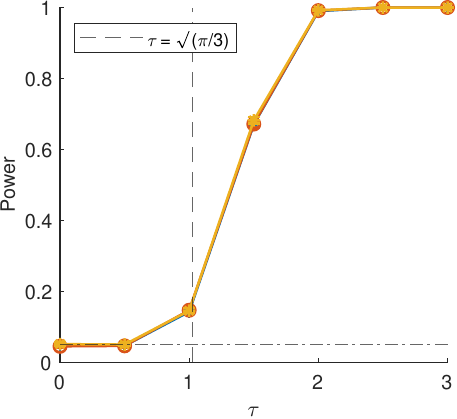}
    \caption{Normal, $n=10^4$, $t=10$.}
\end{subfigure}
\begin{subfigure}{0.32\textwidth}
    \includegraphics[width=\textwidth]{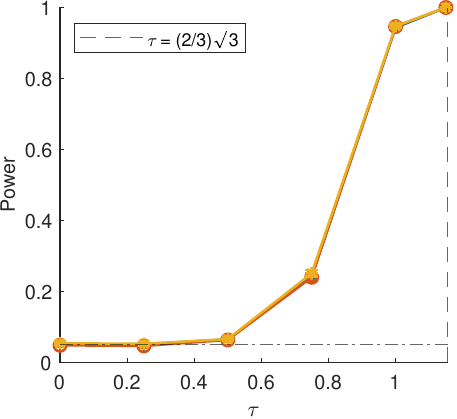}
    \caption{Exp, $n=10^4$, $t=10$.}
\end{subfigure}
\begin{subfigure}{0.32\textwidth}
    \includegraphics[width=\textwidth]{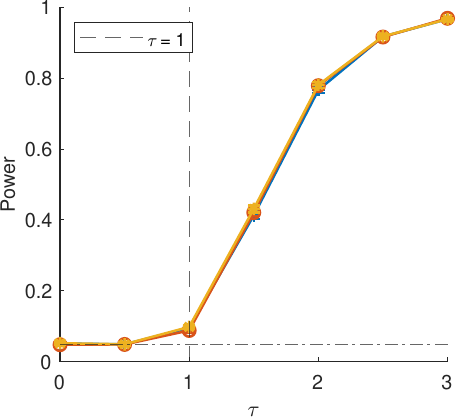}
    \caption{Cauchy, $n=10^4$, $t=10$.}
\end{subfigure}
\caption{Supplemental simulation results to Section~\ref{sec:exp-grid} for the settings described in Section~\ref{sec:sim-settings}. See Appendix~\ref{app:sim-grid} for details on this simulation study.}
\end{figure}

\begin{figure}[htb]
\centering
\begin{subfigure}{0.65\textwidth}
    \includegraphics[width=\textwidth]{IMG/figure_grid_legend}\vspace{0.3cm}
\end{subfigure}

\begin{subfigure}{0.32\textwidth}
    \includegraphics[width=\textwidth]{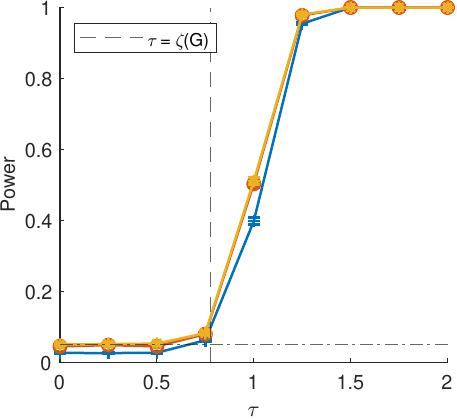}
    \caption{Normal$(\tfrac{1}{2})$, $n=10^3$, $t=7$.}
\end{subfigure}
\begin{subfigure}{0.32\textwidth}
    \includegraphics[width=\textwidth]{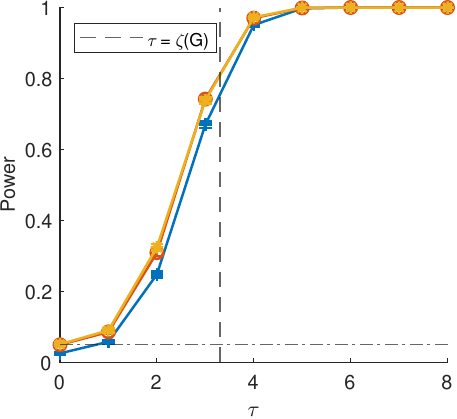}
    \caption{Normal$(\tfrac{3}{2})$, $n=10^3$, $t=7$.}
\end{subfigure}
\begin{subfigure}{0.32\textwidth}
    \includegraphics[width=\textwidth]{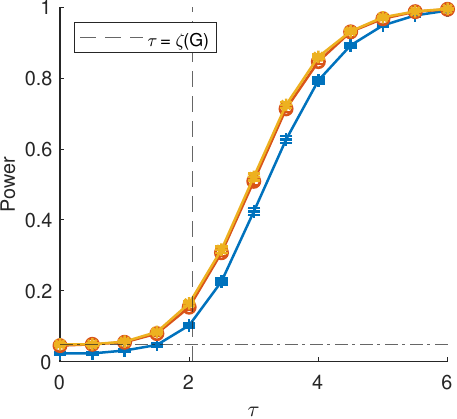}
    \caption{Triang.(1), $n=10^3$, $t=7$.}
\end{subfigure}
\hfill
\begin{subfigure}{0.32\textwidth}
    \includegraphics[width=\textwidth]{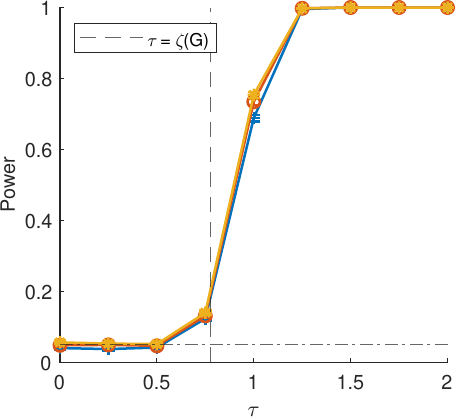}
    \caption{Normal$(\tfrac{1}{2})$, $n=10^3$, $t=48$.}
\end{subfigure}
\begin{subfigure}{0.32\textwidth}
    \includegraphics[width=\textwidth]{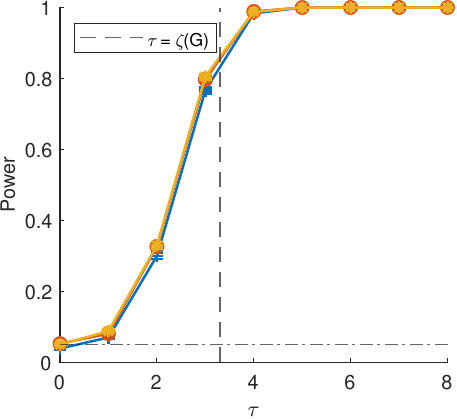}
    \caption{Normal$(\tfrac{3}{2})$, $n=10^3$, $t=48$.}
\end{subfigure}
\begin{subfigure}{0.32\textwidth}
    \includegraphics[width=\textwidth]{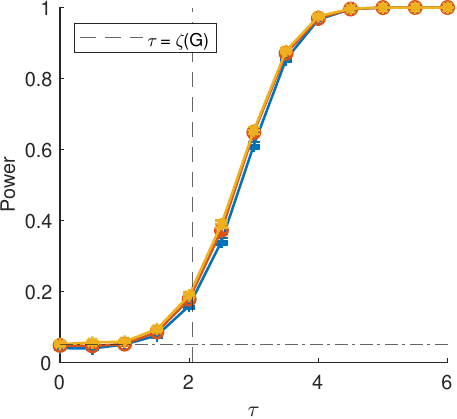}
    \caption{Triang.(1), $n=10^3$, $t=48$.}
\end{subfigure}
\hfill
\begin{subfigure}{0.32\textwidth}
    \includegraphics[width=\textwidth]{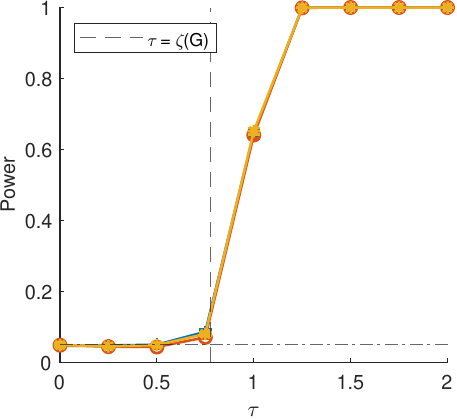}
    \caption{Normal$(\tfrac{1}{2})$, $n=10^4$, $t=10$.}
\end{subfigure}
\begin{subfigure}{0.32\textwidth}
    \includegraphics[width=\textwidth]{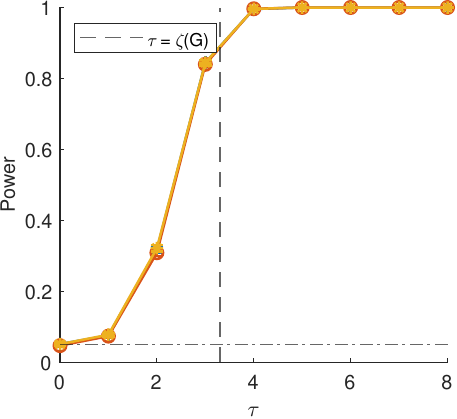}
    \caption{Normal$(\tfrac{3}{2})$, $n=10^4$, $t=10$.}
\end{subfigure}
\begin{subfigure}{0.32\textwidth}
    \includegraphics[width=\textwidth]{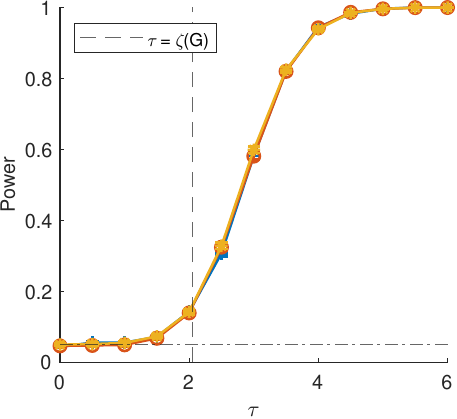}
    \caption{Triang.(1), $n=10^4$, $t=10$.}
\end{subfigure}
\caption{Supplemental simulation results to Section~\ref{sec:exp-grid} for the settings described in Section~\ref{sec:sim-settings}. See Appendix~\ref{app:sim-grid} for details on this simulation study.}
\end{figure}

\FloatBarrier
\subsection{Performance for varying signal strengths}\label{app:sim-perf}
This section contains supplemental results to the discussion in Section~\ref{sec:sim}. All results in this section depict simulated power at $5\%$ significance for the rank test of Theorem~\ref{th:rank_hc_test} and the two tests described in Section~\ref{sec:sim}, as a function of the signal strength. The subcaptions refer to the settings of Table~\ref{tb:sim-settings1}, Table~\ref{tb:sim-settings2}, and the text of Section~\ref{sec:sim-settings} with relevant parameters further parameterized using $\tau$ as described in~\eqref{eq:theta-param-exp},~\eqref{eq:theta-param-conv}, and~\eqref{eq:theta-param-cauchy}. In the simulations below, when $n=10^3$ we have $\lvert\cS\rvert = \lceil n^{1-0.85}\rceil= 3$ and if $n=10^4$ then $\lvert\cS\rvert = \lceil n^{1-0.85}\rceil= 4$. Monte-Carlo simulation for the null statistics is based on $10^5$ samples. Each signal level was repeated $10^3$ times, leading to the $95\%$ confidence bars depicted.

\begin{figure}[htb]
\centering
\begin{subfigure}{0.6\textwidth}
    \includegraphics[width=\textwidth]{IMG/figure_pwr_legend}\vspace{0.3cm}
\end{subfigure}

\begin{subfigure}{0.32\textwidth}
    \includegraphics[width=\textwidth]{IMG/fig_power_cauchy_n=1000_t=7}
    \caption{Cauchy, $n=10^3$, $t=7$.}
\end{subfigure}
\begin{subfigure}{0.32\textwidth}
    \includegraphics[width=\textwidth]{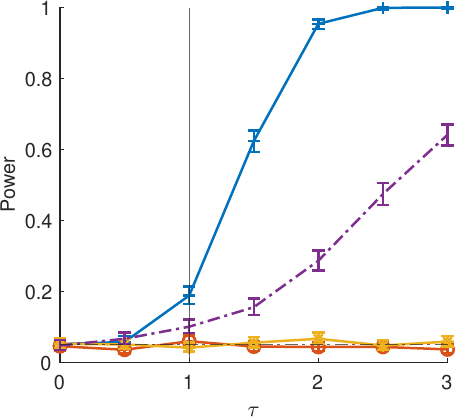}
    \caption{Cauchy, $n=10^3$, $t=48$.}
\end{subfigure}
\begin{subfigure}{0.32\textwidth}
    \includegraphics[width=\textwidth]{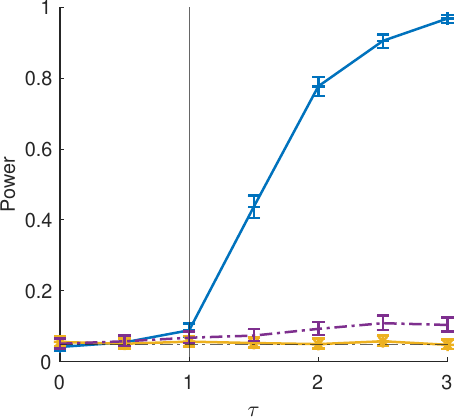}
    \caption{Cauchy, $n=10^4$, $t=10$.}
\end{subfigure}
\caption{Supplemental simulation results to Section~\ref{sec:sim} for the Cauchy setting of Section~\ref{sec:sim-settings}. See Appendix~\ref{app:sim-perf} for details on this simulation study.}
\end{figure}

\begin{figure}[htb]
\centering
\begin{subfigure}{0.6\textwidth}
    \includegraphics[width=\textwidth]{IMG/figure_pwr_legend}\vspace{0.3cm}
\end{subfigure}

\begin{subfigure}{0.32\textwidth}
    \includegraphics[width=\textwidth]{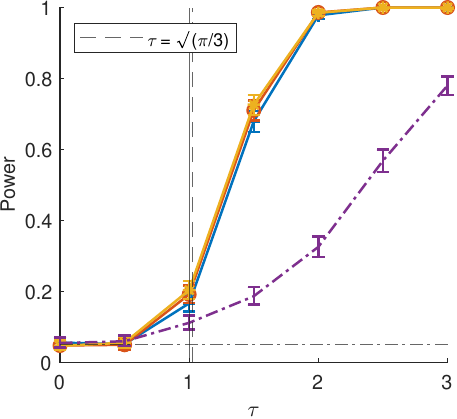}
    \caption{Normal, $n=10^3$, $t=48$.}
\end{subfigure}
\begin{subfigure}{0.32\textwidth}
    \includegraphics[width=\textwidth]{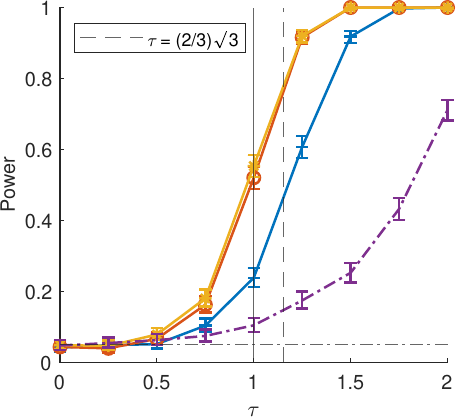}
    \caption{Exp, $n=10^3$, $t=48$.}
\end{subfigure}
\begin{subfigure}{0.32\textwidth}
    \includegraphics[width=\textwidth]{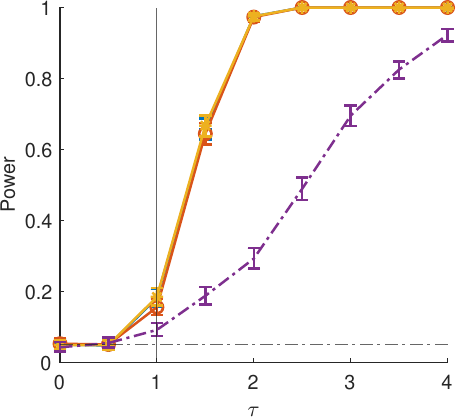}
    \caption{Uniform, $n=10^3$, $t=48$.}
\end{subfigure}
\hfill
\begin{subfigure}{0.32\textwidth}
    \includegraphics[width=\textwidth]{IMG/fig_power_norm_n=10000_t=10}
    \caption{Normal, $n=10^4$, $t=10$.}
\end{subfigure}
\begin{subfigure}{0.32\textwidth}
    \includegraphics[width=\textwidth]{IMG/fig_power_exp_n=10000_t=10}
    \caption{Exp, $n=10^4$, $t=10$.}
\end{subfigure}
\begin{subfigure}{0.32\textwidth}
    \includegraphics[width=\textwidth]{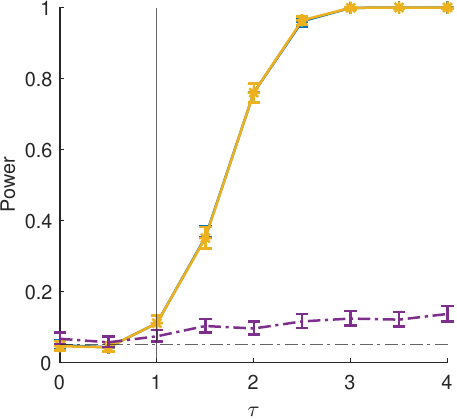}
    \caption{Uniform, $n=10^4$, $t=10$.}
\end{subfigure}
\hfill
\begin{subfigure}{0.32\textwidth}
    \includegraphics[width=\textwidth]{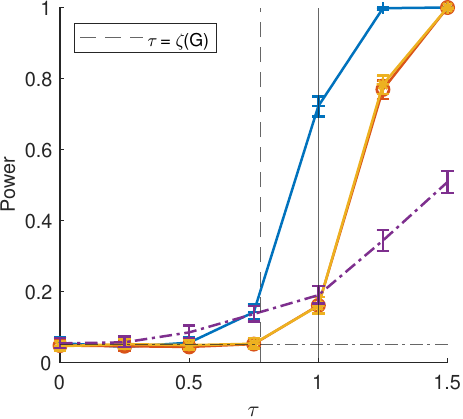}
    \caption{Normal$(\tfrac{1}{2})$, $n=10^3$, $t=48$.}
\end{subfigure}
\begin{subfigure}{0.32\textwidth}
    \includegraphics[width=\textwidth]{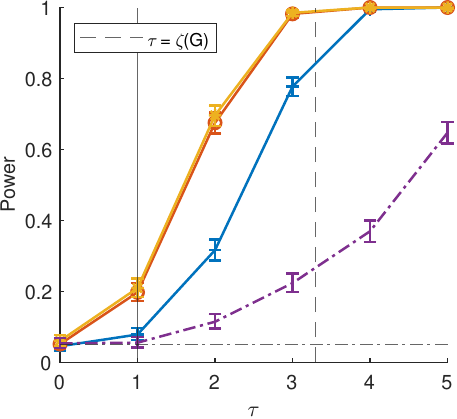}
    \caption{Normal$(\tfrac{3}{2})$, $n=10^3$, $t=48$.}
\end{subfigure}
\begin{subfigure}{0.32\textwidth}
    \includegraphics[width=\textwidth]{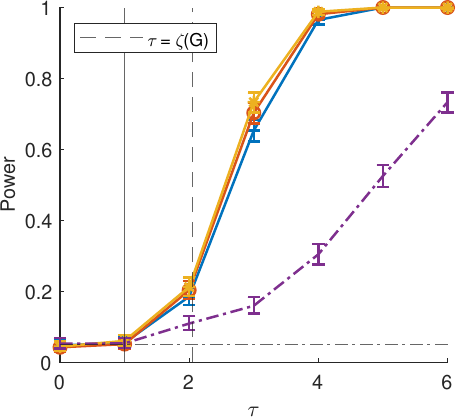}
    \caption{Triang.(1), $n=10^3$, $t=48$.}
\end{subfigure}
\hfill
\begin{subfigure}{0.32\textwidth}
    \includegraphics[width=\textwidth]{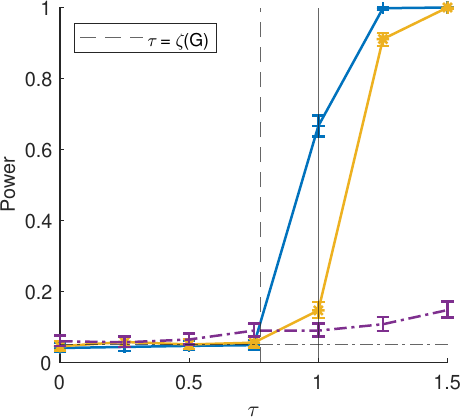}
    \caption{Normal$(\tfrac{1}{2})$, $n=10^4$, $t=10$.}
\end{subfigure}
\begin{subfigure}{0.32\textwidth}
    \includegraphics[width=\textwidth]{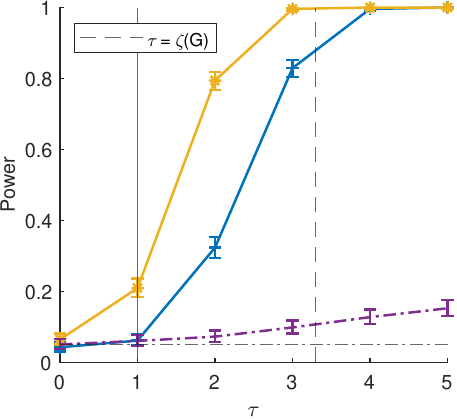}
    \caption{Normal$(\tfrac{3}{2})$, $n=10^4$, $t=10$.}
\end{subfigure}
\begin{subfigure}{0.32\textwidth}
    \includegraphics[width=\textwidth]{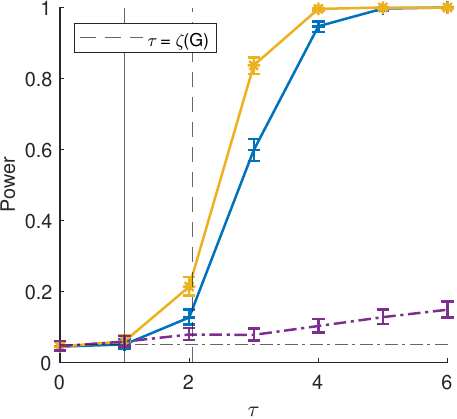}
    \caption{Triang.(1), $n=10^4$, $t=10$.}
\end{subfigure}
\caption{Supplemental simulation results to Section~\ref{sec:sim} for the settings described in Section~\ref{sec:sim-settings}. See Appendix~\ref{app:sim-perf} for details on this simulation study.}
\end{figure}

\FloatBarrier
\clearpage
\subsection{Performance for varying values of \texorpdfstring{$t$}{t}}\label{app:streamlengths}
This section contains supplemental results to the discussion in Section~\ref{sec:discussion}. All results in this section depict simulated power at $5\%$ significance for the rank test of Theorem~\ref{th:rank_hc_test} as a function of the signal strength, for various values of $t$. The normal location setting of Table~\ref{tb:sim-settings1} is considered with signal magnitude parameters further parameterized using $\tau$ as described in~\eqref{eq:theta-param-exp}. Note that for this setting, the value of $t$ is irrelevant for the means-based oracle test and as such only the case of $t=1$ is reported. In the simulations below $n=100$ and $\lvert\cS\rvert = 6$. Monte-Carlo simulation for the null statistics is based on $10^5$ samples. Each signal level was repeated $10^4$ times, leading to the $95\%$ confidence bars depicted.

\begin{figure}[htb]
\centering
\includegraphics[width=0.7\textwidth]{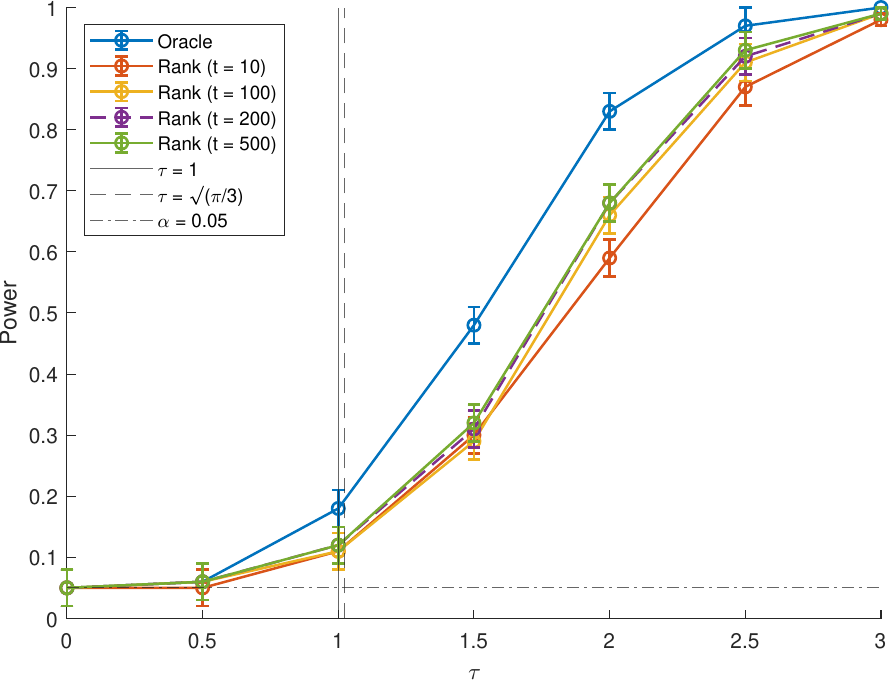}
\caption{Supplemental simulation results to Section~\ref{sec:discussion}. See Appendix~\ref{app:streamlengths} for details on this simulation study.}\label{fig:streamlengths}
\end{figure}

\FloatBarrier

\newpage
\section{Supplemental results for Section~\ref{sec:application}}\label{app:supp-app}
Supplemental results to Section~\ref{sec:application} are provided in Figure~\ref{fig:supp-application}.

\begin{figure}[htb]
\centering
\begin{subfigure}{0.3\textwidth}
    \includegraphics[width=\textwidth]{IMG/fig_application_histogram_j=1k=1}
    \caption{Subfamily 1, intermediate}
\end{subfigure}
\begin{subfigure}{0.3\textwidth}
    \includegraphics[width=\textwidth]{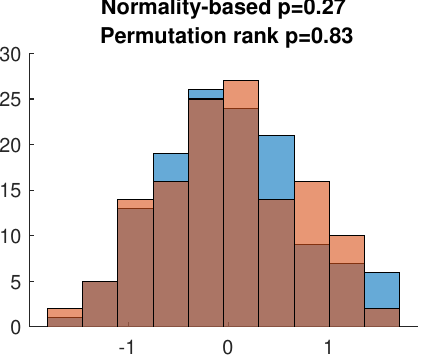}
    \caption{Subfamily 13, intermediate}
\end{subfigure}
\begin{subfigure}{0.3\textwidth}
    \includegraphics[width=\textwidth]{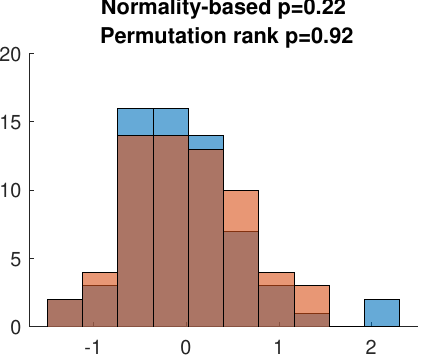}
    \caption{Subfamily 15, intermediate}
\end{subfigure}
\hfill
\begin{subfigure}{0.3\textwidth}
    \includegraphics[width=\textwidth]{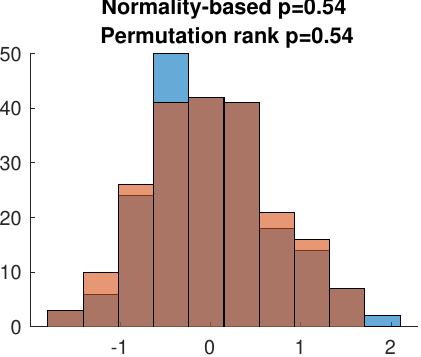}
    \caption{Subfamily 17, intermediate}
\end{subfigure}
\begin{subfigure}{0.3\textwidth}
    \includegraphics[width=\textwidth]{IMG/fig_application_histogram_j=5k=1}
    \caption{Subfamily 21, intermediate}
\end{subfigure}
\begin{subfigure}{0.3\textwidth}
    \includegraphics[width=\textwidth]{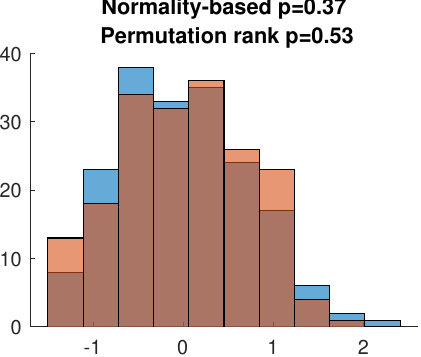}
    \caption{Subfamily 23, intermediate}
\end{subfigure}
\hfill
\begin{subfigure}{0.3\textwidth}
    \includegraphics[width=\textwidth]{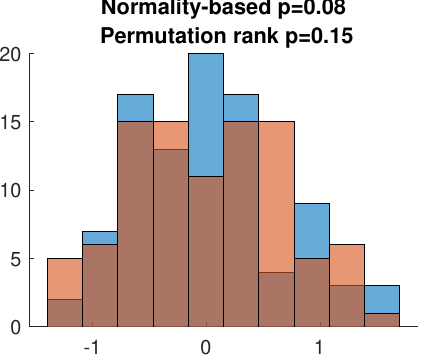}
    \caption{Subfamily 1, final product}
\end{subfigure}
\begin{subfigure}{0.3\textwidth}
    \includegraphics[width=\textwidth]{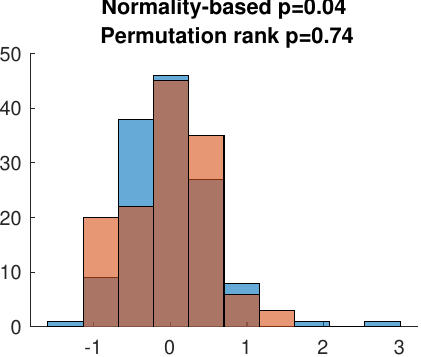}
    \caption{Subfamily 13, final product}
\end{subfigure}
\begin{subfigure}{0.3\textwidth}
    \includegraphics[width=\textwidth]{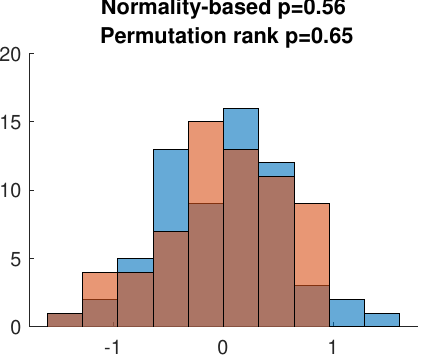}
    \caption{Subfamily 15, final product}
\end{subfigure}
\hfill
\begin{subfigure}{0.3\textwidth}
    \includegraphics[width=\textwidth]{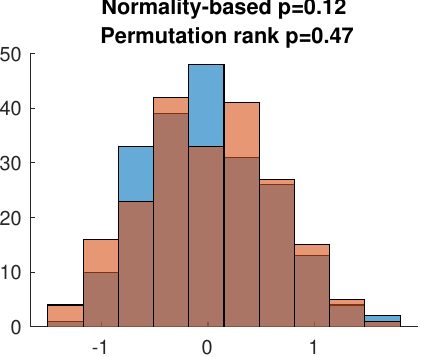}
    \caption{Subfamily 17, final product}
\end{subfigure}
\begin{subfigure}{0.3\textwidth}
    \includegraphics[width=\textwidth]{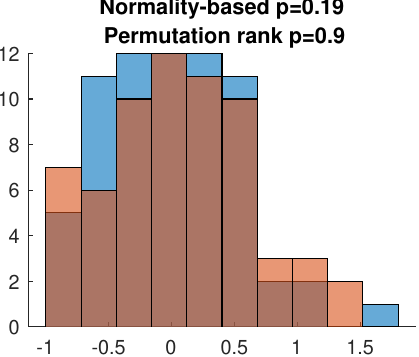}
    \caption{Subfamily 21, final product}
\end{subfigure}
\begin{subfigure}{0.3\textwidth}
    \includegraphics[width=\textwidth]{IMG/fig_application_histogram_j=6k=2}
    \caption{Subfamily 23, final product}
\end{subfigure}
\caption{Supplemental results to Figure~\ref{fig:application-histogram} of Section~\ref{sec:application}.}\label{fig:supp-application}
\end{figure}

\FloatBarrier

\newpage

%%%%%%%%%%%%%%%%%%%%%%%%%%%%%%%%%%%%%%%%%%%%%%
%% Support information, if any,             %%
%% should be provided in the                %%
%% Acknowledgements section.                %%
%%%%%%%%%%%%%%%%%%%%%%%%%%%%%%%%%%%%%%%%%%%%%%
\phantomsection\addcontentsline{toc}{section}{Acknowledgements.}
\textbf{Acknowledgements. }
I.V.S. is grateful for a discussion with Pieta C. IJzerman-Boon on the topic of pharmaceutical production processes, which helped shape the usage of the proposed methodology in the case study of Section \ref{sec:application}.

\bibliographystyle{imsart-nameyear} % Style BST file (imsart-number.bst or imsart-nameyear.bst)
\bibliography{rank-hc}       % Bibliography file (usually '*.bib')

\end{document}